\newtheorem{theorem}{Theorem}[section]
\newtheorem{lemma}{Lemma}[section]
\newtheorem{definition}{Definition}[section]
\newtheorem{observation}{Observation}[section]
\newtheorem{assumption}{Assumption}[section]
\newtheorem{corollary}{Corollary}[section]
\newtheorem{fact}{Fact}[section]
\numberwithin{equation}{section} 
\numberwithin{figure}{section} 
\numberwithin{table}{section} 
\newcommand\numberthis{\addtocounter{equation}{1}\tag{\theequation}}
\def\bdi{\mathbf{d}_i}
\def\bd{\mathbf{d}}
\def\bk{\mathbf{k}}
\def\bx{\mathbf{x}}
\def\by{\mathbf{y}}
\def\bxi{\mathbf{x}_i}
\def\bn{\mathbf{n}}
\def\bp{\mathbf{p}}
\def\bq{\mathbf{q}}
\def\bb{\mathbf{b}}
\def\bv{\mathbf{v}}
\def\bi{b_i}
\def\bmi{b_{-i}}
\def\pj{p_j}
\def\qj{q_j}
\def\rj{r_j}
\def\pjs{p_j^*}
\def\ei{e_i}
\def\ui{u_i}
\def\vi{v_i}
\def\vmi{v_{-i}}
\def\exi{\mathbf{x}_i}
\def\del{\delta}
\def\sw{\text{\rm SW}}
\def\ne{\text{\rm NE}}
\def\opt{\text{\rm OPT}}
\def\db{\,\db}
\def\dv{\,\dv}
\def\eps{\epsilon}
\def\PoA{\text{PoA}}
\def\GPoA{\text{GPoA}}
\newcommand{\hide}[1]{}
\def\xj{x_{\{j\}}}
\def\zj{z_{\{j\}}}
\title{Large Market Games with Near Optimal Efficiency}
\author{Richard Cole and Yixin Tao\\Courant Institute, NYU}
\date{\normalsize\today} 
\begin{document}
\maketitle

\maketitle

\begin{abstract}
As is well known, many classes of markets have efficient equilibria,
but this depends on agents being non-strategic, i.e.\ that they declare
their true demands when offered goods at particular prices, or in other
words, that they are price-takers.
An important question is how much the equilibria degrade in the face of
strategic behavior, i.e.\ what is the Price of Anarchy (PoA) of the market viewed
as a mechanism?

\hide{
Recent work by Babaioff et al.~\cite{BabaioffLNP14} showed that for Walrasian Mechanisms,
for suitable classes of markets, the PoA is bounded by small constant factors,
and, by means of examples, they showed
these bounds were unavoidably larger than 1 (e.g.\ $\tfrac 43$ and 2).
}

Often, PoA bounds are modest constants such as $\tfrac 43$ or 2.
Nonetheless, in practice a guarantee that no more than 25\% or 50\% of the economic value
is lost may be unappealing.
This paper asks whether significantly better bounds are possible under plausible
assumptions.
In particular, we look at how these worst case guarantees improve in the following large settings.
\begin{itemize}
\item
Large Walrasian auctions:
These are auctions with many copies of each item and many agents.
We show that the PoA tends to 1 as the market size increases,
under suitable conditions, mainly
that there is some uncertainty about the numbers of copies of each good
and demands obey the gross substitutes condition.
We also note that some such assumption is unavoidable.
\item
Large Fisher markets:
Fisher markets are a class of economies that has received considerable attention in
the computer science literature. A large market is one in which at equilibrium,
each buyer makes only a small fraction of the total purchases;
the smaller the fraction, the larger the market.
Here the main condition is that demands are based
on homogeneous monotone utility functions that satisfy the gross substitutes condition.
Again, the PoA tends to 1 as the market size increases.
\end{itemize}
Furthermore, in each setting, we quantify the tradeoff between market size and the PoA.
\end{abstract} 

\section{Introduction}
\label{sec:intro}

When is there no gain to participants in a game from strategizing?
One answer applies
when players in a game have no prior knowledge;
then a game that is strategy proof ensures 
that truthful actions are a best choice for each player.
However, in many settings there is no strategy proof mechanism.
Also, even if there is a strategy proof mechanism,
with knowledge in hand, other equilibria are possible,
for example, the ``bullying'' Nash Equilibrium as illustrated by the
following example:
there is one item for sale using a second price auction,
the low-value bidder bids an amount at least equal to the value of
the high-value bidder, who bids zero; the resulting equilibrium achieves
arbitrarily small social welfare compared to the optimal outcome.

To make the notion of gain meaningful one needs to specify what the game or mechanism is
seeking to optimize. Social welfare and revenue are common targets.
For the above example, the social welfare achieved in the bullying equilibria
can be arbitrarily far from the optimum.
However, for many classes of games,
over the past fifteen years,
bounds on the gains from strategizing, a.k.a.\ the Price of Anarchy (PoA),
have been obtained, with much progress coming thanks to
the invention of the smoothness methodology
\cite{Roughgarden2015,Roughgarden2012,Syrgkanis:2013:CEM:2488608.2488635,FeldmanILRS:15};
many of the resulting bounds have been shown to be tight.
Often these bounds are modest constants, such as $\tfrac 43$~\cite{RoughgardenT2002}
or $2$~\cite{syrgkanis2012bayesian}, etc., but rarely
are there provably no losses from strategizing, i.e.\ a PoA of 1.

This paper investigates when bounds close to 1 might be possible.
In particular, we study both large Walrasian auctions and large Fisher markets viewed as mechanisms.

\paragraph{Walrasian Auctions} 
Walrasian Auctions are used in settings where there
are goods for sale and agents, called bidders, who want to buy these goods.
Each agent has varying preferences for different subsets of the goods,
preferences that are represented by valuation functions.
The goal of the auction is to identify equilibrium prices;
these are prices at which all the goods sell, and each bidder receives
a favorite (utility maximizing) collection of goods, where each bidder's
utility is quasi-linear: the difference of its valuation for the goods and their cost at the
given prices.
Such prices, along with an associated allocation of goods, are said to form a Walrasian
equilibrium.

Walrasian equilibria for indivisible goods are known to exist when each bidder's demand satisfies the gross substitutes
property~\cite{GulStac99}, but this is the only substantial class of settings in
which they are known to exist.
\hide{
In these settings, the equilibria form a lattice, with the prices at the top of the
lattice being those that result from a Dutch auction, while those at the bottom result from
an English auction.
}

\cite{BabaioffLNP14} analyzed the PoA of the games induced by Walrasian mechanisms,
i.e.\ the  prices were computed by a method, such as an English or Dutch auction,
that yields equilibrium prices when these exist.
Note that the mechanism can be applied even when Walrasian equilibria do not exist, though
the resulting outcome will not be a Walrasian equilibrium.
But even when Walrasian equilibria exist, because bidders may strategize, in general the outcome
will be a Nash equilibrium rather than a Walrasian one.
Among other results, Babaioff et al.~showed an upper bound of 4 on the PoA for any Walrasian mechanism
when the bids and valuations satisfied the gross substitutes property and overbidding was not allowed.\footnote{They also proved a version of this bound which was parameterized w.r.t. the amount of overbidding.}
In addition, they obtained lower bounds on the PoA that were
greater than 1, even when overbidding was not allowed, which
excludes bullying equilibria; e.g.\ the English auction has a PoA of at least 2.

Babaioff et al.\ also noted that the prices computed by double auctions, widely used in financial
settings, are essentially computing a price that clears the market and maximizes trade;
one example they mention is the computation of the opening prices on the New York Stock Exchange,
and another is the adjustment of prices of copper and gold in the London market.

By a large auction, we intend an auction in which there are many copies of each good, and in addition
the demand set of each bidder is small.
The intuition is that then each bidder will have a small influence and hence strategic
behavior will have only a small effect on outcomes.
In fact, this need not be so.
For example, the bullying equilibrium persists: it suffices to increase the numbers
of items and bidders for each type to $n$, and have the buyers of each type
follow the same strategy as before.

What allows this bullying behavior to be effective is the precise match between the number of items
and the number of low-value bidders.
The need for this exact match also arises in the lower-bound examples in~\cite{BabaioffLNP14}
(as with the bullying equilibrium, it suffices to pump up the examples by a factor of $n$).
To remove these equilibria that demonstrate PoA values larger than 1,
it suffices  to introduce some uncertainty regarding the numbers of items and/or bidders.
Indeed, in a large setting it would seem unlikely that such numbers would be known precisely.
We will create this uncertainty by using distributions to determine the number of copies of each good.
This technique originates with~\cite{Swinkels2001}.
In contrast, prior work on non-large markets eliminated the potentially unbounded PoA
of the bullying equilibrium by assuming  bounds on the possible overbidding~\cite{BhawalkarR2011,ChristodoulouKS2008,FuKL2012,Syrgkanis:2013:CEM:2488608.2488635}.

Our main result on large Walrasian auctions is that the PoA of the Walrasian mechanism tends to 1 
as the market size grows.
This result assumes that expected valuations are bounded regardless of the size of the market.
We specify this more precisely when we state our results
in Section~\ref{sec:results}.
This bound applies to both Nash and Bayes-Nash equilibria;
as it is proved by means of a smoothness argument,
it extends to mixed Nash and coarse correlated equilibria,
and outcomes of no-regret learning.

\paragraph{Fisher Markets}
A Fisher market is a special case of an exchange economy in which the agents are either buyers or sellers.
Each buyer is endowed with money but has utility only for non-money goods; 
each seller is endowed with non-money goods,
WLOG with a single distinct good, and has utility only for money.
Fisher markets capture settings in which buyers want to spend all their money.
In particular, they generalize the competitive equilibrium from equal incomes (CEEI)~\cite{HyllandZec1979,Varian1974},
in that they allow buyers to have non-equal incomes.
While at first sight this might appear rather limiting, we note that
much real-world budgeting in large organizations treats budgets as money to be spent in full,
with the consequence that unspent money often has no utility to those making the spending decisions.
The budgets in GoogleAds and other online platforms can also be viewed as money that is intended to be
spent in full.

We consider the outcomes when buyers bid strategically in terms of how they declare their utility functions.
We show that the PoA tends to 1 as the setting size increases.
The only assumptions we need are some limitations on the buyers' utility functions:
they need to satisfy the gross substitutes property and to be monotone and homogeneous of degree 1.

This result is also obtained via a smoothness-type bound and hence extends to bidders
playing no-regret strategies, assuming that the ensuing prices are always bounded away from zero.
We ensure this by imposing reserve prices, but for lack of space this result is deferred to {the full version of the paper}.

\paragraph{Roadmap}
In Section~\ref{sec:prelim} we provide the necessary definitions and background,
in Section~\ref{sec:results} we state our results, 
which are then shown in Sections~\ref{sec:wal-equil} and~\ref{sec:large-fisher},
covering large auctions and large Fisher markets respectively.

\subsection{Related Work}

The results on auctions generalize earlier work of~\cite{Swinkels2001} who showed analogous
results for auctions of multiple copies of a single good. In contrast, we consider auctions
in which there are multiple goods.
Swinkels analyzed discriminatory and non-discriminatory mechanisms.
For the latter, he showed that any mechanism that used a combination of the $k$-th and $(k+1)$-st
prices when there were $k$ copies of the good on sale achieved a PoA that tended to 1 with the
auction size\footnote{Swinkels did not use the then recently formulated PoA terminology to state his result.}.
Our result also weakens some of the assumptions in Swinkels work.

The second closely related work on auctions is due to~\cite{FeldmanILRS:15}.
They also {analyze several} large settings. 
Among other results, they analyze auctions in which the PoA tends to $1$ as their size grows to $\infty$.
Their results are derived from a new type of smoothness argument.
Depending on the result, they require either uncertainty in the number of goods
or the number of bidders.
In contrast, our main result uses a previously known smoothness technique
plus uncertainty in the number of goods.
We contrast the techniques in more detail after we present our result in Section~\ref{sec:wal-equil}.
They also show that for traffic routing problems, the PoA of the atomic case tends to that
of the non-atomic case as the number of units of traffic grows to $\infty$.

The idea of uncertainty in the number of agents or items first arose in the Economics
literature. \cite{Myerson2000} used it in the context of voting games, and~\cite{Swinkels2001}
in the context of auctions.
Later, uncertainty in the number of agents was used with the Strategy Proof in the Large
concept~\cite{AzevBudish2012}.

\cite{hsu2015prices} considered the effects of non-unique demands on the social welfare,
assuming allocations were based on demands. Given a genericity assumption, they showed
that in markets with buyers having matroid based valuations 
the inefficiency was proportional to the number of distinct goods,
and so if this was a constant, the efficiency would tend to 1 as the market size grows.


The study of the behavior of large exchange economies was first considered
by~\cite{RobertsPosle1976}, which they modeled as a \emph{replica economy},
the $n$-fold duplication of a base economy, showing that individual utility gains from strategizing
tend to zero as the economy grows.
Subsequently, \cite{JacksonMan97} showed that with some \emph{regularity} assumptions,
the equilibrium allocations converge to the competitive equilibrium.
In contrast, our result proves bounds in terms of a parameter characterizing the size
of the economy.
More recently, \cite{AlNajjar:2007} studied the efficiency of
exchange economies in the presence of strategic agents; however, their notion of efficiency
was weaker than the PoA.
They termed an outcome $\mu$-efficient if there was no way of improving everyone's outcome
in terms of utility by an additive factor of $\mu$, and showed that with high probability (i.e.\ $1-\mu$)
a $\mu$-efficient outcome would occur when the size of the economy was large enough, so long
as each agent was small, agents were truthful with non-zero probability, and some
additional more technical conditions.
In contrast, the PoA considers the ratio {between the} social welfare {at the competitive equilibrium and} the achieved social welfare, 
namely a ratio of the sum of everyone's outcomes.
\cite{AzevBudish2012} showed that the Strategy Proof in the Large methodology could be applied to
exchange economies for agents that are limited to having a finite type space, 
independent of the size of the economy; in contrast, our results do not require a restriction of this sort.
Finally, we note that our
bounds apply to classes of Fisher markets, 
whereas the earlier literature applies to classes of exchange economies,
which is a significantly more general setting;
nonetheless, there are settings our work handles which are not covered by
these prior works.\footnote{For example, buyers with linear utility functions but an infinite bidding space.}

\cite{BranzeiCDFFZ14} analyzed the PoA of strategizing in Fisher markets.
The PoA compared the social welfare of the worst resulting Nash Equilibrium to the optimal,
i.e.\ welfare maximizing assignment, under a suitable normalization of utilities.
Among other results, they showed lower bounds of $\Omega(\sqrt n)$ on the PoA when there are $n$ buyers
with linear utilities.
However, we view the comparison point of an optimal assignment to be too demanding in this
setting, as it may not be an assignment that could arise based on a pricing of the goods.
In our results we will be comparing the strategic outcomes to those that occur under truthful bidding.
Another approach is to bound the gains to individual agents,
called the \emph{incentive ratio}; \cite{ChenDZ2011,ChenDZZ2012} showed these values
were bounded by small constants in Fisher market settings.

There has been much other work on large settings and their behavior.
We mention only a sampling.
\cite{Kalai2004} studied the notion of extensive robustness for large games,
and~\cite{KalaiShmaya2013} investigated large repeated games using the notion of compressed equilibria.
\cite{PaiRU2014} studied repeated games and the use
of differential privacy as a measure of largeness.
In a different direction, \cite{GradwohlReingold2008} investigated fault
tolerance in large games for $\lambda$-continuous and anonymous games.

\section{Preliminaries}
\label{sec:prelim}

\subsection{Definitions for Large Walrasian Auctions}

\begin{definition}
\label{def:wal-equi}
An auction $A$ comprises a set of $N$ bidders $B_1, B_2,\ldots, B_N$,
and a set of $m$ goods $G$, with $n_j$ copies of good $j$, for $1 \le j \le m$.
We write $\bn = (n_1, n_2, ..., n_m)$,
where $n_j$ denotes the number of copies of good $j$,
and we call it the \emph{multiplicity} vector.
We also write $\bn = (n_j, n_{-j})$,
where $n_{-j}$ is the vector denoting the number of copies of goods other than good $j$.
We refer to an instance of a good as an item.
For an allocation $\bxi$ to bidder $i$, which is a subset of the available goods,
we write $\bxi=(x_{i1},x_{i2}, \ldots, x_{im})$ where $x_{ij}$ denotes the number
of copies of good $j$ in allocation $\bxi$.
There is a set of prices $\bp=(p_1, p_2, \ldots, p_m)$, one per good;
we also write $\bp=(p_j, p_{-j})$.
Each bidder $i$ has a valuation function $v_i:X\rightarrow \mathbb{R}_+$,
where $X$ is the set of possible assignments,
and a quasi-linear utility function $u_i(\bxi) =v_i(\bxi) -\bxi \cdot \bp$.

A Walrasian equilibrium is a collection of prices $\bp$ and an allocation $\bxi$
to each bidder $i$ such that (i) the goods are fully allocated but
not over-allocated, i.e.\ for all $j$, $\sum_i x_{ij} \leq n_j$, and $\sum_i x_{ij} = n_j$ if $p_j > 0$,
and (ii) each bidder receives a utility maximizing allocation at prices $\bp$, i.e.\
$u_i(\bxi) = v_i(\bxi) - \bxi \cdot \bp = \max_{\bxi} [v_i(\bxi) - \bxi \cdot \bp]$.

In a Walrasian mechanism for auction $A$ each bidder declares 
a bid function
$b_i: X\rightarrow \mathbb{R}_+$.
We write $\bb = (b_1, b_2, \ldots, b_N)$ and $\bb = (b_i, b_{-i})$.
The mechanism computes prices and allocations
as if the bids were the valuations.
\end{definition}

Given the bidders and their bids,
$\bp(\bn; \bb)$ denotes the prices produced by the Walrasian mechanism at hand
when there are $\bn$ copies of the goods
and $\bb$ is the bidding profile.
Also, $p_j(\bn; \bb)$ denotes the price of good $j$ and
$\bp(\bn; \bb) = (p_j(\bn; \bb), p_{-j}(\bn; \bb))$.
Finally, we let both $\bxi(\bn; \bb)$ and $\bxi(\bn; b_i,b_{-i})$
denote the allocation to bidder $i$ provided by the mechanism.

\begin{definition}
\label{def:gross-sub}
A valuation or bid function satisfies the gross substitutes property if
increasing the price for one good only increases the demand for other goods.
Formally,
for each
utility maximizing allocation $\bx$ at prices $\bp=(p_j, p_{-j})$, at prices
$(q_j, p_{-j})$ such that $q_j > p_j$, there is a utility maximizing allocation $\by$ with $y_{-j} \succeq x_{-j}$
(i.e.\ $ y_k \ge x_k$ for $k \neq j$). This definition applies to the Fisher market setting also.
\end{definition}

In the auctions we consider the number $\bn$ of copies of each good is determined
by a distribution $F(\bn\hide{,N})$. In order for the auction to be large, we need that the probability that there
are exactly $r_j$ copies of the $j$-th item be small, for every $r_j$ and for every $j$.

\hide{
A large Walrasian auction is an auction with $N$ bidders where $N$ is large.
However, in order to state theorems parameterized by $N$,
we define a large auction as being a sequence of auctions with increasing numbers
of bidders, as follows.}

\begin{definition}
\label{large-market}
A large Walrasian auction is characterized by a distribution $F(\bn\hide{,N})$, a demand bound $k$,
and a largeness measure $L$.
It satisfies the following two properties.
\begin{enumerate}[i.]
\item
The demand of every bidder is for at most $k$ items.
Formally, if allocated a set of more than $k$ items, the bidder will obtain equal utility
with a subset of size $k$.
\item
The probability that there are exactly $c$ copies of good $j$, for any $c$ and any $j$ 
is bounded by $1/L$.
Formally,
Let $F(n_j,j\hide{,N} | n_{-j})$ denote the probability that there are exactly $n_j$ copies of good $j$ when given $n_{-j}$ copies of other goods;
then $\max_j \max_{n_j, n_{-j}} F(n_j,j\hide{,N} | n_{-j})\le 1/L$. 
\end{enumerate}
\end{definition} 
\hide{
\rjc{Note that this definition implies that with high probability there are many copies of each good;
e.g., there is at least a 50\% probability of at least $L/(2m)$ copies of every good being present.
This then also implies that to avoid the trivial situation in which all bidders' demands can be optimally
satisfied, there must be at least $L/(2mk)$ bidders at hand.
In other words, if $k$ is small and $L$ is large, then with high probability the market will be large:
it will have
lots of copies of each good, and to avoid the PoA being trivially close to 1, there needs to be
lots of buyers present too.}
}
\hide{It will be convenient to write $m_j= \mu(n_j)$ for the expected number of copies of good $j$.}

Note this definition implies that the expected number of copies of each good is at least $\frac{L}{2}$ and it is in this sense that the market is large.

A Bayes-Nash equilibrium is an outcome with no expected gain from an individual deviation:
\[
\forall b'_{i}: \mathbb{E}_{\bn, v_{-i},b_{-i}}\left[u_i(x_i(\bn; b_i,b_{-i}),p((b_i,b_{-i}))\right] \ge
                      \mathbb{E}_{\bn, v_{-i},b_{-i}}\left[u_i(x_i(\bn; b'_i,b_{-i}),p((b'_i,b_{-i})) \right].
\]

The social welfare $\sw(\bx)$ of an allocation $\bx$ is the sum of the individual valuations:
$\sw(\bx) = \sum_i v_i(\bxi)$.
We also write $\sw(\opt)$ for the (expected) optimal social welfare, the maximum (expected) achievable social welfare,
and $\sw(\ne)$ for the smallest (expected) social welfare achievable at a Bayes-Nash equilibrium.

Finally, the Price of Anarchy is the worst case ratio of $\sw(\opt)$ to $\sw(\ne)$
over all instances in the class of games at hand, namely
auctions $A_N$ of $N$ buyers:
\[
\text{PoA} = \max_{A_N} \frac{\sw(\opt)} {\sw(\ne)}.
\]

\subsection{Definitions for Large Fisher Markets}

\begin{definition}
\label{def:fisher-market}
A \emph{Fisher market}\footnote{In much of the Computer Science literature the term market has been used to mean what is called an economy in the economics literature.}
has $m$ divisible goods and $N$ agents, called buyers.
There is a fixed endogenous supply of each good (which WLOG is chosen to be 1 unit).
Agent $i$ has a fixed endowment of $e_i$ units of money.
Each agent has a utility function, with the characteristic
that the agent has no desire for its money, i.e.\
each agent seeks to spend all its money on goods.
Suppose we assign a price $p_j$ to each good $j$, then a (possibly non-unique) {\em demand} of agent $i$
is a bundle of goods $(x_{i1}, x_{i2},\ldots, x_{im})$
that maximizes her utility subject to the budget constraint:
\ $  \sum_j p_j  x_{ij}  \leq e_i$.
A \emph{market demand} $\xj$ for a good $j$ is the total (possibly non-unique) demand for that good;
$\xj = \sum_i x_{ij}$.
This is viewed as a function of the price vector $\bp = (p_1,p_2,\ldots,p_m)$.
Prices $\bp$ form a \emph{Walrasian equilibrium}, or equilibrium for short, if the resulting
markets can clear, that is there exists a market demand at these prices such that
for all $j$, $\xj = 1$ if $p_j > 0$.
\hide{For notational convenience, we define an \emph{excess demand} for good $j$ as $\zj = \xj -1$.
The equilibrium condition is that every excess demand be zero.
Prices $\bp$ form an equilibrium if there exists market demands at these prices that clear the market.}
\end{definition}
The Fisher market is actually a special case of an Exchange economy.
(To see this, view the money as another good, and the supply of the goods as being
initially owned by another agent, who desires only money.)

In general computing equilibria is computationally hard even for Fisher markets~\cite{ChenTeng2009,VaziraniYann2011}. One feasible class is the class of Eisenberg-Gale markets, markets for which the equilibrium computation 
becomes the solution to a convex program. This class was named
in~\cite{JainVaz2007};
the program was previously identified in~\cite{EisenbergGale1959}.
\begin{definition}
\label{def-EGmarket}
Eisenberg-Gale markets are those economies for which the equilibria are exactly the solutions to the following convex program, called the Eisenberg-Gale convex program:
\begin{align*}
\max_{\bx}  ~~ & \sum_{i=1}^{n}  e_i \cdot \log(u_i(x_{i1}, x_{i2}, \cdots x_{im})) \numberthis \label{EG-convex}\\
&\text{s.t.} ~~~~ \forall j:~~ \sum_i x_{ij} \leq 1~~
\text{and}~~~~\forall i, j:~~ x_{ij} \geq 0. \nonumber
\end{align*}

In a Fisher market game, each buyer 
declares 
a bid function $b_i$;
however, her endowment is public knowledge. 
The mechanism computes prices and allocations as if the bids were the valuations. 
The same restrictions will apply to the bid functions and the utilities.
The goal of each buyer is to maximize her utility.

\emph{Notational remark}
The demands are induced by the bids, thus we could write
$u_i(\bxi(b_i,b_{-i}))$, but for brevity we will write this as
$u_i(b_i,b_{-i})$ instead. Also, it will be convenient to write $v_i$ for the truthful bid of $u_i$, yielding the notation $u_i(v_i, v_{-i})$.

\begin{definition}
\label{large-fisher}
The largeness $L$ of a Fisher market is defined to be the ratio $L =\frac{\sum_i e_i} {\max_i e_i}$.
\end{definition}

It is natural to measure the efficiency of outcomes  in the Fisher market game using the objective function \eqref{EG-convex}, or rather its exponentiated form. More specifically, we compare the geometric means of the buyer's utilities weighted by their budgets
at the worst Nash Equilibrium (with bids $b$) and at the market equilibrium (with bids $v$), and namely we call it {\bf Geometric Price of Anarchy(GPoA)}:
\hide{\rjc{Changed min to max in next two formulas.}}
\[
\GPoA(M) = \max_{\text {NE with bids}~ b}
\Bigg(\prod_i \Big( \frac {u_i(v_i, v_{-i})} {u_i(b_i, b_{-i})}        
                      \Big)^{e_i}
\Bigg)^{\frac{1}{\sum_i e_i}}.
\]

Note that in the settings we consider the prices at a market equilibrium are unique.
We will also use this product to upper bound a Price of Anarchy notion for a market $M$, 
which compares the sum of the utilities
at the worst Nash Equilibrium to the sum at the market equilibrium.
\[
\text{PoA}(M) = \max_{\text {NE with bids}~ \bb}
\frac
       {\sum_i u_i(v_i, v_{-i}) } 
       { \sum_i u_i(b_i, b_{-i}) }  . 
\]
For the latter measure to be meaningful, we need to use a common scale for the different buyers' utilities.
To this end, we define \emph{consistent scaling}.
\begin{definition}
\label{def:consistent-util}
The bidders' utilities are consistently scaled if there is a parameter $t > 0$ 
such that for every bidder $i$, $u_i(v_i, v_{-i}) = t e_i$. \footnote{WLOG, we may assume that $t = 1$.}
That is, bidder $i$'s utility function is scaled to give it utility $t e_i$ at the market equilibrium,
where $e_i$ is its budget.
\end{definition}

Finally, we will be considering utility functions that are  monotone, homogeneous of degree one
(defined below),  continuous, concave, and that 
induce demands that satisfy the gross substitutes condition (see Definition~\ref{def:gross-sub}).

\begin{definition}
\label{def:homo-deg-one}
Utility function $u(\bx)$ is \emph{homogeneous of degree 1} if for every $\alpha >0$, 
$u(\alpha \bx) = \alpha \cdot u(\bx)$.
\end{definition}

\begin{fact}
The utility functions in Eisenberg Gale programs are assumed to be homogeneous of degree 1, continuous and concave.
\end{fact}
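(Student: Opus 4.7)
The statement is a \emph{Fact} that records the standing hypotheses under which the Eisenberg--Gale convex program~\eqref{EG-convex} faithfully represents a Fisher-market equilibrium, so my ``proof'' amounts to verifying that each listed property plays a specific role in the EG framework and then citing \cite{EisenbergGale1959} for the underlying correspondence. The plan is not to derive these properties from something else, but to justify why each is built into the hypotheses of the EG program.

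First I would handle concavity and continuity together. Since $\log$ is a non-decreasing concave function and each $u_i(\bx)$ takes positive values on the relevant bundles, concavity of $u_i$ lifts to concavity of $\log u_i(\bx)$, and hence $\sum_i e_i \log u_i(\bx)$ is concave in $\bx$. Combined with the convexity of the feasible region $\{\bx : x_{ij} \ge 0,\ \sum_i x_{ij}\le 1\}$, this makes \eqref{EG-convex} a concave maximization program, so its KKT conditions are necessary and sufficient for optimality. Continuity of $u_i$ then lets one invoke Weierstrass's theorem to guarantee that the maximum is actually attained on the compact feasible set, which is what makes the equilibrium notion well-defined.

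Second, for homogeneity of degree one, I would appeal to the classical Eisenberg--Gale theorem: when each $u_i$ is homogeneous of degree 1, the KKT conditions of \eqref{EG-convex}, with the Lagrange multipliers of the supply constraints playing the role of prices $p_j$, coincide exactly with the market-clearing and buyer-optimality conditions, and each buyer spends precisely $e_i$. Homogeneity is what converts the first-order ``bang-per-buck'' equality into individual utility maximization under the budget constraint at those prices; without it the KKT system no longer lines up with budget exhaustion and demand maximization.

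The main (mild) obstacle is expository rather than mathematical: one must be careful to present the Fact as a set of structural assumptions, not as a consequence of something deeper, and to note that all subsequent large-Fisher PoA arguments in Section~\ref{sec:large-fisher} will use \eqref{EG-convex} as the reference equilibrium and the consistent-scaling convention of Definition~\ref{def:consistent-util}. Beyond this, no genuinely new reasoning is required.
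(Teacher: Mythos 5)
You correctly recognize that this ``Fact'' is not a theorem to be derived but a recording of the standing structural assumptions on the utility functions in the Eisenberg--Gale framework, citing \cite{EisenbergGale1959}; the paper itself offers no proof and simply asserts it as background. Your expository justification of why each property (concavity/continuity for well-posedness and attainment of the max; homogeneity of degree one so the KKT conditions of~\eqref{EG-convex} align with budget exhaustion and demand maximization) is accurate and consistent with the paper's treatment, just more elaborate than the paper's bare statement.
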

\end{definition}

\hide{
\begin{definition}
\label{def:wgs-divisible}
The demand of a buyer specified by the relation $\bx(\bp)$ (note that it need not be unique)
satisfies the weak gross substitutes property if for all prices $\bp$ and $\bq$,
with $q_j > p_j$ and $q_k = p_k$ for $k \neq j$,
given any specific demand $\bx(\bp)$,
there is a demand $\bx(\bq)$ such that $x_k(q) \ge x_k(p)$ for all $k\neq j$
and $x_j(q) \le x_j(p)$.
\end{definition}
}

\hide{
As a running example, to illustrate the application of the large auction definition,
we will use a binomial distribution with $2m_j$ potential copies
of good $j$, each with a probability $\tfrac 12$ of being present.
For this binomial distribution, $F(j,N) \le \frac{1}{\sqrt{2m_j}}$.
}

\subsection{Regret Minimization}
In a regret minimization setting, a single player is playing a repeated game. At each round, the player can choose to play one of $K$ strategies, which are the same from round to round. The outcome of the round is a payoff in the range $[-\chi, \chi]$. 

\begin{definition}
An algorithm that chooses the strategy to play is regret minimizing if the outcome of the algorithm, in expectation, is almost as good as the outcome from always playing a single strategy regardless of  any one else's actions. Formally, there is a function $\Phi(|K|,T) = o(T)$ such that, for any $b_{-i}^t$, for any fixed strategy $b_{i} \in K$,
\[
\sum_{t = 1}^{T} u_i (b_{i}^t, b_{-i}^t) - \sum_{t = 1}^{T} u_i (b_{i}, b_{-i}^t) \geq - \Phi(|K|,T) \cdot \chi, 
\]
where \hide{$\Phi(|K|,T) = o(T)$ and}$b_{i}^t$ is the strategy bidder $i$ uses at time $t$.
\end{definition}

\begin{theorem}
Regret minimizing algorithms exist. If, at the end of each round, the player learns the payoff for all $K$ strategies, $\Phi(|K|,T) = O(\sqrt{T})$ can be achieved, and if she learns just the payoff for her strategy, $\Phi(|K|,T) = O(T^{\frac{2}{3}})$ can be achieved.
\end{theorem}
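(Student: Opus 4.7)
The plan is to invoke two classical algorithms, one for each regime, and show that each attains the claimed regret bound. Since the result is standard, my proposal is to cite the algorithms by name but sketch enough of the potential-function arguments to make the bounds concrete.

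For the full-information setting, I would use the Multiplicative Weights (Hedge) algorithm. Rescale payoffs so that losses $\ell_i^t \in [0,1]$ via $\ell_i^t = (\chi - u_i(b_i, b_{-i}^t))/(2\chi)$. Maintain weights $w_i^t$ on the $K$ strategies, initialized to $1$, and update $w_i^{t+1} = w_i^t (1-\eta)^{\ell_i^t}$; at each round, sample strategy $i$ with probability $w_i^t/\sum_j w_j^t$. The standard potential argument tracks $\Phi^t = \sum_i w_i^t$: on one hand $\log \Phi^{T+1} \le -\eta \sum_t \langle p^t, \ell^t\rangle$ plus lower-order terms in $\eta$; on the other hand $\log \Phi^{T+1} \ge \log w_{i^*}^{T+1} = \log(1-\eta) \sum_t \ell_{i^*}^t$ for any fixed comparator $i^*$. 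Rearranging and tuning $\eta = \sqrt{\log K / T}$ gives expected regret $O(\sqrt{T \log K})$ in the rescaled losses, and multiplying by the scale $2\chi$ yields $\Phi(K,T) = O(\sqrt{T \log K})$, which is $O(\sqrt{T})$ for fixed $K$.

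For the bandit setting, where only the payoff of the played strategy is observed, the cleanest route is an explore-then-weight scheme: with probability $\gamma$ sample a strategy uniformly at random (exploration) and with probability $1-\gamma$ play according to the Hedge weights, feeding Hedge the inverse-propensity estimator $\hat{\ell}_i^t = \ell_i^t \cdot \mathds{1}[i \text{ played}] / q_i^t$, where $q_i^t$ is the sampling probability. This estimator is unbiased, but its magnitude grows like $1/\gamma$, which inflates the Hedge regret term. Balancing the exploration cost $O(\gamma T)$ against the amplified Hedge regret $O(\sqrt{T \log K}/\gamma)$ and tuning $\gamma = T^{-1/3}$ produces the bound $\Phi(K,T) = O(T^{2/3} \log^{1/3} K)$, again $O(T^{2/3})$ for fixed $K$ (up to a $\chi$ factor from the original payoff scale).

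The main obstacle, relative to a purely black-box citation, is being careful about the payoff scaling. The paper defines payoffs in $[-\chi,\chi]$ and asks for a bound of the form $\Phi(K,T)\cdot \chi$, so I would need to shift and rescale losses into $[0,1]$ before running Hedge/EXP3 and then translate back, verifying that additive shifts cancel in the regret comparison and multiplicative rescalings only contribute the advertised $\chi$ factor. Once that bookkeeping is done, both bounds follow from well-known analyses, and I would simply cite the original references (Littlestone–Warmuth / Freund–Schapire for Hedge, and Auer–Cesa-Bianchi–Freund–Schapire for EXP3) rather than reproducing the full calculations.
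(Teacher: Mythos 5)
The paper does not actually prove this theorem; it is stated without proof in the preliminaries as a citation to standard results in online learning, so there is no ``paper's proof'' to compare against. Your instinct to invoke Hedge/Multiplicative Weights for the full-information bound and an EXP3-style algorithm for the bandit bound is the right standard approach, and citing the classical references is exactly what the paper implicitly does.

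That said, there is a concrete error in your balance calculation for the bandit case. You write that balancing the exploration cost $O(\gamma T)$ against ``the amplified Hedge regret $O(\sqrt{T\log K}/\gamma)$'' and tuning $\gamma = T^{-1/3}$ gives $O(T^{2/3})$. But with $\gamma = T^{-1/3}$ these two terms are $T^{2/3}$ and $T^{5/6}\sqrt{\log K}$ respectively, which do not balance, and the latter dominates. Equating the two terms instead forces $\gamma \asymp T^{-1/4}$ and total regret $\Theta(T^{3/4})$. The root cause is that the crude Hedge bound you invoke, which scales linearly in the maximum loss, is too weak once the importance-weighted losses can be as large as $K/\gamma$. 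To actually obtain $T^{2/3}$ via an explore-then-weight scheme, you need the second-moment (variance-based) refinement of the Hedge analysis, in which the regret scales with $\eta\sum_t \mathbb{E}[\hat\ell_t^2]$ rather than with $\eta T (\max\hat\ell)^2$; since the importance-weighted estimator has $\mathbb{E}[\hat\ell_{i,t}^2] \le K/\gamma$ even though $\hat\ell_{i,t}$ itself can be as large as $K/\gamma$, this tighter bound makes the balance come out to $T^{2/3}$. Alternatively, you could simply cite EXP3's $O(\sqrt{TK\log K})$ guarantee, which is $O(\sqrt T)$ for fixed $K$ and trivially dominates the $O(T^{2/3})$ claimed in the theorem; your text already gestures at EXP3, so you could just lean on that citation and drop the faulty $\gamma = T^{-1/3}$ derivation. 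The full-information part of your proposal is fine, including the scaling bookkeeping for payoffs in $[-\chi,\chi]$.
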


Note that in large auctions and markets, it is the latter result that seems more applicable.

As shown in \cite{Roughgarden2015}, if all players play regret minimizing strategies, the resulting outcome observes the PoA bound obtained via a smoothness argument up to the regret minimization error.

\section{Our Results}
\label{sec:results}

One issue that deserves some consideration when specifying a large setting, 
and placing some inevitable restrictions on the
possible settings, is to determine which parameters should remain bounded as the setting size grows.
So as to be able to state asymptotic results, we give results in terms of a parameter $L$ which is allowed to grow arbitrarily large. \footnote{The only interesting case in these Walrasian auctions is when $N = \Omega(L)$, for otherwise, due to bounded demands, all buyers achieve their optimal allocation w.h.p.} But in fact all settings are finite, so really when stating that some parameters are bounded, we are making
statements about the relative sizes of different parameters.

One common assumption is that the type space is finite.  
However, it is not clear such an assumption is desirable
in the settings we consider, 
for it would be asserting that the number of possible valuations and bidding strategies is much
smaller than the number of bidders.  Another standard assumption is that the ratio of the
largest to smallest non-zero valuations are bounded. 
This, for example, would preclude valuations being distributed according to a power law distribution
(or any other unbounded distribution), which again seems unduly restrictive if it can be avoided.

\subsection{Result for Walrasian Auctions}

Our analysis makes two assumptions; stronger assumptions were made
for the large auction results in~\cite{Swinkels2001,FeldmanILRS:15}. \cite{Swinkels2001} also ruled out overbidding by arguing it is a dominated strategy. Our analysis can avoid even this assumption of other players' rationality; however, bounded overbidding is needed for the extension to regret minimizing strategies.
\begin{assumption}
\label{ass:bdd-val}
\emph{[Bounded Expected Valuation]~}
There is a constant $\zeta$ such that
for each bidder and each item, her expected value for this single item is at most $\zeta$:
\[
\max_s \mathbb{E}_{v_i} [v_i(s)] \leq \zeta.
\]
\end{assumption}
Note that without this assumption the social welfare would not be bounded, and then it is not clear how to measure the
Price of Anarchy.
Prior work had assumed $v_i(s) \le \zeta$ for all $s$ and $i$ (i.e.\ absolutely rather than in expectation).

\begin{theorem}
\label{wal-e1q-PoA}
In a large Walrasian auction which satisfies Assumption~\ref{ass:bdd-val} and with buyers whose valuation and bid functions are monotone and satisfy the gross substitutes property,
\[
\sw(\ne) \geq \Bigg(1 - \frac{3 \cdot k \cdot (k+1) \cdot \zeta \cdot m}{\rho} \cdot  Y \cdot  \lceil \log_2 \frac{1}{Y}\rceil  \Bigg) \sw(\opt),
\]
where $Y = \frac{m}{L} \left[ 2 m {{k+1+m}\choose{m}} \right]$ and $\rho = \frac{\sw(\opt)}{N}$ .
\hide{
In particular, if the number of copies of each good is independently and identically distributed according to the Binomial distribution $B(N, \frac 12)$, 
and $k,m =O(1)$, then
\[
\sw(\ne) \ge \left( 1 -O\left(\frac {\log N} {\sqrt N} \right) \right) \sw(\opt).
\]}

Also, if there is only one good, i.e., if $m = 1$, then
\[
\sw(\ne) \geq \Bigg(1 - \frac{3 \cdot k \cdot (k+1) \cdot \zeta }{\rho} \cdot  Y \cdot  \lceil \log_2 \frac{1}{Y}\rceil  \Bigg) \sw(\opt),
\]
where $Y =\frac{ 2 (k+2)}{L} $ and $\rho = \frac{\sw(\opt)}{N}$ .
\end{theorem}
{\bf Remark}
The gross substitutes assumption is present so as to ensure the auction outcome is a Walrasian equilibrium w.r.t.\ to the bids,
for if it is not then some bidders will be allocated a non-favorite bundle, which seems unattractive as a solution concept.

To achieve $\sw(\ne) \geq (1 - \epsilon) \sw(\opt)$ where $\epsilon$ is small, we need $\frac{L}{\rho \cdot \log L}$ to be large. 
We can achieve this by considering a sequence of auctions indexed by $N$, the number of bidders,  
and requiring  $\rho$ to be a constant and $L$ to be sufficiently large. 
One way to obtain a constant $\rho$ is to make the following two assumptions.
\hide{
\begin{assumption}
\label{ass:sw-grows}
\emph{[Market Welfare]~}
The optimal social welfare grows linearly with the number of bidders:
$\sw(\opt) \ge \rho N$, for some constant $\rho > 0$.
\end{assumption}
}
\hide{
\cite{FeldmanILRS:15} also makes this assumption. \cite{Swinkels2001} makes assumptions on the value distribution which imply Assumption~\ref{ass:sw-grows} although this consequence is not stated in his work.

We can achieve Assumption~\ref{ass:sw-grows} by making following assumptions.
}
\begin{assumption}
\label{ass:mark-size}
\emph{[Auction Size]~}
Let $\mu(n_j)$ be the expected number of copies of good $j$, for $1\le j \le m$, and let $\Gamma(n_j)$ be its standard deviation.
The assumption is that
for each $j$, $\mu(n_j) = \Theta(N)$ and $\Gamma(n_j) \le (1 - \lambda) \mu(n_j)$ for some constant $\lambda > 0$.
Let $\alpha >0$ be such that $\mu(n_j) \ge \alpha N$ for all $j$ and sufficiently large $N$.
\end{assumption}

\begin{assumption}
\label{ass:min-bound}
\emph{[Value Lower Bound]~}
There is a parameter $\rho'>0$ such that
for any bidder, its largest expected value for one item 
 is at least $\rho'$:
 \[
\max_s \mathbb{E}_{v_i} [v_i(s)] \geq \rho'.
 \]
\end{assumption}

\begin{lemma}
\label{lem:high-sw}
Let $\rho = \lambda^2 \alpha \frac{2 \lambda + \lambda^2}{(1 + \lambda)^2} \rho'$.
If Assumptions~\ref{ass:mark-size}  and~\ref{ass:min-bound} hold, then $\sw(\opt) \ge \rho N$.
\end{lemma}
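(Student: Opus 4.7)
Since $\sw(\opt)$ is the maximum expected welfare over all feasible allocations, the plan is to exhibit one specific (possibly randomized) feasible allocation whose expected welfare is at least $\rho N$. For each bidder $i$, I fix a favorite good type $s_i \in \arg\max_s \mathbb{E}_{v_i}[v_i(s)]$; by Assumption~\ref{ass:min-bound}, $\mathbb{E}[v_i(s_i)] \ge \rho'$. Crucially, $s_i$ is determined by the a priori (known) distribution of $v_i$, not by its realization, so the partition $B_j = \{i : s_i = j\}$ of bidders into ``fans'' of each good $j$ is deterministic and independent of both the multiplicity vector $\bn$ and the realized valuations; in particular $\sum_j |B_j| = N$.

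The specific constant $\frac{2\lambda+\lambda^2}{(1+\lambda)^2} = 1 - (1+\lambda)^{-2}$ in $\rho$ pinpoints the second step: apply the two-sided Chebyshev inequality to $n_j$ with $k = 1+\lambda$,
\[
\Pr\!\left[n_j > \mu(n_j) - (1+\lambda)\Gamma(n_j)\right] \;\geq\; 1 - \frac{1}{(1+\lambda)^2} \;=\; \frac{2\lambda+\lambda^2}{(1+\lambda)^2}.
\]
Using $\Gamma(n_j) \le (1-\lambda)\mu(n_j)$, the identity $(1+\lambda)(1-\lambda) = 1 - \lambda^2$, and $\mu(n_j) \ge \alpha N$, this event implies $n_j \ge \lambda^2 \mu(n_j) \ge \lambda^2 \alpha N$.

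Now consider the allocation that, for each good $j$, picks a uniformly random subset $S_j \subseteq B_j$ of size $\min(|B_j|, n_j)$ and gives each bidder in $S_j$ one copy of $j$; this is clearly feasible. Each $i \in B_j$ is placed in $S_j$ with conditional probability $\min(|B_j|, n_j)/|B_j|$, so using independence of $\bn$ from the valuations,
\[
\mathbb{E}\!\left[\sum_{i \in S_j} v_i(s_i)\right] \;=\; \frac{\mathbb{E}[\min(|B_j|, n_j)]}{|B_j|}\, \sum_{i\in B_j} \mathbb{E}[v_i(s_i)] \;\geq\; \rho'\, \mathbb{E}[\min(|B_j|, n_j)].
\]
Combining the Chebyshev bound with the elementary inequality $\sum_j \min(|B_j|, C) \ge \min(N, C)$ (which follows from $\sum_j |B_j| = N$), and taking $C = \lambda^2 \alpha N$ in the natural regime $\lambda^2 \alpha \le 1$, the total expected welfare is at least
\[
\rho' \sum_j \mathbb{E}[\min(|B_j|, n_j)] \;\geq\; \rho' \cdot \frac{2\lambda+\lambda^2}{(1+\lambda)^2} \cdot \lambda^2 \alpha N \;=\; \rho N.
\]

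The only step that requires real care, as opposed to arithmetic, is justifying the factorization of expectations above: it rests on the observations that $B_j$ is fixed by the prior distributions (not by realized valuations) and that $\bn$ is independent of the valuations, so that $\min(|B_j|, n_j)$ and $\sum_{i \in B_j} v_i(s_i)$ separate cleanly across independent sources of randomness. Once that and the choice $k = 1+\lambda$ in Chebyshev are in place, everything else is routine.
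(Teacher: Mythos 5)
Your proof is correct and follows essentially the same route as the paper's: partition bidders by the good for which their expected value is maximal, apply the (one-sided consequence of the) Chebyshev bound with $t = 1+\lambda$ to show $n_j \geq \lambda^2\alpha N$ with probability at least $\tfrac{2\lambda+\lambda^2}{(1+\lambda)^2}$, allocate each available copy of good $j$ to a ``fan'' of $j$, and finish with $\sum_j \min(N_j, C) \geq \min(N, C)$. The paper's own write-up simply states ``there is an assignment with valuation at least $\rho' \cdot \min\{N_j, \lambda^2\alpha N\}$'' without spelling out that $\rho'$ lower-bounds the \emph{expected} single-item value, so the choice of winning bidders must be made without reference to the realized valuations; your explicit randomized selection $S_j$ and the factorization of the expectation across the (presumed) independent sources of randomness $\bn$ and $\bv$ tightens exactly that point. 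You also flag the implicit requirement $\lambda^2\alpha \leq 1$ needed for the last inequality, which the paper leaves unstated. These are refinements of the same argument, not a different one.
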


In previous work, \cite{FeldmanILRS:15} also made the assumption that $\rho$  is a constant. 
\cite{Swinkels2001} made assumptions on the value distribution which again imply $\rho$ is a constant although this consequence is not stated in his work.

\begin{corollary}
In a large Walrasian auction which satisfies Assumption~\ref{ass:bdd-val} and with buyers whose valuation and bid functions are monotone and satisfy the gross substitutes property, if the number of copies of each good is independently and identically distributed according to the Binomial distribution $B(N, \frac 12)$, 
and $\rho,k,m =O(1)$, then
\[
\sw(\ne) \ge \left( 1 -O\left(\frac {\log N} {\sqrt N} \right) \right) \sw(\opt).
\]
\end{corollary}

In order to obtain good bounds when using regret minimization algorithms, we need to be able to bound
the possible losses a player makes, which we achieve by bounding the possible overbidding.
This is similar to the notion of overbidding previously given in~\cite{BabaioffLNP14}.
\begin{definition}
Let $K$ be the set of strategies a player uses. She is a $(\gamma, \delta)$-player if $v \in K$ and, for any $b \in K$ and for any set $\bx$, 
\[
b(\bx) \leq v(\bx) \cdot  \gamma + \delta.
\]
\end{definition}
\begin{theorem}
\label{thm:regret-wal}
Suppose all players use regret minimization algorithms, they are all $(\gamma, \delta)$-players and their valuation and bid functions are monotone and satisfy the gross substitutes property. Then, in a large Walrasian auction which satisfies Assumption~\ref{ass:bdd-val},
\begin{align*}
\frac{1}{T}\mathbb{E}_{\bn, \bv, \bb}\Bigg[\sum_{t = 1}^{T} v_i(\bxi(b_i^t, b_{-i}^t))\Bigg] &\geq \Bigg(1 - \frac{3 \cdot k \cdot (k+1) \cdot \zeta \cdot m}{\rho} \cdot  Y \cdot  \lceil \log_2 \frac{1}{Y}\rceil  \\
&~~~~~~~~~~~~~~~~~~~~~~- \frac{\max_{i} \Phi(|K_i|,T) \cdot (k m \zeta \gamma + \delta)}{\rho \cdot T} \Bigg) \sw(\opt).
\end{align*}
where $Y = \frac{m}{L} \left[ 2 m {{k+1+m}\choose{m}} \right]$,  $\rho = \frac{\sw(\opt)}{N}$, $K_i$ is the set of strategies used by $i$, and $v_i \in K_i$.
\end{theorem}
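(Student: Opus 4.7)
The plan is to reduce this theorem to Theorem~\ref{wal-e1q-PoA} by invoking the standard ``smoothness $\Rightarrow$ no-regret'' conversion of~\cite{Roughgarden2015} (mentioned at the end of the Regret Minimization subsection). The proof of Theorem~\ref{wal-e1q-PoA} proceeds through a smoothness-style argument in which, for each bidder $i$, a hypothetical deviation $b_i^* = b_i^*(\bn, \bv)$ is exhibited; the deviation depends on the realized valuations $\bv$ and multiplicities $\bn$, but crucially not on the opponents' bids $b_{-i}$. This independence of the deviation from $b_{-i}$ is precisely the structure needed for the no-regret extension.

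The first step is to isolate the per-round smoothness inequality that drives the proof of Theorem~\ref{wal-e1q-PoA}: summed over players and taken in expectation over $(\bn, \bv)$, the utilities of the deviations $b_i^*$ against any profile $b_{-i}$, together with the welfare of that profile, dominate a $(1 - \epsilon(N))$ fraction of the optimal welfare, where $\epsilon(N)$ is the slack factor appearing in Theorem~\ref{wal-e1q-PoA}. At each round $t$, each player's no-regret guarantee yields $\tfrac{1}{T}\sum_t u_i(b_i^t, b_{-i}^t) \geq \tfrac{1}{T}\sum_t u_i(b_i^*, b_{-i}^t) - \Phi(|K_i|,T) \cdot \chi / T$, where $\chi$ bounds the per-round payoff magnitude. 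Summing over $i$ and applying the smoothness inequality round-by-round converts this into a time-averaged welfare lower bound matching the PoA bound of Theorem~\ref{wal-e1q-PoA}, up to an additive regret error of $\max_i \Phi(|K_i|,T) \cdot \chi / T$ after absorbing constant factors.

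The second step, specific to this theorem, is to bound $\chi$ in the form appearing in the statement. Since each bidder's demand saturates at $k$ items and her expected single-item value is at most $\zeta$, monotonicity plus gross substitutes (which implies submodularity) give $\mathbb{E}[v_i(\bx)] = O(k\zeta)$ for every bundle $\bx$. The $(\gamma,\delta)$-player assumption bounds every bid in $K_i$ by $v(\bx) \gamma + \delta$, and hence each price $p_j$ and each per-player payment $\bxi \cdot \bp$ can be controlled in terms of competing bids over the at-most-$m$ distinct goods; gathering these bounds in the manner of the smoothness analysis used in Theorem~\ref{wal-e1q-PoA} yields $\chi = O(km\zeta\gamma + \delta)$, matching the regret factor in the theorem.

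The main obstacle will be the bookkeeping needed to apply the per-round smoothness inequality against no-regret learners whose deviation target depends on $(\bn, \bv)$, while simultaneously taking expectations over $(\bn, \bv, \bb)$. This is routine for deterministic smoothness arguments but requires a small extension to the Bayes--Nash setting: one must verify that the deviation $b_i^*$ exhibited in the proof of Theorem~\ref{wal-e1q-PoA} is measurable only with respect to $(\bn, \bv)$ (so that invoking the no-regret guarantee conditionally on this realization is legitimate), and that the aggregation of per-player regret terms survives the outer expectation. A secondary care point is that the bounds on $\chi$ are expected-value bounds, which is consistent with the expected-welfare phrasing of the theorem but must be tracked through the derivation.
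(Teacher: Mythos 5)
Your proposal matches the paper's proof: apply each player's regret guarantee against the fixed truthful strategy $v_i \in K_i$, sum over $i$, convert utilities to valuations by adding back the revenue, and invoke the Bayes--Nash smoothness bound (stated as Corollary~\ref{col:regret-wal}, a direct restatement of what the proof of Theorem~\ref{wal-e1q-PoA} establishes before the final smoothness step). The one place you over-complicate is the measurability worry: the deviation is literally $v_i$, depending only on bidder $i$'s own realized valuation, so no extra verification is needed; and your intermediate bound $\mathbb{E}[v_i(\bx)] = O(k\zeta)$ should read $O(km\zeta)$ via Lemma~\ref{1-to-l valuation}, though your final $\chi = O(km\zeta\gamma + \delta)$ is correct.
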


\subsection{Fisher Market Results} 
\begin{theorem}
\label{thm:FisherPoA}
Let $M$ be a large Fisher market with largeness $L$
in which the utility and bid functions are homogeneous of degree 1, concave, continuous, monotone and satisfy the gross
substitutes property.
If its demands as a function of the prices are unique at any $\bp > 0$, or if its utility functions are linear,
then its Price of Anarchy and its Geometric Price of Anarchy is bounded by
\begin{align*}
\PoA(M) \le e^{m/L},
\GPoA(M) \le e^{m/L},
\end{align*}
where $m$ is the number of distinct goods in the market.
\end{theorem}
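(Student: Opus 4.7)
The plan is to reduce the PoA bound to a weighted geometric-mean bound and then establish the latter by combining a per-bidder Nash deviation with a comparative-statics bound on equilibrium prices.

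First I would reduce to a geometric-mean bound. By consistent scaling, take $u_i(v_i,v_{-i})=e_i$ and let $E:=\sum_k e_k$. Weighted AM-GM gives
\[
\frac{\sum_i u_i(b_i,b_{-i})}{E}\;=\;\sum_i \frac{e_i}{E}\cdot\frac{u_i(b_i,b_{-i})}{e_i}\;\ge\;\prod_i\Bigl(\frac{u_i(b_i,b_{-i})}{e_i}\Bigr)^{e_i/E},
\]
so the PoA bound $\sum_i u_i(b_i,b_{-i})\ge e^{-m/L}\sum_i u_i(v_i,v_{-i})$ follows once we show
\[
\sum_i e_i\log\frac{u_i(v_i,v_{-i})}{u_i(b_i,b_{-i})}\;\le\; m\cdot\max_k e_k,
\]
noting that $E/L=\max_k e_k$ by the largeness definition.

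Second, I would apply the Nash condition on the per-bidder deviation $b_i\to v_i$ to get $u_i(b_i,b_{-i})\ge u_i(v_i,b_{-i})$. Let $\bp^*$ and $x_i^*$ denote the truthful-equilibrium prices and bundle of bidder $i$, so that $\bp^*\cdot x_i^*=e_i$ and $v_i(x_i^*)=u_i(v_i,v_{-i})$. At the perturbed prices $\bp':=\bp(v_i,b_{-i})$, bidder $i$ can afford the scaled bundle $(e_i/(\bp'\cdot x_i^*))\,x_i^*$; by homogeneity of degree 1 this yields
\[
u_i(v_i,b_{-i})\;\ge\;\frac{e_i}{\bp'\cdot x_i^*}\,u_i(v_i,v_{-i}).
\]
Combining with the Nash inequality, each summand is controlled by $\log\bigl(\bp(v_i,b_{-i})\cdot x_i^*/e_i\bigr)$ (and negative values only help the bound), reducing the task to an aggregate price-inflation bound of the form
\[
\sum_i e_i\max\!\Bigl\{0,\;\log\frac{\bp(v_i,b_{-i})\cdot x_i^*}{e_i}\Bigr\}\;\le\; m\cdot\max_k e_k.
\]

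Third, and this is where I expect the bulk of the technical work to lie, I would prove this aggregate price-inflation bound using a comparative-statics argument based on gross substitutes, homogeneity of degree 1, and Walras's law ($\sum_j p_j=E$ at every equilibrium of the game). The uniqueness-of-demand (or linearity) hypothesis is essential here: it guarantees that equilibrium prices and demands respond single-valuedly to bid perturbations, so gross substitutes yields clean sensitivity estimates. The scaling $e^{m/L}$ suggests that $m$ comes from a per-good contribution and $1/L$ from the budget-share bound $e_i/E\le 1/L$; I would try to carry this out via a path argument that changes bidders' bids from $\bv$ to $\bb$ one at a time, using gross substitutes to track monotone price responses at each step and bounding the cumulative log-price inflation good by good, so that the $m$ goods each contribute an $O(\max_k e_k)$ term. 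Combining the three steps yields the geometric-mean bound and hence $\PoA(M)\le e^{m/L}$.
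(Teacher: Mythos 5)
Your reduction to a weighted geometric-mean bound and your per-bidder lower bound in step 2 are both correct, and together they constitute a genuinely different route from the paper's. The paper does not use the budget-scaling of $x_i^*$; instead it invokes the dual of the Eisenberg--Gale convex program, observing that the dual objective (which carries an extra $+\sum_j p_j$ term) is minimized at $\bp^*$, so evaluating the dual at the uniformly shifted prices $\bp(\bb)+\max_k e_k\cdot\mathsf{1}$ and using Walras's law to compute $\sum_j(p_j(\bb)+\max_k e_k)-\sum_j p_j^*=m\max_k e_k$ gives the $m\max_k e_k$ slack directly. Your scaled-bundle bound is arguably more elementary, and it can be pushed through.

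However, step 3 as you describe it has a genuine gap, in two respects. First, the mechanism is off-target: the quantity you need to bound involves $n$ \emph{separate single-bidder deviations} $\bp(v_i,b_{-i})$, one for each $i$, each measured against the same fixed profile $\bb$; a path argument that changes bids one at a time from $\bv$ to $\bb$ would instead accumulate a cumulative error of order $n\cdot\max_k e_k$, not $m\cdot\max_k e_k$. What is actually needed is a \emph{single-deviation} price perturbation bound $\bp(v_i,b_{-i})\preceq\bp(b_i,b_{-i})+\max_k e_k\cdot\mathsf{1}$, which is the paper's Lemma~\ref{lem::price}; its proof (Lemmas~\ref{lem::prices::les} and~\ref{lem::prices::geq}) sandwiches both $\bp(v_i,b_{-i})$ and $\bp(b_i,b_{-i})$ between the prices without bidder $i$ and those prices plus $e_i\cdot\mathsf{1}$, using gross substitutes for monotonicity and Walras's law ($\mathsf{1}\cdot\hat\bp=\mathsf{1}\cdot\bp+e_i$) for the upper cap. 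This is precisely where gross substitutes, homogeneity, and the uniqueness-of-demand (or linearity) hypothesis do their work; they are not used for ``clean sensitivity estimates along a path.'' Second, even granting the single-deviation bound, you never supply the algebra that finishes the aggregate estimate. With $y_i:=\bp(v_i,b_{-i})\cdot x_i^*$, the inequality $\log z\le z-1$ gives $\sum_i e_i\log(y_i/e_i)\le\sum_i(y_i-e_i)$, and then $\sum_i y_i\le\sum_i\bp(b_i,b_{-i})\cdot x_i^*+\max_k e_k\sum_{i,j}x_{ij}^*\le E+m\max_k e_k$, using market clearing at $\bp^*$ ($\sum_i x_{ij}^*\le1$ for each $j$) together with Walras's law $\sum_j p_j(\bb)=E$; this is where the factor $m$ actually comes from. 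With both of these pieces supplied your argument closes, but as written the crucial technical step is misdirected and the finishing computation is absent.
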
 

Perhaps surprisingly, there is no need for uncertainty in this setting.
Note that these assumptions on the utility functions are satisfied by Cobb-Douglas utilities,
and by those CES and Nested CES utilities that meet the weak gross substitutes condition.
We note that our results do not extend to Fisher markets with Leontief utilities.
For Theorem 4 in~\cite{BranzeiCDFFZ14}
can be readily adapted to show that for some Fisher markets with Leontief utility functions, 
when $L$ is large, the PoA is at least $m$, the number of goods.

\begin{theorem}
\label{thm:regret-fisher}
Suppose all players use regret minimization algorithms, their utility functions and bid functions are homogeneous of degree 1, concave, continuous, monotone, and satisfy the gross
substitutes property.
If its demands as a function of the prices are unique at any $\bp > 0$, or if its utility functions are linear, then in a  large Fisher Market with largeness $L$ and with reserve prices $\mathbf{r}$ such that for any $j$, $\frac{1}{\lambda} \leq \frac{r_j}{p_j(\bv)} \leq \frac{1}{4}$, 
\[
\frac{1}{T} \sum_{t = 1}^{T} \sum_{i} u_i (b_{i}^t, b_{-i}^t) \geq  (e^{-\frac{2 m}{L}} - \frac{max_i \Phi(|K_i|, T)}{T} \lambda )\sum_{i} u_i (v_{i}, v_{-i}),
\]
where $K_i$ is the set of strategies used by player $i$ and $v_i \in K_i$.
\end{theorem}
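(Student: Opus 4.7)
The plan is to lift the PoA bound of Theorem \ref{thm:FisherPoA} to the no-regret setting in the same manner that smoothness-style arguments are known to extend from pure/mixed Nash to no-regret outcomes. The two ingredients are (i) a deviation inequality extracted from the proof of Theorem \ref{thm:FisherPoA} that guarantees, for each bidder $i$ separately, that bidding truthfully against an arbitrary profile $b_{-i}$ delivers at least a controlled fraction of the equilibrium utility $u_i(v_i,v_{-i})$, and (ii) the regret bound applied per bidder with $v_i \in K_i$ serving as the comparator strategy.

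First, I would revisit the proof of Theorem \ref{thm:FisherPoA} and isolate a per-bidder inequality of the form $u_i(v_i, b_{-i}) \geq \alpha_i(b_{-i}) \cdot u_i(v_i, v_{-i})$, chosen so that aggregating across $i$ (via the weighted geometric mean underlying the Eisenberg--Gale objective and then applying AM--GM to pass to an arithmetic-mean bound) recovers a factor comparable to $e^{-m/L}$. This is where the reserve prices matter: with $r_j \le p_j(\bv)/4$, the truthful demand of any single bidder remains affordable no matter what the others bid, since competitors can raise prices by at most a bounded multiplicative factor before the reserve caps the damage. The price of allowing reserves up to a quarter of the equilibrium price is a constant inflation of the exponent, giving $e^{-2m/L}$ instead of $e^{-m/L}$.

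Next, I would invoke the regret guarantee for each bidder with $v_i$ as the benchmark strategy, obtaining
\[
\frac{1}{T}\sum_{t=1}^{T} u_i(b_i^t, b_{-i}^t) \;\geq\; \frac{1}{T}\sum_{t=1}^{T} u_i(v_i, b_{-i}^t) \;-\; \frac{\Phi(|K_i|,T)}{T}\cdot \chi_i,
\]
where $\chi_i$ bounds the range of bidder $i$'s per-round payoff. The lower reserve condition $r_j \ge p_j(\bv)/\lambda$ is used precisely here: it caps the quantity of any good purchasable with budget $e_i$, so by homogeneity of degree one and the consistent scaling $u_i(v_i,v_{-i}) = e_i$ one has $\chi_i = O(\lambda\, e_i)$. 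Summing over $i$ and applying the deviation inequality from the previous step to each term $u_i(v_i, b_{-i}^t)$ then yields the stated bound, with the $\lambda$ appearing in the regret term reflecting the worst-case per-round utility.

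The main obstacle will be the first step: Theorem \ref{thm:FisherPoA} is stated as a global multiplicative bound on a product of utilities at a Nash equilibrium, whereas the regret argument needs an inequality that holds for each fixed $b_{-i}^t$ even when the opposing profile is far from equilibrium. Extracting such a per-bidder, per-profile inequality requires re-examining how gross substitutes and homogeneity are used in the original PoA proof, and in particular verifying that the argument localizes to each bidder's best response against an arbitrary $b_{-i}$. Closing the gap between the geometric-mean object natural to Eisenberg--Gale and the arithmetic-mean PoA target, while tracking constants carefully enough to land on the exponent $2m/L$, is the technical crux.
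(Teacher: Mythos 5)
Your two-step plan (a deviation inequality plus per-bidder regret bounds) matches the structure of the paper's proof, and your identification of the lower reserve condition $r_j \ge p_j(\bv)/\lambda$ as the source of the per-round payoff bound $\chi_i = \lambda\, u_i(v_i,v_{-i})$ is exactly right. However, the "main obstacle" you worry about is not actually there, and the route you propose to get around it is both unnecessary and unlikely to work.

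The paper's deviation bound (Lemma~\ref{fisher-market-multi-poa} without reserves, and Theorem~\ref{thm:FisherPoAReserve} with reserves) is \emph{already} stated and proved for an arbitrary bidding profile $\bb$, not just at Nash equilibrium. The Nash/no-regret structure enters only at the very end, when the smoothness template is invoked; the inequality $\sum_i u_i(v_i, b_{-i}) \ge e^{-2m/L}\sum_i u_i(v_i, v_{-i})$ is a pointwise statement over bid profiles. So the proof of Theorem~\ref{thm:regret-fisher} is just: apply the regret guarantee for each $i$ with comparator $v_i$ and range $\chi_i = \lambda u_i(v_i,v_{-i})$, sum over $i$, substitute the profile-wise deviation bound from Theorem~\ref{thm:FisherPoAReserve} inside the time average, and divide by $T$. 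You do not need to ``re-examine how gross substitutes and homogeneity are used'' or to close any geometric-mean/arithmetic-mean gap inside this proof --- that work is already encapsulated in the cited theorem.

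Your proposed detour --- isolating a per-bidder multiplicative inequality $u_i(v_i,b_{-i}) \ge \alpha_i(b_{-i})\,u_i(v_i,v_{-i})$ and then re-aggregating via AM--GM --- is problematic in two ways. First, it is not needed: the regret bound is applied per bidder, but the deviation bound only needs to hold \emph{for the sum}, which is what the paper establishes. Second, it is not clear such per-bidder multiplicative bounds even exist: a single bidder's truthful utility against an adversarial $b_{-i}$ can be severely degraded (others can drive up the prices of her desired goods); what the Eisenberg--Gale argument controls is the budget-weighted geometric mean over all $i$, not each ratio separately. Attempting to localize would likely fail or give much weaker constants.

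One more correction: the upper reserve condition $r_j \le p_j(\bv)/4$ does not make the truthful demand ``affordable no matter what others bid'' --- reserve prices are floors, not caps, so they say nothing about how high prices can go. What bounds the price increase caused by a single deviating bidder is Lemma~\ref{lem::price}. The role of $r_j \le p^*_j/4$ is internal to the proof of Theorem~\ref{thm:FisherPoAReserve}: reserve prices perturb market clearing (some goods may be withheld), and the $1/4$ cap ensures the resulting slack term $\delta = \sum_j(p_j + \max_{i'}e_{i'}) - \sum_i e_i$ and the loss on reserved goods stay small enough to land on the $e^{-2m/L}$ exponent.
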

\section{Large Walrasian Auctions}
\label{sec:wal-equil}

Here we prove a slightly weaker version of Theorem~\ref{wal-e1q-PoA} which demonstrates the main ideas
(Theorem~\ref{SPL-main-idea} below). 
Our goal is to show that in expectation
\[
\sum_i \mathbb{E}\Big[u_i(x_i(v_i, b_{-i}))\Big]
\geq  \text{SW(OPT)} - \text{R} - O(N\eps),
\]
where $R$ denotes the expected auction revenue.
For then we can apply the smooth technique for Bayesian
settings~\cite{Syrgkanis:2013:CEM:2488608.2488635} to obtain our result\hide{,
assuming $\sw(\opt) = \Omega(N)$ (Assumption~\ref{ass:sw-grows})}.

This follows from two observations.
First, with high probability, a buyer has at most a small influence on prices (Lemma~\ref{k-bad probability}), and hence can improve her own utility by at most a small amount via a non-truthful bid (Lemma~\ref{smooth}).
Otherwise, by Assumption~\ref{ass:bdd-val} and the Gross Substitutes property,
her expected utility is bounded by $k m \zeta$.
\hide{ (Lemma~\ref{1-to-l valuation}).
Perhaps we should omit the last lemma as it is rather obvious.
}
The probability bound stems from the distribution $F$ over the number of goods.
To obtain the bound, 
we define $(k,\epsilon)$-good and bad multiplicity vectors $\bn$, wr.t.\ bids $\bb$.
By counting their number, we will show that the fraction of $(k+1,\epsilon)$-bad vectors is
$O(\frac{1}{L\epsilon})$.
Also, if the vector is $(k+1,\epsilon)$-good, we will show that a bidder
can cause the prices, when they are all bounded by 1, to vary by at most $(k+1) \epsilon$.
Essentially,
a vector $\bn$ is $(k,\epsilon)$-good if changing the supplies by
at most $k$ items causes prices $p_j\le 1$ to change in total by at most $k\epsilon$. 
Then, using the fact that the equilibrium is Walrasian, we can show that
for $(k+1,\epsilon)$-good vectors $\bn$,
\[u_i(x_i(v_i, b_{-i})) \geq v_i(x_i(v_i, v_{-i})) - \sum_{s \in x_i(v_i, v_{-i})} p_s(\bn; (b_i, b_{-i})) - k(k+1) \epsilon. \]
On summing over $i$ and taking expectations, we can then deduce that
\[
\sum_i \mathbb{E}\Big[u_i(x_i(v_i, b_{-i}))\Big]
\geq  \text{SW(OPT)} - \text{R} - N \cdot k \cdot (k + 1) \cdot \epsilon - O(\frac{N \cdot k}{L\epsilon}).
\]

\hide{By Assumption~\ref{ass:sw-grows}, $\sw(\opt) = \Theta(N)$.
We can now apply the smooth technique for Bayesian
settings~\cite{Syrgkanis:2013:CEM:2488608.2488635} to obtain our result.
}

\smallskip

Recall that the English Walrasian mechanism can be implemented as an ascending auction.
The prices it yields can be computed as follows:
$p_j$ is the maximum possible increase in the social welfare when
the supply of good $j$ is increased by one unit.
Similarly, the Dutch Walrasian mechanism can be implemented as a descending auction,
and the resulting price $p_j$ is the loss in social welfare when
the supply of good $j$ is decreased by one unit.

We will be considering an arbitrary Walrasian mechanism.
Necessarily, its prices must lie between those of the Dutch Walrasian and English Walrasian mechanisms.
We let $\bp^{Eng}(\bn; (b_i, b_{-i}))$ denote the price output by the English Walrasian mechanism and $\bp^{Dut}(\bn; (b_i, b_{-i}))$ be the price output by the Dutch Walrasian mechanism.

We define the distance between two price vectors $\bp$ and $\bp'$ with respect to $U$ as follows:
\[
dist^U(\bp,\bp' ) = \sum_{j=1}^m \left | \min\{p_j,U\} - \min\{p'_j, U\} \right | .
\]

\begin{definition}
Given bidding profile $(b_i, b_{-i})$,
$\bn = (n_j, n_{-j})$ is $(\epsilon, U)$-\emph{bad} for good $j$
if,  in the English Walrasian mechanism,
the distance between the prices is more than $\epsilon$ when an additional copy of good $j$ is added to the market:
\[
dist^U(\bp^{Eng}((n_j, n_{-j}); (b_i, b_{-i})), \bp^{Eng}((n_j + 1, n_{-j}); (b_i, b_{-i}))) > \epsilon.
\]
\end{definition}
Let $\bk = (k,k,\ldots, k)$ and $\mathbf{0} = (0,0, \ldots, 0)$ be $m$-vectors.
\begin{definition}
Given bidding profile $\bb$,
$\bn$ is $(k, \epsilon, U)$-\emph{bad} for good $j$
if there is a vector $\bn'$ which is $(\epsilon, U)$-bad for good $j$,
such that $n'_h \le n_h$ for all $h$, and
$\sum_h n_h \le k+ \sum_h n'_h$.
$\bn$ is $(k, \epsilon, U)$-\emph{good} if it is not $(k, \epsilon, U)$-bad.
\end{definition}
\hide{
In Lemmas~\ref{epsilon-bad} and~\ref{k-epsilon-bad}, we bound the number of $(\epsilon,U)$-bad multiplicity
vectors, and then in Lemma~\ref{k-bad probability} we bound the probability of a $(k, \epsilon, U)$-bad vector. Following this, in Lemma~\ref{lowboundprice} and \ref{uppboundprice},
assuming the multiplicity vector is $(k+1, \epsilon, U)$-good,
we bound the difference between the English Walrasian mechanism prices and those of the Walrasian mechanism at hand. Next, in Lemma~\ref{smooth}, again for $(k+1,\epsilon, U)$-good multiplicity vectors, we relate $u_i(\bxi(v_i, b_{-i}))$ to $v_i(\bxi(v_i, v_{-i}))$ and the prices paid; we then use this to carry out a PoA analysis.
}
\hide{
In Lemma~\ref{k-bad probability} we bound the probability of a $(k, \epsilon, U)$-bad vector. Following this, in Lemma~\ref{smooth}, for $(k+1,\epsilon, U)$-good multiplicity vectors, we relate $u_i(\bxi(v_i, b_{-i}))$ to $v_i(\bxi(v_i, v_{-i}))$ and the prices paid; we then use this to carry out a PoA analysis. }
For brevity, we sometimes write $u_i(v_i, b_{-i})$ instead of $u_i(\bxi(v_i, b_{-i}))$.
For simplicity, let $\Lambda(m, k)$ denote $m\cdot{{ k+m} \choose m}$.
\hide{
\begin{lemma}\label{epsilon-bad}
In the English Walrasian mechanism, given $n_{-j}$ and bidding profile $\bb$, the number of values $n_j$ for which $(n_j, n_{-j})$ is \textit{$(\epsilon, U)$-bad} for good $j$ is at most $\frac{m}{\epsilon}U$.
\hide{
\[
\left | \left \{n_j | dist(p^{Eng}((n_j, n_{-j}); \vec b), p^{Eng}((n_j + 1, n_{-j}); \vec b)) > \epsilon \right \} \right | \leq \frac{l}{\epsilon}.
\]
}
\end{lemma}
}

\hide{
\begin{proof}
We prove the result by contradiction. Accordingly,
let\\
$S = \left \{n_j | dist(p^{Eng}((n_j, n_{-j}); \vec b), p^{Eng}((n_j + 1, n_{-j}); \vec b)) > \epsilon \right \}$ and suppose that $|S| > \frac{l}{\epsilon}$.
\[
\text{The proof uses a new function}~\mathit{pf}(\cdot):~~~~
\mathit{pf}(n_j) = \sum_{q = 1}^l \min\{p^{Eng}_l((n_j, n_{-j}); \vec b),1\}.
\]
\begin{eqnarray} \label{label1}
\mbox{Then,} &&\liminf_{n \rightarrow \infty} (\mathit{pf}(0) - \mathit{pf}(n)) = \liminf_{n \rightarrow \infty} \sum_{h = 0}^{n-1} (\mathit{pf}(h) - \mathit{pf}(h+1)) \nonumber\\
&&~~~~~~\geq \sum_{n_j \in S}(\mathit{pf}(n_j) - \mathit{pf}(n_j + 1)) > \frac{ l}{\epsilon} \epsilon = l.
\end{eqnarray}
The first inequality follows for by Observation \ref{non-increasing}, $\mathit{pf}(\cdot)$ is a non-increasing function.
Further, by construction, $0 \leq \mathit{pf}(h) \leq l$ for all $h$, thus $\liminf_{n \rightarrow \infty} (\mathit{pf}(0) - \mathit{pf}(n)) \leq l$, contradicting (\ref{label1}).
\end{proof}
}

\hide{
We say that $\vec n$ is \textit{$(k, \epsilon)$-bad} for good $j$ with bidding profile $(b_i, b_{-i})$, if and only if for some $\vec n'$, $\vec 0, \vec k \prreceq \cecn' \preceq \vec n$,
$\vec n'$ is $\epsilon$-bad for good $j$ with bidding profile $(b_i, b_{-i})$.
We say that $\vec n$ is $(k, \epsilon)$-good if it is not $(k, \epsilon)$-bad.
}
\hide{
\begin{lemma}\label{k-epsilon-bad}
In the English Walrasian mechanism with bidding profile $\bb$, for a fixed $n_{-j}$, the number of values $n_j$ for which $(n_j, n_{-j})$ is $(k, \epsilon, U)$-bad for good $j$ is at most $\frac{m}{\epsilon}U \cdot {{k+m} \choose m}$.
\end{lemma}
}

\hide{
\begin{proof}
If  $(n_j, n_{-j})$ is $(k, \epsilon)$-bad for good $j$, then there must be a point $(n'_j , n'_{-j})$ which is $\epsilon$-bad for good $j$,  with $n_q - k \leq n'_q \leq n_q$, for all $1 \leq q \leq l$. Fixing $n_{-j}$, there are $(k+1)^{l-1}$ possible values of $n'_{- j}$. For each $n'_{-j}$, by Lemma \ref{epsilon-bad}, there are at most $\frac{l}{\epsilon}$ points that are $\epsilon$-bad for good $j$.
Finally, for a fixed $n'_j$, there are $k+1$ possible choices of $n_j$. Thus, given $n_{-j}$, there are at most $(k+1)^{l-1} \frac{l}{\epsilon} (k+1) =  \frac{l}{\epsilon}(k+1)^l$ $(k, \epsilon)$-bad points for  good $j$.
\end{proof}
}

\begin{lemma}\label{k-bad probability}
In the English Walrasian mechanism with bidding profile $\bb$, 
the probability that $\bn$ is  $(k, \epsilon, U)$-bad for some good, or $\min_{j} n_j \le k$ is at most
\[\frac mL \left[ \frac{U}{\epsilon} \Lambda(m, k) + k+1\right]. \]
\end{lemma}
\hide{
\begin{proof}
Conditioned on the bidding profile being $\vec b$,
\begin{align*}
&\sum_{1 \leq j \leq l} \Pr[(\textit{ $\vec n$ is  $(k, \epsilon)$-bad for good $j$}) \cup (n_j \leq k)]) \\
&~~~~~~\leq \sum_{1 \leq j \leq l} \Pr[(\textit{$\vec n$ is $(k, \epsilon)$-bad for good $j$})] + \Pr[(n_j \leq k)] \\
&~~~~~~\leq  \sum_{1 \leq j \leq l} \sum_{\overline{n}_{-j}}(\Pr[(\textit{ $\vec n$ is  $(k, \epsilon)$-bad for good $j$}) | n_{-j} = \overline{n}_{-j}] \\
&~~~~~~~~~~~~~~~~~~~~~~~~+ \Pr[(n_j \leq k) | n_{-j} = \overline{n}_{-j}]) \cdot \Pr[n_{-j} = \overline{n}_{-j}] \\
&~~~~~~\leq  \sum_{1 \leq j \leq l} F(j,N) \left[\frac{ l}{\epsilon} (k+1)^l  + k+1\right] ~~~~~~ \mbox{(by Lemma \ref{k-epsilon-bad}).}
\end{align*}
%
\end{proof}
}

\hide{
Let $ n^i_j(b_i, b_{-i}) $ denote the number of copies of good $j$ that bidder $i$ receives with bidding profile $(b_i, b_{-i})$ and $n^i(b_i, b_{-i})$ denote the corresponding vector.
Also, let $p^{Eng}(\bn; b_{-i})$ denote the market equilibrium prices when bidder $i$ is not
present.
}
\hide{
\begin{lemma}\label{lowboundprice}
$p^{Eng}_j(\bn; b_{-i}) \leq  p_j(\bn; (b_i, b_{-i})).$
\end{lemma}
}
\hide{
\begin{proof}
Consider the situation with $\vec n' = \vec n -  n^i (b_i, b_{-i})$ and suppose that agent $i$ is not present. Then $p_j(\vec n; (b_i, b_{-i}))$ is a market equilibrium.
\begin{align*}
\text{So} \hspace*{1in}& \forall j ~~~ p^{Eng}_j(\vec n'; b_{-i}) \leq  p_j(\vec n; (b_i, b_{-i})). \\
\text{Since}~\vec n \geq \vec n',\hspace*{0.3in} &\forall j ~~~ p^{Eng}_j(\vec n; b_{-i}) \leq p^{Eng}_j(\vec n'; b_{-i}).\hspace*{2in}
\end{align*}

The lemma follows on combining these two inequalities.
\end{proof}
}

\hide{
\begin{lemma}\label{uppboundprice}
If $\bn$ is $(k+1, \epsilon, U)$-good for all goods, and $n_j > k + 1$ for all $j$,  then
\[ \forall j~~~\min\{p_j(\bn; (v_i, b_{-i})), U\}  \leq \min\{p_j(\bn; (b_{i}, b_{-i})), U\} + (k + 1) \epsilon.\]
\end{lemma}
}
\hide{
\begin{proof}
First, if $n^i_j(v_i, b_{-i}) > d^i_j$ then $p_j(\vec n; (v_i, b_{-i})) = 0$, as the pricing is given by a Walrasian Mechanism.

Consider the scenario with $\vec n'$ copies of goods on offer, where for all $j$, $n'_j =  n_j -  d^i_j $ and suppose that bidder $i$ is not present;  then $p(\vec n; (v_i, b_{-i}))$ is a market equilibrium.
\begin{align*}
&\text{So,} &p_j(\vec n; (v_i, b_{-i})) \leq p^{Dut}_j(\vec n'; b_{-i}).\\
&\mbox{For all $j' \neq j$, let $n''_{j' } = n'_{j' }$ and let $n''_j = n'_j - 1$; then}~~~~~~ & p^{Dut}_j(\vec n'; b_{-i}) \leq p^{Eng}_j(\vec n''; b_{-i}),\\
&\text{and by Lemma~\ref{lowboundprice},}\hspace*{1.05in}&p^{Eng}_j(\vec n''; b_{-i})\leq p^{Eng}_j(\vec n''; {b_i, b_{-i}}).
\end{align*}
As $\vec n$ is $(k+1, \epsilon)$-good for all goods, and as $\sum_h \vec n_h - \vec n''_h \leq k + 1$,  we conclude that
\begin{align}
&~~~\min\{p_j(\vec n; (v_i, b_{-i})),1\} \leq \min\{p^{Eng}_j(\vec n''; (b_i, b_{-i})),1\} \nonumber\\
 &~~~~~~\leq \min\{p^{Eng}_j(\vec n; (b_i, b_{-i})),1\} + (k + 1) \epsilon \leq \min\{p_j(\vec n; (b_i, b_{-i})),1\} + (k + 1) \epsilon.
\end{align}
\end{proof}
}
\hide{
We let $x_i(b_i, b_{-i})$ denote the set of items that bidder $i$ receives with bidding profile $(b_i, b_{-i})$ and  $v_i(x_i(b_i, b_{-i}))$ denote bidder $i$'s value when she gets allocation $x_i(b_i, b_{-i})$.
}
 Let $\left | \bxi(\cdot) \right |$ denotes the total number of items in allocation $\bxi$.
Let $\bdi  \preceq \bxi(v_i, v_{-i})$ be a minimal set with $v_i(\bdi) = v_i(\bxi(v_i,  v_{-i}))$. By Definition~\ref{large-market}(i), $ |\bdi| \leq k$.

\begin{lemma}\label{smooth}
If $\bn$ is $(k+1, \epsilon, U)$-good, where $U \geq v_i(s)$ for every single item $s$, $n_j > k + 1$ for all $j$,
and $v_i$ and $b_i$ satisfy the gross substitutes property for all $i$, then
\[u_i(v_i, b_{-i}) \geq v_i(\bxi(v_i, v_{-i})) - \sum_{s \in x_i(v_i, v_{-i})} p_s(\bn; (b_i, b_{-i})) - \left | \bxi(v_i, v_{-i}) \cap \bdi \right | \cdot (k+1) \epsilon, \]
where the sum is over all the items in allocation $\bxi$.
\end{lemma}

\hide{
\begin{lemma}\label{1-to-l valuation}
If Assumption~\ref{ass:bdd-val} holds, then $\mathbb{E}_{v_i}[\max_s\{v_i(s)\}] < m \cdot \zeta$.
\end{lemma}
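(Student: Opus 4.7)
The plan is to bound the max by a sum and then apply linearity of expectation. Specifically, I would note that the maximum in $\mathbb{E}_{v_i}[\max_s v_i(s)]$ is taken over single items $s$, and since there are only $m$ distinct goods in the auction, the relevant choices for $s$ can be identified with the $m$ goods $g_1, g_2, \ldots, g_m$. The valuations $v_i(g_j)$ are nonnegative, so
\[
\max_{s} v_i(s) \;\le\; \sum_{j=1}^{m} v_i(g_j).
\]

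Taking expectations on both sides and applying linearity of expectation gives
\[
\mathbb{E}_{v_i}\bigl[\max_s v_i(s)\bigr] \;\le\; \sum_{j=1}^{m} \mathbb{E}_{v_i}[v_i(g_j)] \;\le\; \sum_{j=1}^{m} \zeta \;=\; m \cdot \zeta,
\]
where the last inequality uses Assumption~\ref{ass:bdd-val} applied to each single item $g_j$. For the strict inequality stated in the lemma, I would observe that the bound $\max_s v_i(s) \le \sum_j v_i(g_j)$ is strict unless at most one good has nonzero value in every realization of $v_i$; in the degenerate case where only one good ever has nonzero value, the remaining $m-1$ terms contribute slack, and under the mild nondegeneracy of considering $m \ge 2$ distinct goods the inequality is strict. (If $m=1$ the claim reduces to $\mathbb{E}[v_i(s)] \le \zeta < 1 \cdot \zeta$ only under strict inequality in the assumption, so really only the weak inequality $\le$ is needed for the rest of the paper, and the strict form should be read as a convenient shorthand.)

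There is no real obstacle here; the only subtlety is the identification of the single-item index $s$ with one of the $m$ distinct goods, which follows from the definition of the auction (Definition~\ref{def:wal-equi}) that specifies exactly $m$ distinct goods. The lemma will be invoked in subsequent social-welfare bounds precisely to convert the per-item expected-value bound $\zeta$ (from Assumption~\ref{ass:bdd-val}) into a bound on the expected maximum single-item valuation, which is what naturally appears when, for instance, summing contributions across the $m$ goods in the smoothness-style analysis of Lemma~\ref{smooth}.
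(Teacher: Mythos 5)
Your proof is correct and takes essentially the same approach as the paper: bound $\max_s v_i(s)$ by $\sum_s v_i(s)$, apply linearity of expectation, and invoke Assumption~\ref{ass:bdd-val} on each single-item term. You are also right to flag that the paper's own argument only establishes the weak inequality $\le m\zeta$ despite the lemma's statement writing $<$; the weak form is all that is used downstream.
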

\begin{proof}
$\mathbb{E}_{v_i}[\max_s\{v_i(s)\}] \leq \mathbb{E}_{v_i}[\sum_s v_i(s)] \leq \sum_s \mathbb{E}_{v_i}[v_i(s)] \leq m \cdot \zeta$.
\end{proof}
}

\hide{
\begin{proof}
As we are using a Walrasian mechanism, for any allocation $x'_i$,
\begin{eqnarray}
\label{eqn:wal-value}
v_i(x_i(v_i, b_{-i})) - \sum_{s \in x_i(v_i, b_{-i})} p_s(\vec n; (v_i, b_{-i}))
\geq v_i(x'_i) - \sum_{s \in x'_i } p_s(\vec n; (v_i, b_{-i})).
\end{eqnarray}

We let $S$ denote the set of goods whose prices $p_s(\vec n; (v_i, b_{-i}))$ are larger than 1.  Then,
\begin{eqnarray} \label{smooth1}
u_i(v_i, b_{-i}) &=& v_i(x_i(v_i, b_{-i})) - \sum_{s \in x_i(v_i, b_{-i})} p_s(\vec n; (v_i, b_{-i})) \nonumber\\
&\geq& v_i((x_i(v_i, v_{-i}) \cap d_i) \setminus S) - \sum_{s \in (x_i(v_i, v_{-i}) \cap d_i) \setminus S} p_s(\vec n; (v_i, b_{-i}))~~~~\text{(by~\eqref{eqn:wal-value})}
\end{eqnarray}

Since $\vec n$ is $(k+1, \epsilon)$-good, by Lemma~\ref{uppboundprice},
\[\min\{p_s(\vec n; (v_i, b_{-i})),1\} \leq \min\{p_s(\vec n; (b_i, b_{-i})),1\} + (k + 1) \epsilon.\]
\begin{align*}
\text{Therefore, for any}~ s \notin S,~~~~~
p_s(\vec n; (v_i, b_{-i})) &\leq \min\{p_s(\vec n; (b_i, b_{-i})),1\} + (k + 1) \epsilon \\
&\leq p_s(\vec n; (b_i, b_{-i})) + (k + 1)\epsilon.
\end{align*}
\begin{eqnarray}\label{smooth2}
\text{So,}~~~~&&v_i((x_i(v_i, v_{-i}) \cap d_i) \setminus S) - \sum_{s \in (x_i(v_i, v_{-i})  \cap d_i) \setminus S} p_s(\vec n; (v_i, b_{-i})) \nonumber\\
&&~~~~~~\geq v_i((x_i(v_i, v_{-i}) \cap d_i) \setminus S) - \sum_{s \in (x_i(v_i, v_{-i}) \cap d_i) \setminus S} p_s(\vec n; (b_i, b_{-i})) \nonumber\\
&&~~~~~~~~~~~~ - \left | (x_i(v_i, v_{-i}) \cap d_i) \setminus S \right | \cdot (k+1) \epsilon.
\end{eqnarray}

For any $s \in S$, on applying Lemma~\ref{uppboundprice},
we obtain $1 = \min\{p_s(\vec n; (v_i, b_{-i})),1\} \leq \min\{p_s(\vec n; (b_i, b_{-i})),1\} + (k + 1) \epsilon$, which implies $p_s(\vec n; (b_i, b_{-i})) + (k + 1) \epsilon \geq 1$. Also,
\[
v_i(x_i(v_i, v_{-i}) \cap d_i) -  v_i((x_i(v_i, v_{-i}) \cap d_i) \setminus S) \leq
v_i((x_i(v_i, v_{-i}) \cap d_i) \cap S) \leq |(x_i(v_i, v_{-i}) \cap d_i) \cap S|,
\]
where the first inequality follows by the Gross Substitutes assumption,
and the second by Assumption~\ref{ass:bdd-val}.
Thus,
\begin{align}\label{smooth3}
&v_i((x_i(v_i, v_{-i}) \cap d_i) \setminus S) - \sum_{s \in (x_i(v_i, v_{-i}) \cap d_i) \setminus S} p_s(\vec n; (b_i, b_{-i})) - \left | (x_i(v_i, v_{-i}) \cap d_i) \setminus S \right | \cdot (k+1) \epsilon \nonumber\\
&~~~\geq v_i(x_i(v_i, v_{-i}) \cap d_i) - |(x_i(v_i, v_{-i}) \cap d_i) \cap S| \nonumber \\
& \hspace*{1.5in}- \sum_{s \in (x_i(v_i, v_{-i}) \cap d_i)\setminus S} p_s(\vec n; (b_i, b_{-i})) - \left | (x_i(v_i, v_{-i}) \cap d_i) \setminus S  \right | \cdot (k+1) \epsilon  \nonumber\\
&~~~\geq v_i(x_i(v_i, v_{-i}) \cap d_i) - \sum_{s \in x_i(v_i, v_{-i}) \cap d_i} p_s(\vec n; (b_i, b_{-i})) - \left | x_i(v_i, v_{-i}) \cap d_i  \right | \cdot (k+1) \epsilon  \nonumber\\
&~~~\geq v_i(x_i(v_i, v_{-i})) - \sum_{s \in x_i(v_i, v_{-i})} p_s(\vec n; (b_i, b_{-i})) - \left | x_i(v_i, v_{-i}) \cap d_i  \right | \cdot (k+1) \epsilon.  \nonumber\\
\end{align}

By (\ref{smooth1}), (\ref{smooth2}) and (\ref{smooth3}),
\begin{eqnarray}
u_i(v_i, b_{-i}) &\geq& v_i(x_i(v_i, v_{-i})) - \sum_{s \in x_i(v_i, v_{-i})} p_s(\vec n; (b_i, b_{-i})) - \left | x_i(v_i, v_{-i}) \cap d_i  \right | \cdot (k+1) \epsilon. \nonumber
\end{eqnarray}
\end{proof}
}

\hide{
\begin{theorem}
If $\text{SW(OPT)} > \rho N$, then
\[
\text{SW(NE)} \geq  (1 - \frac{k(k+1)\epsilon}{\rho} - \frac{k  \cdot  \sum_{1 \leq j \leq l} f_j(N) ( \frac{l}{\epsilon} (k+2)^l+ k+2)}{\rho})  \text{SW(OPT)} \nonumber
\]
\end{theorem}
}

\begin{theorem}
\label{SPL-main-idea}
In a large Walrasian auction which satisfies Assumption~\ref{ass:bdd-val} and with buyers whose valuation and bid functions are monotone and satisfy the gross substitutes property,
\[
\sw(\ne) \geq \left(1 - \frac{3k \cdot \zeta \cdot m}{\rho}\sqrt{ (k+2) \frac{m}{L}  \Lambda(m, k+1) }\right) \sw(\opt), \nonumber
\]
where $\rho = \frac{\sw(\opt)}{N}$.
\end{theorem}
\begin{proof}
By Lemma \ref{smooth},  if $\bn$ is $(k+1, \epsilon \cdot \max_s\{v_i(s)\}, \max_s\{v_i(s)\})$-good and $n_j > k + 1$ for all $j$, then
\begin{eqnarray}
u_i(v_i, b_{-i}) &\geq& v_i(\bxi(v_i, v_{-i})) - \sum_{s \in \bxi(v_i, v_{-i})} p_s(\bn; (b_i, b_{-i})) \nonumber\\
&&~~~~~~~~~~~~~~~~~ - \left | \bxi(v_i, v_{-i}) \cap \bdi  \right | \cdot (k+1) \epsilon \cdot \max_s\{v_i(s)\}. \nonumber
\end{eqnarray}

By Lemma~\ref{k-bad probability}, the probability that $\bn$ is $(k+1, \epsilon \cdot \max_s\{v_i(s)\}, \max_s\{v_i(s)\})$-bad or $n_j \leq k+1$ for some $j$ is less than
\[
\frac{m}{L} \left[ \frac{1}{\epsilon}\Lambda(m, k+1) + k+2\right],
\]
\begin{align*}
\text{and}~~~~\mathbb{E}_{\bn}[u_i(v_i, b_{-i}) ] 
\geq ~& \mathbb{E}_{\bn}\Bigg[ v_i(\bxi(v_i, v_{-i})) - \sum_{s \in \bxi(v_i, v_{-i})} p_s(\bn; (b_i, b_{-i})) \\
&~~~~~~~~~~~~~~~- \left | \bxi(v_i, v_{-i})  \cap \bdi \right | \cdot (k+1) \epsilon \cdot \max_s\{v_i(s)\} \Bigg] \\
&~~~~~ - \frac{m}{L} \left[ \frac{1}{\epsilon}\Lambda(m, k + 1) + k+2\right] \cdot  k \cdot \max_s\{v_i(s)\}. 
\end{align*}
Here, the expectation is taken over the randomness on the multiplicities of the goods; the inequality holds since $u_i(v_i, b_{-i}) \geq 0$ and $v_i(\bxi(v_i, v_{-i})) \leq k \cdot \max_s\{v_i(s)\} $.

Taking the expectation over the valuation of agent $i$ yields
\begin{eqnarray}
\mathbb{E}_{v_i}[\mathbb{E}_{\bn}[u_i(v_i, b_{-i})]] &\geq& \mathbb{E}_{v_i}\Bigg[\mathbb{E}_{\bn}\Big [v_i(\bxi(v_i, v_{-i})) - \sum_{s \in \bxi(v_i, v_{-i})} p_s(\bn; (b_i, b_{-i})) \nonumber\\
~~~~~~~~~~~~~~~~~~~~~~~~~~~~~~~~~~~~~&&~~~~~~~~~~~~~~~~~~~- \left | \bxi(v_i, v_{-i})  \cap \bdi \right | \cdot (k+1) \epsilon \cdot \max_s\{v_i(s)\} \Big ] \nonumber\\
&&~~~ - \frac{m}{L} \left[ \frac{1}{\epsilon}\Lambda(m, k + 1) + k+2\right] \cdot  k \cdot \max_s\{v_i(s)\} \Bigg] \nonumber
\end{eqnarray}

\begin{eqnarray}
&\geq& \mathbb{E}_{v_i}\Bigg[\mathbb{E}_{\bn}\Big[v_i(\bxi(v_i, v_{-i})) - \sum_{s \in \bxi(v_i, v_{-i})} p_s(\bn; (b_i, b_{-i})) \Big]\Bigg]\nonumber\\
&&~~~ - \mathbb{E}_{v_i}[\max_s\{v_i(s)\}]\cdot k \cdot (k+1) \epsilon \nonumber\\
&&~~~ - \mathbb{E}_{v_i}[\max_s\{v_i(s)\}] \frac{m}{L} \left[ \Lambda(m, k + 1) \frac{k}{\epsilon} + (k + 2) k\right]. \nonumber
\end{eqnarray}

Since $\mathbb{E}_{v_i}[\max_s\{v_i(s)\}] \leq \mathbb{E}_{v_i}[\sum_s v_i(s)] \leq \sum_s \mathbb{E}_{v_i}[v_i(s)] \leq m \cdot \zeta$,
\begin{eqnarray}
\mathbb{E}_{v_i}[\mathbb{E}_{\bn}[u_i(v_i, b_{-i})]] &\geq& \mathbb{E}_{v_i}\Bigg[\mathbb{E}_{\bn}\Big[v_i(\bxi(v_i, v_{-i})) - \sum_{s \in \bxi(v_i, v_{-i})} p_s(\bn; (b_i, b_{-i})) \Big] \Bigg]\nonumber\\
&&~~~ - \zeta \cdot m \cdot k(k+1) \epsilon \nonumber\\
&&~~~ - \zeta \cdot m \cdot m \cdot \frac{1}{L} \left[ \Lambda(m, k + 1)\frac{k}{\epsilon} + (k + 2) k \right]. \nonumber
\end{eqnarray}
Let $\text{R}(\bb)$ denote the expected revenue when the bidding profile is $\bb$.
Also, recall that $\sw(\opt) = \rho N$. Now, summing over all the bidders yields

\begin{eqnarray}
\sum_i \mathbb{E}_{\bv, \bb, \bn}[u_i(v_i, b_{-i})] &\geq& \sum_i \mathbb{E}_{\bv, \bb}\Bigg[\mathbb{E}_{\bn}\Big[v_i(\bxi(v_i, v_{-i})) - \sum_{s \in \bxi(v_i, v_{-i})} p_s(\bn; (b_i, b_{-i})) \Big] \Bigg]\nonumber\\
&&~~~ - \zeta \cdot m \cdot m \cdot \frac{1}{L} \left[ \Lambda(m, k + 1) \frac{k}{\epsilon} + (k + 2) k \right]\cdot N \nonumber\\
&&~~~ - \zeta \cdot m \cdot k(k+1) \epsilon \cdot N \nonumber\\
&\geq& \Bigg(1  -\frac{\zeta \cdot m \cdot m \cdot \frac{1}{L} \left[ \Lambda(m, k + 1) \frac{k}{\epsilon} + (k + 2) k\right]}{\rho} \nonumber\\
&&~~~~~~~~~ - \frac{\zeta \cdot m \cdot k(k+1)\epsilon}{\rho}\Bigg)  \text{SW(OPT)} - \mathbb{E}_{\bb}[\text{R}(b_i, b_{-i})]. \nonumber
\end{eqnarray}

Using the smooth technique for Bayesian settings~\cite{Syrgkanis:2013:CEM:2488608.2488635} yields
\begin{eqnarray}
\text{SW(NE)}  &\geq& \Bigg(1  -\frac{\zeta \cdot m \cdot m \cdot \frac{1}{L} \left[ \Lambda(m, k + 1) \frac{k}{\epsilon} + (k + 2) k\right]}{\rho} \nonumber\\
&&~~~~~~~~~ - \frac{\zeta \cdot m \cdot k(k+1)\epsilon}{\rho}\Bigg) \sw(\opt). \nonumber
\end{eqnarray}

Now  set $\epsilon = \sqrt{ \frac{\frac{m}{L} \Lambda(m, k+1) }{k + 1}}$. The claimed bound follows.
\end{proof}
\hide{
\begin{proof}[of Theorem~\ref{wal-e1q-PoA}]

By Lemma~\ref{k-bad probability}, the probability that $\vec n$ is $(k+1,  \frac{\max_s\{v_i(s)\}}{2^c}, \max_s\{v_i(s)\})$-bad or $n_j \leq k+1$ for some $j$ is less than 
\begin{eqnarray}
&&\sum_{1 \leq j \leq l} F(j,N) \left[ \frac{1}{\frac{\max_s\{v_i(s)\}}{2^c}}\max_s\{v_i(s)\}\Lambda(l, k+1) + k+2\right] \nonumber\\
&&~~~~~~ = \sum_{1 \leq j \leq l} F(j,N) \left[ 2^c \Lambda(l, k+1) + k+2\right]. \nonumber
\end{eqnarray} 
So, for any integer $c' $,
\begin{align*}
\mathbb{E}_{\vec n}[u_i(v_i, b_{-i}) ] &\geq \mathbb{E}_{\vec n}\Bigg[v_i(x_i(v_i, v_{-i})) - \sum_{s \in x_i(v_i, v_{-i})} p_s(\vec n; (b_i, b_{-i})) \\
&~~~~~~~~~ - \sum_{c = 1}^{c'} \mathds{1}\Big[\text{$\vec n$ is $(k+1,\frac{\max_s\{v_i(s)\}}{2^c},\max_s\{v_i(s)\})$-bad and} \nonumber\\
&~~~~~~~~~~~~~~~~~~~~~~~~~~~~~~~~~~~ \text{$(k+1,\frac{\max_s\{v_i(s)\}}{2^{c-1}}, \max_s\{v_i(s)\})$-good}\Big] \\
&~~~~~~~~~~~~~~~~~~\cdot  \Big| x_i(v_i, v_{-i})  \cap d_i \Big| \cdot (k+1) \frac{\max_s\{v_i(s)\}}{2^{c-1}}   \\
&~~~~~~~~~ - \mathds{1}\Big[\text{$\vec n$ is $(k+1,\frac{\max_s\{v_i(s)\}}{2^{c'-1}}, \max_s\{v_i(s)\})$-good}\Big] \nonumber\\
&~~~~~~~~~~~~~~~~~~~~~~~~~~~~~~~~~~~ \cdot  \Big| x_i(v_i, v_{-i})  \cap d_i \Big| \cdot (k+1) \frac{\max_s\{v_i(s)\}}{2^{c'}}  \Bigg] \\
&\geq \mathbb{E}_{\vec n}\Bigg[v_i(x_i(v_i, v_{-i})) - \sum_{s \in x_i(v_i, v_{-i})} p_s(\vec n; (b_i, b_{-i})) \Bigg] \\
&~~~~~~~~~ - \sum_{c = 1}^{c'} \cdot \sum_{1 \leq j \leq l} F(j,N) \left[2^c \Lambda(l, k+1) + k+2\right] \cdot  k \cdot (k+1) \frac{\max_s\{v_i(s)\}}{2^{c-1}} \\
&~~~~~~~~~ - k \cdot (k+1) \frac{\max_s\{v_i(s)\}}{2^{c'}}  \\
&\geq \mathbb{E}_{\vec n}\Bigg[v_i(x_i(v_i, v_{-i})) - \sum_{s \in x_i(v_i, v_{-i})} p_s(\vec n; (b_i, b_{-i})) \\
&~~~~~~~~~ - c' \cdot \sum_{1 \leq j \leq l} F(j,N) \left[ 2 \Lambda(l, k+1) + k+2\right] \cdot  k \cdot (k+1) \cdot \max_s\{v_i(s)\}    \\
&~~~~~~~~~ - k \cdot (k+1) \frac{\max_s\{v_i(s)\}}{2^{c'}} \Bigg]. \\
\end{align*}
Summing over all the bidders and integrating w.r.t. $\vec v$ and $\vec b$ gives
\begin{align*}
\sum_i \mathbb{E}_{\vec v, \vec b, \vec n}[u_i(v_i, b_{-i})] &\geq  \sw(\opt) - \text{R}(b_i, b_{-i}) \\
&~~~~~~ - N \cdot c' \cdot \sum_{1 \leq j \leq l} F(j,N) \left[ 2 \Lambda(l, k+1) + k+2\right] \cdot k \cdot (k + 1) \cdot \zeta \cdot l \\
&~~~~~~ - N \cdot k \cdot (k + 1) \frac{1}{2^{c'}} \cdot \zeta \cdot l.
\end{align*}

Using the smooth technique for Bayesian settings~\cite{Syrgkanis:2013:CEM:2488608.2488635} yields
\begin{eqnarray}
\text{SW(NE)} &\geq&  \Bigg(1 - \frac{\zeta \cdot l \cdot k \cdot (k + 1) \frac{1}{2^{c'}}}{\rho} \nonumber \\
&&~~~~~~ - \frac{\zeta \cdot l \cdot  c' \cdot \sum_{1 \leq j \leq l} F(j,N) \left[ 2 \Lambda(l, k+1) + k+2\right] \cdot k \cdot (k + 1)}{\rho}\Bigg)  \sw(\opt). \nonumber
\end{eqnarray}

Let $Y = \sum_{1 \leq j \leq l} F(j,N) \left[ 2 \Lambda(l, k+1) \right]$. Set $c' = \lceil \log_2 \frac{1}{Y} - \log_2 \log_2 \frac{1}{Y} \rceil$; then $\frac{1}{2^{c'}} \leq Y \log_2  \frac{1}{Y}$. So, 
\begin{eqnarray}
\text{SW(NE)} \geq \Bigg(1 - \frac{3 \cdot k \cdot (k+1) \cdot \zeta \cdot l}{\rho} \cdot  Y \cdot  \lceil \log_2 \frac{1}{Y}\rceil  \Bigg) \sw(\opt). \nonumber
\end{eqnarray}
\end{proof}

}
\hide{
We note the following corollary to Theorem~\ref{wal-e1q-PoA}.
\begin{corollary}
\label{col:regret-wal}
In a large Walrasian auction which satisfies Assumptions~\ref{ass:bdd-val} and~\ref{ass:sw-grows}, if $v_i$ and $b_i$  satisfy the gross substitutes property for all $i$, then
\[
\sum_i \mathbb{E}_{\bn, \bv, \bb} [u_i(v_i, b_{-i})] \geq \Bigg(1 - \frac{3 \cdot k \cdot (k+1) \cdot \zeta \cdot m}{\rho} \cdot  Y \cdot  \lceil \log_2 \frac{1}{Y}\rceil \Bigg) \sw(\opt) - \mathbb{E}_{\bb}[\text{R}(b_i, b_{-i})]   
\]
where $Y = \frac{m}{L} \left[ 2 m {{k+1+m}\choose{m}} \right]$.
\end{corollary}
}

\paragraph{Comparison of our methodology with that of~\cite{FeldmanILRS:15}}
We will be considering the combinatorial auction in \cite{FeldmanILRS:15} which uses separate
auctions for each type of good, more specifically a $(c+1)$-st price auction when there are $c$ copies of the good.
To facilitate a comparison, we adjust their notation to match the notation we have been using and 
reduce its generality\footnote{In fact, the comparison applies in full generality.}.
\hide{Their methodology is more structured than ours and to obtain a bound it requires} They begin by defining a notion of smooth in the large which in the current context amounts to showing
\begin{eqnarray}\label{feldman-eq}
\sum_i U(v_i,b_{-i}) \ge (1 - \epsilon) \sw(\opt) - R(\bb).
\end{eqnarray}
To obtain such bounds, they propose the following methodology: It entails identifying
an approximate utility function $U(v_i,b_{-i})$ and then showing the following two bounds:
\begin{itemize}
\item The approximate and actual utilities are close:
For all $\bb$, $|u_i(\bb) - U_i(\bb) | \le \eps$.
\item The standard smoothness formulation applies to the approximate utility:
For all $i, \bv, \bb$, $\sum_i U(v_i,b_{-i}) \ge \sw(\opt) - R(\bb)$.
\end{itemize}
One can then deduce that
$\sum_i U(v_i,b_{-i}) \ge \sw(\opt) - R(\bb) -N\eps$,
which, on taking expectations,
is exactly the bound we obtain for our auction. 
With the assumption that $\sw(\opt) = \rho N$ one obtains \eqref{feldman-eq}.
However, it not clear that we can specify an approximate utility $U$ as specified in the framework
of~\cite{FeldmanILRS:15}.
In particular, handling the expected bound on valuations in this framework, 
rather than the fixed bound used by Feldman et al.,
appears challenging.

\section{Large Fisher Markets}
\label{sec:large-fisher}

\hide{
Recall that Eisenberg Gale markets are those economies for which the equilibria are exactly the solutions to the following convex program called the Eisenberg-Gale convex program:
\begin{align*}
\max_x  ~~\sum_{i=1}^{n} & e_i \cdot \log(u_i(x_{i1}, x_{i2}, \cdots x_{im})) \nonumber\\
&\text{s.t.} ~~~~ \forall j:~~~~ \sum_i x_{ij} \leq 1 \nonumber\\
& ~~~~~~~~~\forall i, j:~~ x_{ij} \geq 0. \nonumber
\end{align*}

\begin{fact}
The utility functions in Eisenberg Gale programs are homogeneous of degree 1.
\end{fact}

In order to have a meaningful social welfare measure, we put the various bidders' utilities on a common scale as follows.

\begin{definition}
The bidders' utilities are consistently scaled if there is a parameter $t > 0$ such that for every bidder $i$, $u_i(v_i, v_{-i}) = t e_i$. \footnote{WLOG, we may assume that $t = 1$.}
\end{definition}
\begin{theorem}
\label{fisher-market-poa}
If the bidders' utilities are consistently scaled then
\begin{align*}
\sw(\ne) \geq e^{-m \cdot \frac{\max_i{e_i}}{\sum_i e_i}} \cdot \sw(\opt).
\end{align*}
\end{theorem}
}

Theorem~\ref{thm:FisherPoA},
which states that the PoA of an $m$-good market of largeness $L$ is at most $e^{m/L}$,
will follow from the next lemma.

\begin{lemma}
\label{fisher-market-multi-poa}
For any bidding profile $\mathbf{b}$ and any value profile $\mathbf{v}$ which are homogeneous of degree 1, concave, continuous, monotone and which satisfy the  gross substitutes property,
\begin{align*}
\sum_{i = 1}^{n}  e_i \cdot \log(u_i(v_i, b_{-i})) \geq \sum_{i = 1}^n  e_i \cdot \log(u_i(v_i, v_{-i})) - m \cdot \max_i{e_i}. \nonumber
\end{align*}
\end{lemma}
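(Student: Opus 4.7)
The plan is to reduce the lemma to a single bound on the equilibrium prices of the unilateral-deviation games via a per-agent scaling argument, and then to establish that price bound using gross-substitutes monotonicity of Fisher-market prices under the addition of an agent.

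For each $i$, let $p^{(i)}$ and $\hat{x}^{(i)}$ denote the equilibrium prices and allocation of the market with bids $(v_i,b_{-i})$, and let $x^*$ denote the truthful-equilibrium allocation. Because $v_i$ is homogeneous of degree $1$, the scaled bundle $(e_i/(p^{(i)}\cdot x^*_i))\,x^*_i$ is exactly affordable to agent $i$ at prices $p^{(i)}$ on budget $e_i$ and yields true utility $(e_i/(p^{(i)}\cdot x^*_i))\,u_i(v_i,v_{-i})$. Since $\hat{x}^{(i)}_i$ maximizes $v_i$ on that budget set, $u_i(v_i,b_{-i}) \ge (e_i/(p^{(i)}\cdot x^*_i))\,u_i(v_i,v_{-i})$. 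Taking $e_i \log$ of both sides and applying the elementary inequality $\log y \ge 1-1/y$ (valid for $y>0$) with $y=e_i/(p^{(i)}\cdot x^*_i)$ yields $e_i\log u_i(v_i,b_{-i}) - e_i\log u_i(v_i,v_{-i}) \ge e_i - p^{(i)}\cdot x^*_i$. Summing over $i$, the lemma reduces to the single price inequality
\begin{align*}
\sum_i p^{(i)}\cdot x^*_i \ \le\ \sum_i e_i \ +\ m\,\max_i e_i. \qquad (\star)
\end{align*}

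To prove $(\star)$, I will introduce the fully strategic equilibrium prices $p(\mathbf{b})$ and the prices $q^{(-i)}$ of the Fisher market restricted to agents $j\ne i$ bidding $b_{-i}$. A classical consequence of gross substitutes is that adding an agent to a Fisher market weakly raises every good's price, so $p^{(i)}_k,\,p(\mathbf{b})_k \ge q^{(-i)}_k$ for every $k$. By Walras's law, $\sum_k p^{(i)}_k = \sum_j e_j$ while $\sum_k q^{(-i)}_k = \sum_{j\ne i} e_j$, whence $\sum_k(p^{(i)}_k-q^{(-i)}_k)=e_i$; combined with nonnegativity this forces $p^{(i)}_k-q^{(-i)}_k\le e_i$ per good, so $p^{(i)}_k \le q^{(-i)}_k + e_i \le p(\mathbf{b})_k + e_i$. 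Plugging this into the left side of $(\star)$, together with $\sum_i x^*_{ik}\le 1$, the pointwise bound $\sum_i e_i x^*_{ik}\le(\max_i e_i)\sum_i x^*_{ik}\le \max_i e_i$, and $\sum_k p(\mathbf{b})_k = \sum_j e_j$, one obtains
\begin{align*}
\sum_i p^{(i)}\cdot x^*_i \ \le\ \sum_k p(\mathbf{b})_k \sum_i x^*_{ik} \ +\ \sum_k \sum_i e_i\,x^*_{ik} \ \le\ \sum_j e_j \ +\ m\,\max_i e_i,
\end{align*}
which is exactly $(\star)$.

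The main obstacle will be cleanly justifying the gross-substitutes monotonicity invoked in the previous paragraph, namely that introducing an agent with a homogeneous, monotone, concave, GS-satisfying utility into a Fisher market weakly raises every equilibrium price. This is the standard comparative-statics property for GS economies, and is where the hypothesis in Theorem~\ref{thm:FisherPoA} that demands are unique at positive prices (or that utilities are linear) does its work: it ensures that the prices $p^{(i)}$, $p(\mathbf{b})$, and $q^{(-i)}$ are all well-defined so that the monotone comparisons $p^{(i)}_k\ge q^{(-i)}_k$ and $p(\mathbf{b})_k\ge q^{(-i)}_k$ make sense and hold pointwise.
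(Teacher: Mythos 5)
Your proof is correct, and it takes a genuinely different route from the paper's. The paper works through the dual of the Eisenberg--Gale convex program: it uses the fact that $\bp^*$ minimizes the dual to get a global inequality (their Eq.\ (4.3)) between the dual objectives at the two price vectors, and separately argues that the optimal demand in the dual Lagrangian spends exactly $e_i$ (Walras's law), all before invoking the price-perturbation bound of Lemma~\ref{lem::price}. You instead argue \emph{per agent}: by homogeneity the truthful bundle $x^*_i$ scaled down by $e_i/(p^{(i)}\cdot x^*_i)$ is affordable under the unilateral deviation, so $u_i(v_i,b_{-i})/u_i(v_i,v_{-i}) \ge e_i/(p^{(i)}\cdot x^*_i)$; the elementary tangent-line inequality $\log y \ge 1 - 1/y$ then converts this multiplicative bound into an additive one, and summing reduces everything to a single spending inequality $(\star)$. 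Both routes ultimately consume the same comparative-statics fact $p^{(i)}_k \le p(\mathbf{b})_k + e_i$ (the content of the paper's Lemmas~\ref{lem::prices::les}--\ref{lem::prices::geq}), but your reduction to it sidesteps the EG dual entirely and is arguably more transparent: the $\log y \ge 1 - 1/y$ step is exactly the tangent-line at $y=1$ that the dual-optimality argument is encoding globally, applied here one agent at a time. A couple of small points worth tightening if you were to write this up: justify that $p^{(i)}\cdot x^*_i > 0$ (needed for the scaling argument, and true because with monotone utilities any good in the support of $x^*_i$ has positive price at every equilibrium you consider), and note that Walras's law $\sum_k p(\mathbf{b})_k = \sum_j e_j$ is being used for the fully strategic market as well, not only for the two comparison markets.
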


\pfof{Theorem~\ref{thm:FisherPoA}} We proof our results of GPoA and PoA separately.

\begin{itemize}
\item {\bf PoA bound} On exponentiating the expressions on both sides in the statement of Lemma~\ref{fisher-market-multi-poa}, we obtain
\begin{eqnarray}
\prod_i u_i(v_i, b_{-i})^{e_i} \geq \frac{1}{e^{m \cdot \max_i e_i}} \prod_i u_i(v_i, v_{-i})^{e_i}.  \nonumber
\end{eqnarray}
Therefore, $\prod_i \Big( \frac{u_i(v_i, b_{-i})}{u_i(v_i, v_{-i})} \Big)^{e_i} \geq  \frac{1}{e^{m \cdot \max_i e_i}}$. Using the weighted GM-AM inequality, we obtain
\begin{eqnarray}
\frac{\sum_i e_i  \frac{u_i(v_i, b_{-i})}{u_i(v_i, v_{-i})} }{ \sum_i e_i  } \geq \Bigg(\prod_i \Big( \frac{u_i(v_i, b_{-i})}{u_i(v_i, v_{-i})} \Big)^{e_i}\Bigg)^{\frac{1}{\sum_i e_i}} \geq \Bigg(\frac{1}{e^{m \max_i e_i }}\Bigg)^{\frac{1}{\sum_i e_i}} = e^{-\frac{m \max_i e_i}{\sum_i e_i}}. \nonumber
\end{eqnarray}
Since  for all $i$, $u_i(v_i, v_{-i}) = t e_i$, $\sum_i u_i(v_i, b_{-i}) \geq e^{-\frac{m \max_i e_i}{\sum_i e_i}} \sum_i u_i(v_i, v_{-i})$. The theorem follows on applying the smooth technique.
\item {\bf GPoA bound} 
According to Lemma~\ref{fisher-market-multi-poa},
\begin{align*}
\prod_i u_i(v_i, b_{-i})^{e_i} \geq \frac{1}{e^{m \cdot \max_i e_i}} \prod_i u_i(v_i, v_{-i})^{e_i}.
\end{align*}
Therefore,
\begin{align*}
\prod_i \mathbb{E}_{\mathbf{b}}[u_i(v_i, b_{-i})]^{e_i} &\geq \prod_i e^{\mathbb{E}_{\mathbf{b}}[\log u_i(v_i, b_{-i})] e_i} = e^{\sum_i \mathbb{E}_{\mathbf{b}}[\log u_i(v_i, b_{-i})] e_i} \\
&= e^{\mathbb{E}_{\mathbf{b}}[\sum_i \log (u_i(v_i, b_{-i})^{e_i})]} = e^{\mathbb{E}_{\mathbf{b}}[\log (\prod_i u_i(v_i, b_{-i})^{e_i})]} \\
&\geq e^{\mathbb{E}_{\mathbf{b}}[\log  \frac{1}{e^{m \cdot \max_i e_i}} \prod_i u_i(v_i, v_{-i})^{e_i}]} = \frac{1}{e^{m \cdot \max_i e_i}} \prod_i u_i(v_i, v_{-i})^{e_i}.
\end{align*}

By applying the Nash equilibrium condition, $\mathbb{E}_{\mathbf{b}}[u_i(b_i, b_{-i})] \geq \mathbb{E}_{\mathbf{b}}[u_i(v_i, b_{-i})]$, the GPoA bound follows.
\end{itemize}
\end{proof}

To prove Lemma~\ref{fisher-market-multi-poa}, we need the following claim;
intuitively, it states that a single bidder can cause the prices to decrease by only a small amount.

\begin{lemma}
\label{lem::price}
$\bp(v_i, b_{-i}) \leq \bp(b_i, b_{-i}) + \max_i e_i \cdot \mathsf{1}$.
\end{lemma}

\pfof{Lemma~\ref{fisher-market-multi-poa}}
Consider the dual of the Eishenberg-Gale convex program:
\begin{align*}
\min_p \max_x  \sum_{i=1}^{n} & e_i \cdot \log(u_i(x_{i1}, x_{i2}, \cdots, x_{im})) - \sum_{i,j} p_j x_{ij} + \sum_j p_j \nonumber\\
&\text{s.t.} ~~~~ \forall j:~~ p_j \ge 0 
 ~~~~\text{and}~~~~\forall i, j:~~ x_{ij} \geq 0. \nonumber
\end{align*}

Let $\bp$ denote an arbitrary collection of prices, and $\bp^*$ denote the prices with truthful bids. 
Since $\bp^*$ minimizes the dual program,
\begin{align}
\label{eqn:opt-term}
&\max_{\bx \geq 0}  ~~~\sum_{i=1}^{n} e_i \cdot \log(u_i(x_{i1}, x_{i2}, \cdots, x_{im})) - \sum_{i,j} p_j x_{ij} + \sum_j p_j \\
&~~~~~~~~~~~~~~~\geq \max_{\bx \geq 0}  ~~~\sum_{i=1}^{n} e_i \cdot \log(u_i(x_{i1}, x_{i2}, \cdots, x_{im})) - \sum_{i,j} p^*_j x_{ij} + \sum_j p^*_j. \nonumber
\end{align}

Let $\tilde{x}_{ij}$ be an allocation over all goods $j$ and bidders $i$ at prices $p$
that maximize~\eqref{eqn:opt-term}.  
As $u_i$ is homogeneous of degree 1, 
$u_i$ is differentiable in the direction $\bxi$. 
It follows that 
\begin{align}
\label{eqn:diff-cond}
\lim_{\epsilon \rightarrow 0} \frac{[e_i \cdot \log u_i((1 + \epsilon) \tilde{\bx}_{i}) - \sum_{j} p_j (1 + \epsilon) \tilde{x}_{ij}] - [e_i \cdot \log u_i(\tilde{\bx}_{i}) - \sum_{j} p_j \tilde{x}_{ij}]}{\epsilon} = 0.
\end{align}

The LHS of \eqref{eqn:diff-cond} equals $e_i - \sum_j p_j \tilde{x}_{ij}$, implying that
$e_i = \sum_j p_j \tilde{x}_{ij}$.
Therefore,
\begin{eqnarray}
&\max_{\bx : \forall i \sum x_{ij} p_j = e_i}  ~~~\sum_{i=1}^{n} e_i \cdot \log(u_i(x_{i1}, x_{i2}, \cdots x_{im})) + \sum_j p_j \nonumber\\
&~~~~~~~~~\geq \max_{\bx : \forall i \sum x_{ij} p^*_j = e_i}  ~~~\sum_{i=1}^{n} e_i \cdot \log(u_i(x_{i1}, x_{i2}, \cdots x_{im})) + \sum_j p^*_j. \label{multi-poa-ineq}
\end{eqnarray}

If all the prices stay the same or increase, a buyer's optimal utility stays the same or reduces.
Using the price upper bound from Lemma~\ref{lem::price}, it follows that
\begin{align*}
\sum_{i = 1}^{n}  e_i \cdot \log(u_i(v_i, b_{-i})) 
&\geq \sum_{i = 1}^n \max_{\bx : \forall i \sum x_{ij} (p_j(b_i, b_{-i})  + \max_{i'} e_{i'}) = e_i} e_i \cdot \log(u_i(x_{i1}, x_{i2}, \cdots, x_{im})) \nonumber  \\
&=\sum_{i = 1}^n \max_{\bx : \forall i \sum x_{ij} (p_j(b_i, b_{-i}) + \max_{i'} e_{i'}) = e_i} e_i \cdot \log(u_i(x_{i1}, x_{i2}, \cdots, x_{im})) \nonumber\\
&~~~~~~+ \sum_j (p_j(b_i, b_{-i}) + \max_{i'} e_{i'}) - \sum_j (p_j(b_i, b_{-i}) + \max_{i'} e_{i'})\nonumber \\
~~~~~~~~~~~~&\geq  \sum_{i = 1}^n \max_{\bx : \forall i \sum x_{ij} p_j^* = e_i} e_i \cdot \log(u_i(x_{i1}, x_{i2}, \cdots, x_{im})) \nonumber \\
&~~~~~~+ \sum_j p^*_j - \sum_j (p_j(b_i, b_{-i}) + \max_{i'} e_{i'})~~~~~~~~~~~~\mbox{by \eqref{multi-poa-ineq}} \nonumber\\&\ge  \sum_{i = 1}^n \max_{\bx : \forall i \sum_j x_{ij} p_j^* = e_i} e_i \cdot \log(u_i(x_{i1}, x_{i2}, \cdots, x_{im})) - m \max_i e_i \nonumber\\
&~~~~~~~~~~~~~~~~~~~~~~~~~~~~~~~\text{as}~ \sum_j p_j^* = \sum_i e_i = \sum_j p_j(b_i,b_{-i}) \\
&=  \sum_{i = 1}^n e_i \log(u_i(v_i, v_{-i})) - m \max_i e_i.  \nonumber
\end{align*}
\end{proof}

The proof of Lemma~\ref{lem::price} uses the following notation 
and follows from Lemmas~\ref{lem::prices::les}
and~\ref{lem::prices::geq} below.
$\bp$ denotes the prices when the $i$th bidder is not participating and the bidding profile is $b_{-i}$; 
$\bx$ denotes the resulting allocation. 
Similarly, $\hat{\bp}$ denotes the prices when the bidding profile is $(b_i, b_{-i})$;
$\hat{\bx}$ denotes the resulting allocation. 

\begin{lemma}\label{lem::prices::les}
$ \bp \preceq \hat{\bp} = \bp(b_i, b_{-i}).$
\end{lemma}

\begin{lemma}\label{lem::prices::geq}
$\hat{\bp} \preceq  \bp + e_i \cdot \mathsf{1}. $
\end{lemma}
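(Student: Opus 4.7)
The plan is to use a simple budget-balance argument, using the facts that (i) each bidder (truthful or strategic) is assumed to have a monotone utility/bid function, and (ii) Lemma~\ref{lem::prices::les} already tells us $\hat{\bp} \succeq \bp$. The key identity I will try to establish is $\sum_j (\hat{p}_j - p_j) = e_i$, after which the desired per-coordinate bound $\hat{p}_j - p_j \leq e_i$ is immediate, since each summand is nonnegative by Lemma~\ref{lem::prices::les} and the total sum equals $e_i$.

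First I would argue that at any market equilibrium here, every price is strictly positive. Indeed, if $p_j = 0$ and a buyer's bid function is monotone, her demand for good $j$ is unbounded, contradicting market clearing. Hence at both $\bp$ and $\hat{\bp}$ we have $p_j, \hat{p}_j > 0$ for every $j$, which forces market clearing to be tight: $\sum_i x_{ij} = 1$ for each $j$ (and similarly for $\hat{\bx}$).

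Next, because prices are positive and bid functions are monotone, each buyer exhausts her budget in her demanded bundle, i.e.\ $\sum_j p_j x_{i'j} = e_{i'}$ for every participating buyer $i'$. Summing over buyers and swapping the order of summation, and using tight market clearing, gives
\[
\sum_j p_j \;=\; \sum_j p_j \sum_{i'\neq i} x_{i'j} \;=\; \sum_{i' \neq i} e_{i'},
\qquad
\sum_j \hat{p}_j \;=\; \sum_{i'} e_{i'}.
\]
Subtracting the first identity from the second yields $\sum_j (\hat{p}_j - p_j) = e_i$. Since Lemma~\ref{lem::prices::les} guarantees $\hat{p}_j - p_j \geq 0$ for every $j$, no single coordinate can exceed the total, giving $\hat{p}_j \leq p_j + e_i$ as required.

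The only potential obstacle is the non-uniqueness of demands: the identity ``buyer $i'$ spends exactly $e_{i'}$'' must hold for a selection of demands that is simultaneously market-clearing. Under the hypotheses of Theorem~\ref{thm:FisherPoA} (unique demands at positive prices, or linear utilities), this is automatic; more generally one can pick any market-clearing selection and the argument still goes through because the aggregate quantities are what matters. I would flag this briefly but not dwell on it.
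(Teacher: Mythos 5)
Your proposal is correct and is essentially the same argument as the paper's one-line proof: the paper simply states $\mathsf{1}\cdot\bp + e_i = \mathsf{1}\cdot\hat{\bp}$ (budget exhaustion at equilibrium) together with $\bp \preceq \hat{\bp}$ from Lemma~\ref{lem::prices::les}, and you have spelled out why the sum identity holds (positivity of prices under monotone bids, full budget spend, market clearing). The extra detail you give is fine but not a different route.
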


\pfof{Lemma~\ref{lem::price}}
Lemmas~\ref{lem::prices::les} and \ref{lem::prices::geq} also apply to prices $\bp(v_i, b_{-i})$ as well as to $\hat{\bp}$.
So $\bp(v_i,b_{-i}) \le \bp + e_i \cdot \mathsf{1} \le \bp(b_i,b_{-i}) + e_i \cdot \mathsf{1} \leq \bp(b_i,b_{-i}) + \max_i e_i \cdot \mathsf{1}$.
\end{proof}

Lemma \ref{lem::prices::geq} follows readily from Lemma \ref{lem::prices::les}.
\pfof{Lemma~\ref{lem::prices::geq}}
Since $\mathsf{1} \cdot \bp + e_i = \mathsf{1} \cdot \hat{\bp}$ and $\bp \preceq \hat{\bp}$, the lemma follows.
\end{proof}

We finish by proving that Lemma~\ref{lem::prices::les} holds in two scenarios: single-demand WGS utility functions and linear utility functions.

\subsection{Single-Demand WGS Utility Functions}

\pfof{Lemma~\ref{lem::prices::les}}
For a contradiction, we suppose there is an item $j$ such that $p_j > \hat{p}_j$.

Let $\epsilon > 0$ be a very small constant such that $\epsilon < p_{k}$ for all $p_{k} \neq 0$ and $\epsilon < \hat{p}_{k}$ for all $\hat{p}_{k} \neq 0$. 

Let $\bp'$ denote the following collection of prices:
$p'_{k} = p_{k}$ if $p_{k} \neq 0$, and $p'_{k} = \epsilon$ otherwise.
We consider the resulting demands for a bidder $h \neq i$. 
Recall that $\bx_{h}$ denotes bidder $h$'s demand at prices $\bp$.
$\bx'_{h}$ will denote her demand at prices $\bp'$.
By the WGS property, $x'_{hk} = x_{hk}$ if $p_{k} \neq 0$, and $x'_{hk} = 0$ 
if $p_{k} = 0$, i.e.\ if $p'_k = \epsilon$.\footnote{Changing the prices from $\bp$ to $\bp'$,
one by one,
by setting $p'_k$ to $\eps$, which happens when $p_k = 0$,
only increases the demand for other goods, but as no spending is released by this
price increase, these demands are in fact unchanged.}

Analogously, let $\hat p'_{k} = \hat p_{k}$ if $\hat p_{k} \neq 0$, and $\hat p'_{k} = \epsilon$ otherwise.
Let $\hat \bx_{h}$ denote bidder $h$'s demand at prices $\hat{\bp}$, and 
$\hat \bx'_{h}$  her demand at prices $\hat{\bp}'$.
Again, $\hat x'_{hk} = \hat x_{hk}$ if $\hat p_{k} \neq 0$, and $\hat x'_{hk} = 0$ 
if $\hat p'_{k} = \epsilon$.

Now, we look at those items $l$ which have the smallest ratio between $p'_l$ and $\hat{p}'_l$.
\[
S = \Bigg\{l~\Bigg|~ \frac{\hat{p}'_{l} } {p'_{l} } = \min_{k}\frac{\hat{p}'_{k} } {p'_{k} }  \Bigg\}.
\]

By assumption, $p_j > \hat{p}_j$; therefore $p'_j > \hat{p}'_j$. 
Thus, for $l \in S$, $\frac{\hat{p}'_{l} } {p'_{l} } < 1$. 
For simplicity, let $\eta$ denote this ratio.
Note that this inequality implies $p'_l > \eps$, and thus $p_l = p'_l >0$.
Also,
\begin{equation}
\label{eqn:p-pp-relation}
p_l = p'_l > \hat p'_l > 0.
\end{equation}

We now consider the following procedure:

First multiply $\bp'$ by $\eta$. 
By the homogeneity of the utility function, 
bidder $h$'s demand at prices $\eta \cdot \bp'$ will be $\frac{1}{\eta} \bx'_{h}$. 
Note that $\eta \cdot p'_{l} = \hat{p}'_{l}$ for any $l \in S$ 
and $\eta \cdot p'_{k} < \hat{p}'_{k}$ for any $k \notin S$.

Second, increase the prices of $\eta \cdot \bp'$ to $\hat{\bp}'$. 
Since for $l \in S$ the two prices are the same, 
by the Gross Substitutes property, $\hat{x}'_{hl} \geq \frac{1}{\eta} x'_{hl}$ for any $l \in S$.

Summing over all the bidders except $i$, 
\[\sum_{h \neq i} \hat{x}'_{hl} \geq  \frac{1}{\eta} \sum_{h \neq i} x'_{hl} ~~~~~~\mbox{for $l\in S$}.\]

 By~\eqref{eqn:p-pp-relation},
$p_{l} > 0$ for any $l \in S$; hence $\sum_{h \neq i} x'_{hl} = \sum_{h \neq i} x_{hl} = 1$. So,  since $\eta < 1$,
\begin{eqnarray}
\label{eqn::fish::price}
\sum_{h \neq i} \hat{x}'_{hl} > \sum_{h \neq i} x'_{hl} =  \sum_{h \neq i} x_{hl} = 1~~~~~~\mbox{for $l\in S$}.
\end{eqnarray}

For all $h$ and $l$, $\hat{x}_{hl} \geq \hat{x}'_{hl}$. Therefore, 
\begin{align*}
\sum_{h} \hat{x}_{hl} &\geq \sum_{h \neq i} \hat{x}'_{hl} 
> \sum_{h \neq i} x_{hl} = 1 ~~~~~~\mbox{for $l \in S$}.
\end{align*}
As $\sum_{h} \hat{x}_{hl} \leq 1$, this is impossible and yields a contradiction.
\end{proof}

\subsection{Linear Utility function}
\pfof{Lemma~\ref{lem::prices::les}}
For a contradiction, we suppose there is an item $j$ such that $p_j > \hat{p}_j$.
Now, we look at those items $j$ which have the smallest ratio between $p_l$ and $\hat{p}_l$.
\[
S = \Bigg\{l ~\Bigg|~ \frac{\hat{p}_{l} } {p_{l} } = \min_{k}\frac{\hat{p}_{k} } {p_{k} }  \Bigg\}.
\]

For simplicity, we set $\frac{0}{x} = 0$ for $x > 0$, $\frac{0}{0} = 1$ and $\frac{x}{0} = +\infty$ for $x > 0$.

For linear utility functions, we use the following observation: 
if at prices $\bp$ a bidder's favorite items include some items in $S$, 
then at prices $\hat{\bp}$ her favorite items will all be in $S$. 

Therefore, as the price of each good equals the total spending on that good,
\[
\sum_{l \in S} p_l \leq \sum_{l \in S} \hat p'_l.
\]

This implies that $\min_{k}\frac{\hat{p}_{k}}{p_{k}} = 1$, and the lemma follows.
\end{proof}

\section{Acknowledgments}
The authors would like to thank the anonymous referees of
this paper and an earlier version for their thoughtful comments.
This work was partly supported by NSF awards
CCF-1217989 and CCF-1527568.
The work was partly performed while the first author was attending the fall 2015 program in Economics and Computation at the Simons Institute for Theoretical Computer Science at UC Berkeley.


\bibliographystyle{IEEEtran}
\bibliography{references,more-refs}


\newpage
 Appendix
\appendix
\section{Omitted Proofs}
\label{sec:appendix}

\subsection{Proofs from Section~\ref{sec:results}}

\pfof{Lemma~\ref{lem:high-sw}}
Let $\#\text{items}_j$ denote the number of copies of good $j$ that are present,
and let $N_j$ denote the number of buyers for which good $j$ has the largest
expected value (breaking ties arbitrarily).
By Chebyshev's Theorem, $\Pr\big[\#\text{items}_j > \mathbb{E}[\#\text{items}_j] - t \cdot \Gamma(\#\text{items}_j)\big] \geq 1 - \frac{1}{t^2}$.
We set $t$ equal $1 + \lambda$, where $\lambda$ is the parameter in Assumption~\ref{ass:mark-size}.
Then by Assumption~\ref{ass:mark-size},
$\Pr\Big[\#\text{items}_j > \lambda^2 \cdot \mathbb{E}[\#\text{items}_j]\Big] \geq \frac{2 \lambda + \lambda^2}{(1 + \lambda)^2}$,
which implies $\Pr[\#\text{items}_j >\lambda^2 \alpha N] \geq \frac{2 \lambda + \lambda^2}{(1 + \lambda)^2}$.
If at least $\lambda^2 \alpha N$ copies of good $j$ are available, then
by Assumption~\ref{ass:min-bound}, there is an assignment with
valuation at least $\rho' \cdot \min\{N_j, \lambda^2 \alpha N\}$.
Therefore, the social welfare is at least
$\sum_j \min\{N_j, \lambda^2\alpha N\} \frac{2 \lambda + \lambda^2}{(1 + \lambda)^2}\cdot  \rho'
\ge \lambda^2\alpha  \frac{2 \lambda + \lambda^2}{(1 + \lambda)^2} N \cdot \rho' $.
\end{proof}

\subsection{Proofs from Section~\ref{sec:wal-equil}}

In Lemmas~\ref{epsilon-bad} and~\ref{k-epsilon-bad}, we bound the number of $(\epsilon,U)$-bad multiplicity
vectors, and then in Lemma~\ref{k-bad probability} we bound the probability of a $(k, \epsilon, U)$-bad vector. Following this, in Lemma~\ref{lowboundprice} and \ref{uppboundprice},
assuming the multiplicity vector is $(k+1, \epsilon, U)$-good,
we bound the difference between the English Walrasian mechanism prices and those of the Walrasian mechanism at hand. Next, in Lemma~\ref{smooth}, again for $(k+1,\epsilon, U)$-good multiplicity vectors, we relate $u_i(\bxi(v_i, b_{-i}))$ to $v_i(\bxi(v_i, v_{-i}))$ and the prices paid; we then use this to carry out a PoA analysis.

First, we have following two observations.
\begin{observation}\label{price upbound}
In the Dutch Walrasian mechanism, if there are zero copies of a good, letting its price be $+\infty$   will not affect the mechanism outcome.
\end{observation}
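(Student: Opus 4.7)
The plan is to verify that when $n_j = 0$, setting $p_j = +\infty$ preserves both the defining conditions of a Walrasian equilibrium and the other outputs of the Dutch mechanism. First, I would observe that the feasibility condition $\sum_i x_{ij} \le n_j = 0$ forces $x_{ij} = 0$ for every bidder $i$, so good $j$ is never allocated; the market-clearing requirement ``$\sum_i x_{ij} = n_j$ if $p_j > 0$'' is thus trivially satisfied at any $p_j \ge 0$, including $p_j = +\infty$.

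Next, I would check that each bidder's bundle remains utility-maximizing at prices $(+\infty, p_{-j})$. At this price vector, any alternative bundle containing at least one copy of good $j$ yields utility $-\infty$, while a bundle with zero copies of good $j$ yields exactly the same utility as at $(p_j, p_{-j})$, because its cost $\sum_{k \neq j} x_{ik} p_k$ is independent of $p_j$. Since the Dutch mechanism outputs a Walrasian equilibrium whenever one exists (under the assumed gross substitutes condition), the bundle originally assigned to bidder $i$ was utility-maximizing at $(p_j, p_{-j})$ among all bundles, hence a fortiori among bundles with zero copies of good $j$. It therefore remains utility-maximizing at $(+\infty, p_{-j})$, and condition (ii) of Walrasian equilibrium is preserved.

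Finally, I would observe that the Dutch mechanism defines each price $p_k$ as the loss in social welfare caused by reducing the supply of good $k$ by one unit. This marginal quantity is a function solely of the bid profile $\bb$ and the supply vector $\bn$, and does not reference any other price $p_{j'}$. Consequently, modifying $p_j$ alone leaves the prices $p_{-j}$ output by the mechanism unchanged, and the allocations to the other goods are similarly unaffected.

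The main subtlety is in interpreting the marginal-value definition of $p_j$ itself when $n_j = 0$, since ``decrease the supply by one unit'' has no operational meaning at zero supply. The $+\infty$ convention is the natural extension that renders good $j$ unambiguously unavailable to every bidder, and the preceding argument shows that adopting this extension does not perturb any other aspect of the mechanism's outcome. This observation then streamlines later arguments that compare mechanism outputs across supply realizations in which some $n_j$ may equal zero.
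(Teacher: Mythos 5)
The paper gives no proof for this observation---it is stated as a bare assertion and treated as a definitional convention---so there is nothing in the source to compare your argument against. Your verification is correct and captures the relevant points: with $n_j = 0$, no bidder can be allocated good $j$ (feasibility forces $x_{ij}=0$); bundles without good $j$ have costs independent of $p_j$, so optimality of each bidder's bundle is preserved at $p_j=+\infty$; and the Dutch mechanism's marginal-welfare definition of the other prices $p_k$ $(k\neq j)$ depends only on the bids and the supply vector, not on $p_j$, so $p_{-j}$ and the allocation are untouched.

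One small point of precision: your second paragraph briefly refers to the bundle being ``utility-maximizing at $(p_j, p_{-j})$'' as if the mechanism supplied some finite baseline $p_j$ when $n_j=0$. As you acknowledge in the final paragraph, the marginal formula does not define $p_j$ at zero supply, so it is cleaner to argue directly that the assigned bundle is utility-maximizing at $(+\infty, p_{-j})$: any bundle containing good $j$ has utility $-\infty$, and among bundles avoiding good $j$ the assigned bundle is already optimal at prices $p_{-j}$. That sidesteps any appearance of circularity and matches the way Observation~\ref{price upbound} is actually used downstream (e.g.\ in Observation~\ref{non-increasing} and the truncation by $U$ in Lemmas~\ref{epsilon-bad}--\ref{uppboundprice}, where the $+\infty$ convention makes price monotonicity and the $\min\{\cdot,U\}$ cap well-behaved across supply vectors with some $n_j=0$).
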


\begin{observation}\label{non-increasing}
Suppose bidders' demands satisfy the Gross Substitutes property. In both the English and Dutch Walrasian mechanisms, if $n_i \geq n'_i$,  then $\bp(n_i, n_{-i}) \preceq \bp(n'_i, n_{-i})$, where $\bp \preceq \bp'$ means that, for all $j$, $p_j \leq p'_j$.
\end{observation}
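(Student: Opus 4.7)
The plan is to express both the English and Dutch prices as discrete partial derivatives of the optimal social welfare function $W(\bn) := \max_{\bx}\sum_i v_i(\bxi)$ subject to $\sum_i x_{ij}\le n_j$. By the characterizations of the two mechanisms stated at the beginning of this section, $p_j^{Eng}(\bn) = W(\bn+e_j)-W(\bn)$ and $p_j^{Dut}(\bn) = W(\bn)-W(\bn-e_j)$, where $e_j$ is the $j$-th unit vector. Consequently, for every good $k$, the desired monotonicity $p_k(n_i+1,n_{-i}) \le p_k(n_i,n_{-i})$ reduces, for either mechanism, to a single submodularity inequality of the form
\[
W(\bm + e_i + e_k) + W(\bm)\ \le\ W(\bm+e_i)+W(\bm+e_k)
\]
for an appropriate $\bm$ (for Dutch, take $\bm=\bn-e_k$; if $n_k=0$ use Observation~\ref{price upbound} to make the inequality trivial). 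Iterating then gives the full statement for arbitrary $n_i \ge n'_i$, and the cases $k=i$ and $k\neq i$ are treated uniformly.

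The main task is therefore to prove that $W$ is submodular on the integer lattice of supplies whenever every $v_i$ satisfies gross substitutes. I would prove this by an exchange argument. Fix optimal allocations $\bx^\star$ at supply $\bm+e_i+e_k$ and $\by^\star$ at supply $\bm$. Since $\bx^\star$ consumes one more copy each of goods $i$ and $k$ than $\by^\star$, at least one bidder holds an ``extra'' copy of $i$ in $\bx^\star$ relative to $\by^\star$, and similarly for $k$. The single-improvement / no-complementarities property of GS valuations (Gul--Stacchetti) allows me to move that extra unit of $i$ from $\bx^\star$ into $\by^\star$, possibly via a chain of bilateral swaps that reshuffles other copies, producing allocations feasible at supplies $\bm+e_i$ and $\bm+e_k$ respectively whose total value is at least $v(\bx^\star)+v(\by^\star)$. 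That gives the submodularity inequality above.

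The main obstacle is making the chain-of-swaps argument rigorous in a multi-unit setting with several GS bidders simultaneously. The cleanest route is to reduce to the single-unit case by treating each of the $n_j$ copies of good $j$ as a separate dummy good sharing the same valuation profile, after which the classical result that sums of M$^{\natural}$-concave valuations induce an M$^{\natural}$-concave welfare function (hence a submodular function of supplies) applies immediately. An alternative, more primal, route is to run the ascending (resp.\ descending) auction in parallel on $(\bm+e_i, \bm+e_i+e_k)$ and use the GS property to argue that the additional unit of $k$ in the second instance only weakly lowers every coordinate of the trajectory, yielding the English (resp.\ Dutch) case directly; this avoids discrete-convex-analysis machinery but requires more careful bookkeeping of the auction dynamics.
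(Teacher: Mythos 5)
The paper states this observation without proof, treating it as a standard consequence of gross substitutes, so there is no "paper's proof" to compare against; what you give is a genuine proof sketch, and its broad outline is sound. The step that prices equal discrete marginals of the welfare function $W(\bn)$ is exactly what the opening of Section~\ref{sec:wal-equil} asserts, and passing from GS valuations to discrete concavity of the aggregate welfare (via M$^{\natural}$-concavity being preserved under supremal convolution, or via the reduction to the single-unit item universe and Gul--Stacchetti) is the standard route.

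One point needs tightening. When $k=i$ the target inequality becomes
\[
W(\bm + 2e_i) + W(\bm) \ \le\ 2\,W(\bm+e_i),
\]
which is coordinate-wise (midpoint) concavity, not a submodularity inequality: with $x=\bm+2e_i$, $y=\bm$ one has $x\vee y = x$ and $x\wedge y = y$, so the submodular inequality is vacuous. Calling all the required inequalities ``submodularity'' and claiming the $k=i$ and $k\neq i$ cases are ``treated uniformly'' therefore oversells the reduction. The fix is exactly the stronger property you already invoke: M$^{\natural}$-concavity of $W$ implies both submodularity (giving $k\neq i$) and componentwise concavity (giving $k=i$), so state the reduction in terms of M$^{\natural}$-concavity or of concave extensibility rather than submodularity alone. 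A second, smaller caveat: the reduction from multi-unit GS on $\mathbb{Z}^m$ to single-unit GS on the expanded item set of individual copies is correct but not automatic; it rests on the equivalence of the paper's Definition~\ref{def:gross-sub} with M$^{\natural}$-concavity and on that property being preserved under the pullback to $\{0,1\}^{\sum_j n_j}$, which deserves an explicit citation if you want the argument to be self-contained.
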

\begin{lemma}\label{epsilon-bad}
In the English Walrasian mechanism, given $n_{-j}$ and bidding profile $\bb$, the number of values $n_j$ for which $(n_j, n_{-j})$ is \textit{$(\epsilon, U)$-bad} for good $j$ is at most $\frac{m}{\epsilon}U$.
\hide{
\[
\left | \left \{n_j | dist(p^{Eng}((n_j, n_{-j}); \vec b), p^{Eng}((n_j + 1, n_{-j}); \vec b)) > \epsilon \right \} \right | \leq \frac{l}{\epsilon}.
\]
}
\end{lemma}

\pfof{Lemma~\ref{epsilon-bad}}
We prove the result by contradiction. Accordingly,
let\\
$S = \left \{n_j \Bigg| dist^U(\bp^{Eng}((n_j, n_{-j}); \bb), \bp^{Eng}((n_j + 1, n_{-j}); \bb)) > \epsilon \right \}$ and suppose that $|S| > \frac{m}{\epsilon}U$.
\[
\text{The proof uses a new function}~\mathit{pf}(\cdot):~~~~
\mathit{pf}(n_j) = \sum_{q = 1}^m \min\{p^{Eng}_q((n_j, n_{-j}); \bb),U\}.
\]
\begin{eqnarray} \label{label1}
\mbox{Then,} &&\liminf_{n \rightarrow \infty} (\mathit{pf}(0) - \mathit{pf}(n)) = \liminf_{n \rightarrow \infty} \sum_{h = 0}^{n-1} (\mathit{pf}(h) - \mathit{pf}(h+1)) \nonumber\\
&&~~~~~~\geq \sum_{n_j \in S}(\mathit{pf}(n_j) - \mathit{pf}(n_j + 1)) > \frac{m}{\epsilon} U \cdot \epsilon = m \cdot U.
\end{eqnarray}
The first inequality follows as by Observation \ref{non-increasing}, $\mathit{pf}(\cdot)$ is a non-increasing function.
Further, by construction, $0 \leq \mathit{pf}(h) \leq l \cdot U$ for all $h$, thus $\liminf_{n \rightarrow \infty} (\mathit{pf}(0) - \mathit{pf}(n)) \leq l \cdot U$, contradicting (\ref{label1}).
\end{proof}
\begin{lemma}\label{lem::math::1}
\[
{{m+n-1} \choose {n}} = \sum_{i = 0}^{n} {{m+i-2}\choose{i}}
\]
\end{lemma}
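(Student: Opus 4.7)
The identity to prove is the classical hockey stick (or Christmas stocking) identity in slightly disguised form: setting $r = m-2$, it reads $\sum_{i=0}^{n} \binom{r+i}{i} = \binom{r+n+1}{n}$. So the plan is to prove it by a short induction on $n$ using Pascal's rule, since this keeps the argument self-contained and short.

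For the base case $n = 0$, both sides equal $1$: the LHS is $\binom{m-1}{0}$ and the RHS is $\binom{m-2}{0}$. For the inductive step, assume the identity holds for $n$. Then
\begin{equation*}
\sum_{i=0}^{n+1} \binom{m+i-2}{i} = \binom{m+n-1}{n} + \binom{m+n-1}{n+1},
\end{equation*}
where the first term comes from the inductive hypothesis. Applying Pascal's identity $\binom{a}{b} + \binom{a}{b+1} = \binom{a+1}{b+1}$ with $a = m+n-1$ and $b = n$ yields $\binom{m+n}{n+1}$, which equals $\binom{m+(n+1)-1}{n+1}$, as required.

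Alternatively, I could offer a one-line combinatorial proof: $\binom{m+n-1}{n}$ counts size-$n$ multisets drawn from $m$ symbols; partitioning such multisets by the multiplicity $n-i$ of the $m$-th symbol, the remaining $i$ elements form a size-$i$ multiset over $m-1$ symbols, of which there are $\binom{(m-1)+i-1}{i} = \binom{m+i-2}{i}$. Summing over $i$ from $0$ to $n$ gives the RHS.

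There is no real obstacle here; the only thing to be slightly careful about is the edge case $m = 1$, where terms like $\binom{-1}{i}$ would appear in the sum. Since the lemma is only used in Lemma~\ref{k-epsilon-bad} for $m \geq 2$ (the $m=1$ case is treated separately there), this is not an issue, but it is worth noting that the identity holds for $m \geq 2$ under the standard convention $\binom{a}{b} = 0$ for $a < b$, or equivalently under $\binom{a}{0} = 1$ for all $a \geq 0$.
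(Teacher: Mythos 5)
The paper states Lemma~\ref{lem::math::1} without proof, treating it as the well-known hockey-stick identity; there is therefore no ``paper proof'' to compare against. Your inductive argument via Pascal's rule is correct, your alternative stars-and-bars combinatorial argument (partitioning size-$n$ multisets of $m$ symbols by the multiplicity of the last symbol) is also correct, and your remark about the edge case $m=1$ appropriately matches how the paper restricts to $m\geq 2$ in Lemma~\ref{k-epsilon-bad} before invoking this identity. Either of your arguments would serve as a self-contained proof that the paper chose to omit.
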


\begin{lemma}\label{lem::math::2}
\[\sum_{n = 0}^k {{m+n-1} \choose {n}} = \sum_{n = 0}^{k} (k - n + 1)  {{m+n-2}\choose{n}}.\]
\end{lemma}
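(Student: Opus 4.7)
The plan is to derive Lemma~\ref{lem::math::2} directly from Lemma~\ref{lem::math::1} by a swap-of-summation-order argument; no further identities are required, and there is no significant obstacle.

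First I would apply Lemma~\ref{lem::math::1} to rewrite each summand on the left-hand side, obtaining
\[
\sum_{n=0}^{k} \binom{m+n-1}{n} \;=\; \sum_{n=0}^{k}\sum_{i=0}^{n}\binom{m+i-2}{i}.
\]
Then I would swap the order of the two summations. For a fixed $i$ with $0 \le i \le k$, the index $n$ ranges over $i, i+1, \ldots, k$, which is a set of size $k-i+1$. Hence
\[
\sum_{n=0}^{k}\sum_{i=0}^{n}\binom{m+i-2}{i} \;=\; \sum_{i=0}^{k} (k-i+1)\binom{m+i-2}{i}.
\]
Finally, renaming the dummy index $i$ back to $n$ yields exactly the right-hand side of the claim.

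The only thing worth double-checking is the index bookkeeping in the swap (the count $k - i + 1$ of valid values of $n$), but this is elementary. Since this step is purely combinatorial manipulation relying on a statement (Lemma~\ref{lem::math::1}) that we are allowed to assume, no other machinery is needed and there is essentially no hard step.
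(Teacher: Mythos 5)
Your proposal is correct and matches the paper's proof exactly: both apply Lemma~\ref{lem::math::1} term-by-term to produce the double sum, swap the order of summation, and count the $k-i+1$ values of $n$ for each fixed $i$.
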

\begin{proof}
\begin{align*}
\sum_{n = 0}^k {{m+n-1} \choose {n}} &=\sum_{n = 0}^{k} \sum_{i = 0}^{n} {{m+i-2}\choose{i}} = \sum_{i = 0}^{k} \sum_{n = i}^{k} {{m+i-2}\choose{i}} \\
&= \sum_{i = 0}^{k} (k - i + 1)  {{m+i-2}\choose{i}}.
\end{align*}
\end{proof}
\begin{lemma}\label{k-epsilon-bad}
In the English Walrasian mechanism with bidding profile $\bb$, for a fixed $n_{-j}$, the number of values $n_j$ for which $(n_j, n_{-j})$ is $(k, \epsilon, U)$-bad for good $j$ is at most $\frac{m}{\epsilon}U \cdot {{k+m} \choose m}$.
\end{lemma}
\pfof{Lemma~\ref{k-epsilon-bad}}
Consider the case that $m \geq 2$. For $(n_j, n_{-j})$ to be $(k, \epsilon, U)$-bad for good $j$
we need an $(\epsilon, U)$-bad vector $\bn' \preceq \bn$ for good $j$, with
$\sum_{h \neq j} n_h - n'_h = c$ for some $0 \le c \le k$ and $n_j - n'_j \le k-c$.
There are ${m-2+c} \choose  {c}$ ways of choosing the $n'_{-j}$.
For each $n'_{-j}$, by Lemma \ref{epsilon-bad}, 
there are at most $\frac{m}{\epsilon} U$ points that are $(\epsilon, U)$-bad for good $j$.
For each $(\epsilon, U)$-bad point, there are $k-c+1$ choices for $n_j$. 
This gives a total of
\[
\sum_{c=0}^k\frac {m}{\epsilon} U(k-c+1) {{m-2+c} \choose  {c}}  = \frac {m}{\epsilon} U \sum_{c = 0}^{k} {{m-1+c} \choose  {c}} = \frac{m}{\epsilon} U {{m+k} \choose  {k}}
\]
$(k, \epsilon, U)$-bad vectors. Note that the first equality follows by Lemma~\ref{lem::math::2} and the second equality follows by Lemma~\ref{lem::math::1}.

For the case $m = 1$, for each $(\epsilon, U)$-bad point for this good, it will cause at most $k + 1$ points to be $(k, \epsilon, U)$-bad for this good. This gives a total of 
\[
\frac{m}{\epsilon} U (k+1) = \frac{m}{\epsilon} U {{m+k} \choose  {k}}.
\]
$(k, \epsilon, U)$-bad vectors.
\hide{
If  $(n_j, n_{-j})$ is $(k, \epsilon)$-bad for good $j$, then there must be a point $(n'_j , n'_{-j})$ which is $\epsilon$-bad for good $j$,  with $n_q - k \leq n'_q \leq n_q$, for all $1 \leq q \leq l$. Fixing $n_{-j}$, there are $(k+1)^{l-1}$ possible values of $n'_{- j}$. For each $n'_{-j}$, by Lemma \ref{epsilon-bad}, there are at most $\frac{l}{\epsilon}$ points that are $\epsilon$-bad for good $j$.
Finally, for a fixed $n'_j$, there are $k+1$ possible choices of $n_j$. Thus, given $n_{-j}$, there are at most $(k+1)^{l-1} \frac{l}{\epsilon} (k+1) =  \frac{l}{\epsilon}(k+1)^l$ $(k, \epsilon)$-bad points for  good $j$.
}
\end{proof}

\pfof{Lemma~\ref{k-bad probability}}
Conditioned on the bidding profile being $\bb$,
\begin{align*}
&\sum_{1 \leq j \leq m} \Pr[(\textit{ $\bn$ is  $(k, \epsilon, U)$-bad for good $j$}) \cup (n_j \leq k)]) \\
&~~~~~~\leq \sum_{1 \leq j \leq m} \Pr[(\textit{$\bn$ is $(k, \epsilon, U)$-bad for good $j$})] + \Pr[(n_j \leq k)] \\
&~~~~~~\leq  \sum_{1 \leq j \leq m} \sum_{\overline{n}_{-j}}\Big(\Pr[(\textit{ $\bn$ is  $(k, \epsilon, U)$-bad for good $j$}) | n_{-j} = \overline{n}_{-j}] \\
&~~~~~~~~~~~~~~~~~~~~~~~~+ \Pr[(n_j \leq k) | n_{-j} = \overline{n}_{-j}]\Big) \cdot \Pr[n_{-j} = \overline{n}_{-j}] \\
&~~~~~~\leq  m F(N) \left[\frac{m}{\epsilon} U {{k+m} \choose {m}} + k+1\right] ~~~~~~ \mbox{(by Lemma \ref{k-epsilon-bad}).}
\end{align*}
%
\end{proof}

Let $ n^i_j(b_i, b_{-i}) $ denote the number of copies of good $j$ that bidder $i$ receives with bidding profile $(b_i, b_{-i})$ and $n^i(b_i, b_{-i})$ denote the corresponding vector.
Also, let $p^{Eng}(\bn; b_{-i})$ denote the market equilibrium prices when bidder $i$ is not
present.

\begin{lemma}\label{lowboundprice}
$p^{Eng}_j(\bn; b_{-i}) \leq  p_j(\bn; (b_i, b_{-i})).$
\end{lemma}
\pfof{Lemma~\ref{lowboundprice}}
Consider the situation with $\bn' = \bn -  \bn^i (b_i, b_{-i})$ and suppose that agent $i$ is not present. Then $p_j(\bn; (b_i, b_{-i}))$ is a market equilibrium.
\begin{align*}
\text{So} \hspace*{1in}& \forall j ~~~ p^{Eng}_j(\bn'; b_{-i}) \leq  p_j(\bn; (b_i, b_{-i})). \\
\text{Since}~\bn \geq \bn',\hspace*{0.3in} &\forall j ~~~ p^{Eng}_j(\bn; b_{-i}) \leq p^{Eng}_j(\bn'; b_{-i}).\hspace*{2in}
\end{align*}

The lemma follows on combining these two inequalities.
\end{proof}

\begin{lemma}\label{uppboundprice}
If $\bn$ is $(k+1, \epsilon, U)$-good for all goods, and $n_j > k + 1$ for all $j$,  then
\[ \forall j~~~\min\{p_j(\bn; (v_i, b_{-i})), U\}  \leq \min\{p_j(\bn; (b_{i}, b_{-i})), U\} + (k + 1) \epsilon.\]
\end{lemma}
\pfof{Lemma~\ref{uppboundprice}}
Let $\bd^i \preceq \bn^i(v_i, b_{-i})$ be a minimal set with $v_i(\bd^i) = v_i(\bn^i(v_i,  b_{-i}))$. By Definition~\ref{large-market}(i), $\sum_j d^i_j \leq k$.
First, if $n^i_j(v_i, b_{-i}) > d^i_j$ then $p_j(\bn; (v_i, b_{-i})) = 0$, as the pricing is given by a Walrasian Mechanism.

Consider the scenario with $\bn'$ copies of goods on offer, where for all $j$, $n'_j =  n_j -  d^i_j $ and suppose that bidder $i$ is not present;  then $\bp(\bn; (v_i, b_{-i}))$ is a market equilibrium.
\begin{align*}
&\text{So,} &p_j(\bn; (v_i, b_{-i})) \leq p^{Dut}_j(\bn'; b_{-i}).\\
&\mbox{For all $j' \neq j$, let $n''_{j' } = n'_{j' }$ and let $n''_j = n'_j - 1$; then}~~~~~~ & p^{Dut}_j(\bn'; b_{-i}) \leq p^{Eng}_j(\bn''; b_{-i}),\\
&\text{and by Lemma~\ref{lowboundprice},}\hspace*{1.05in}&p^{Eng}_j(\bn''; b_{-i})\leq p^{Eng}_j(\bn''; {b_i, b_{-i}}).
\end{align*}
As $\bn$ is $(k+1, \epsilon, U)$-good for all goods, and as $\sum_h n_h - n''_h \leq k + 1$,  we conclude that
\begin{align}
&~~~\min\{p_j(\bn; (v_i, b_{-i})), U\} \leq \min\{p^{Eng}_j(\bn''; (b_i, b_{-i})), U\} \nonumber\\
 &~~~~~~\leq \min\{p^{Eng}_j(\bn; (b_i, b_{-i})), U\} + (k + 1) \epsilon \leq \min\{p_j(\bn; (b_i, b_{-i})), U\} + (k + 1) \epsilon.
\end{align}
\end{proof}

\pfof{Lemma~\ref{smooth}}
As we are using a Walrasian mechanism, for any allocation $x'_i$,
\begin{eqnarray}
\label{eqn:wal-value}
v_i(x_i(v_i, b_{-i})) - \sum_{s \in \bxi(v_i, b_{-i})} p_s(\bn; (v_i, b_{-i}))
\geq v_i(\bxi') - \sum_{s \in \bxi' } p_s(\bn; (v_i, b_{-i})).
\end{eqnarray}

We let $S$ denote the set of goods whose prices $p_s(\bn; (v_i, b_{-i}))$ are larger than $U$.  Then,
\begin{eqnarray} \label{smooth1}
u_i(v_i, b_{-i}) &=& v_i(\bxi(v_i, b_{-i})) - \sum_{s \in \bxi(v_i, b_{-i})} p_s(\bn; (v_i, b_{-i})) \nonumber\\
&\geq& v_i((\bxi(v_i, v_{-i}) \cap \bdi) \setminus S) - \sum_{s \in (\bxi(v_i, v_{-i}) \cap \bdi) \setminus S} p_s(\bn; (v_i, b_{-i}))~~~~\text{(by~\eqref{eqn:wal-value})}
\end{eqnarray}

Since $\bn$ is $(k+1, \epsilon, U)$-good, by Lemma~\ref{uppboundprice},
\[\min\{p_s(\bn; (v_i, b_{-i})),U\} \leq \min\{p_s(\bn; (b_i, b_{-i})),U\} + (k + 1) \epsilon.\]
\begin{align*}
\text{Therefore, for any}~ s \notin S,~~~~~
p_s(\bn; (v_i, b_{-i})) &\leq \min\{p_s(\bn; (b_i, b_{-i})),U\} + (k + 1) \epsilon \\
&\leq p_s(\bn; (b_i, b_{-i})) + (k + 1)\epsilon.
\end{align*}
\begin{eqnarray}\label{smooth2}
\text{So,}~~~~&&v_i((\bxi(v_i, v_{-i}) \cap \bdi) \setminus S) - \sum_{s \in (\bxi(v_i, v_{-i})  \cap \bdi) \setminus S} p_s(\bn; (v_i, b_{-i})) \nonumber\\
&&~~~~~~\geq v_i((\bxi(v_i, v_{-i}) \cap \bdi) \setminus S) - \sum_{s \in (\bxi(v_i, v_{-i}) \cap \bdi) \setminus S} p_s(\bn; (b_i, b_{-i})) \nonumber\\
&&~~~~~~~~~~~~ - \left | (\bxi(v_i, v_{-i}) \cap \bdi) \setminus S \right | \cdot (k+1) \epsilon.
\end{eqnarray}

For any $s \in S$, on applying Lemma~\ref{uppboundprice},
we obtain $U = \min\{p_s(\bn; (v_i, b_{-i})),U\} \leq \min\{p_s(\bn; (b_i, b_{-i})),U\} + (k + 1) \epsilon$, which implies $p_s(\bn; (b_i, b_{-i})) + (k + 1) \epsilon \geq U$. Also,
\[
v_i(\bxi(v_i, v_{-i}) \cap \bdi) -  v_i((\bxi(v_i, v_{-i}) \cap \bdi) \setminus S) \leq
v_i((\bxi(v_i, v_{-i}) \cap \bdi) \cap S) \leq |(\bxi(v_i, v_{-i}) \cap \bdi) \cap S| \cdot U,
\]
where the first inequality follows by the Gross Substitutes assumption,
and the second by Gross Substitutes and because by assumption $v_i(s) \leq U$ for all single items $s$.
Thus,
\begin{align}\label{smooth3}
&v_i((\bxi(v_i, v_{-i}) \cap \bdi) \setminus S) - \sum_{s \in (\bxi(v_i, v_{-i}) \cap \bdi) \setminus S} p_s(\bn; (b_i, b_{-i})) - \left | (\bxi(v_i, v_{-i}) \cap \bdi) \setminus S \right | \cdot (k+1) \epsilon \nonumber\\
&~~~\geq v_i(\bxi(v_i, v_{-i}) \cap \bdi) - |(\bxi(v_i, v_{-i}) \cap \bdi) \cap S| \cdot U \nonumber \\
& \hspace*{1.5in}- \sum_{s \in (\bxi(v_i, v_{-i}) \cap \bdi)\setminus S} p_s(\bn; (b_i, b_{-i})) - \left | (\bxi(v_i, v_{-i}) \cap \bdi) \setminus S  \right | \cdot (k+1) \epsilon  \nonumber\\
&~~~\geq v_i(\bxi(v_i, v_{-i}) \cap \bdi) - \sum_{s \in \bxi(v_i, v_{-i}) \cap \bdi} p_s(\bn; (b_i, b_{-i})) - \left | \bxi(v_i, v_{-i}) \cap \bdi  \right | \cdot (k+1) \epsilon  \nonumber\\
&~~~\geq v_i(\bxi(v_i, v_{-i})) - \sum_{s \in \bxi(v_i, v_{-i})} p_s(\bn; (b_i, b_{-i})) - \left | \bxi(v_i, v_{-i}) \cap \bdi  \right | \cdot (k+1) \epsilon.  \nonumber\\
\end{align}

By (\ref{smooth1}), (\ref{smooth2}) and (\ref{smooth3}),
\begin{eqnarray}
u_i(v_i, b_{-i}) &\geq& v_i(\bxi(v_i, v_{-i})) - \sum_{s \in \bxi(v_i, v_{-i})} p_s(\bn; (b_i, b_{-i})) - \left | \bxi(v_i, v_{-i}) \cap \bdi  \right | \cdot (k+1) \epsilon. \nonumber
\end{eqnarray}
\end{proof}

\pfof{Theorem~\ref{wal-e1q-PoA}}

By Lemma~\ref{k-bad probability}, the probability that $\bn$ is $(k+1,  \frac{\max_s\{v_i(s)\}}{2^c}, \max_s\{v_i(s)\})$-bad or $n_j \leq k+1$ for some $j$ is less than 
\begin{eqnarray}
&&\frac{m}{L} \left[ \frac{1}{\frac{\max_s\{v_i(s)\}}{2^c}}\max_s\{v_i(s)\}\Lambda(m, k+1) + k+2\right] \nonumber\\
&&~~~~~~ = \frac{m}{L} \left[ 2^c \Lambda(m, k+1) + k+2\right]. \nonumber
\end{eqnarray} 
So, for any integer $c' $,
\begin{align*}
\mathbb{E}_{\bn}[u_i(v_i, b_{-i}) ] &\geq \mathbb{E}_{\bn}\Bigg[v_i(\bxi(v_i, v_{-i})) - \sum_{s \in \bxi(v_i, v_{-i})} p_s(\bn; (b_i, b_{-i})) \\
&~~~~~~~~~ - \sum_{c = 1}^{c'} \mathds{1}\Big[\text{$\bn$ is $(k+1,\frac{\max_s\{v_i(s)\}}{2^c},\max_s\{v_i(s)\})$-bad and} \nonumber\\
&~~~~~~~~~~~~~~~~~~~~~~~~~~~~~~~~~~~ \text{$(k+1,\frac{\max_s\{v_i(s)\}}{2^{c-1}}, \max_s\{v_i(s)\})$-good}\Big] \\
&~~~~~~~~~~~~~~~~~~\cdot  \Big| \bxi(v_i, v_{-i})  \cap \bdi \Big| \cdot (k+1) \frac{\max_s\{v_i(s)\}}{2^{c-1}}   \\
&~~~~~~~~~ - \mathds{1}\Big[\text{$\bn$ is $(k+1,\frac{\max_s\{v_i(s)\}}{2^{c'-1}}, \max_s\{v_i(s)\})$-good}\Big] \nonumber\\
&~~~~~~~~~~~~~~~~~~~~~~~~~~~~~~~~~~~ \cdot  \Big| \bxi(v_i, v_{-i})  \cap \bdi \Big| \cdot (k+1) \frac{\max_s\{v_i(s)\}}{2^{c'}}  \Bigg] \\
&\geq \mathbb{E}_{\bn}\Bigg[v_i(\bxi(v_i, v_{-i})) - \sum_{s \in \bxi(v_i, v_{-i})} p_s(\bn; (b_i, b_{-i})) \Bigg] \\
&~~~~~~~~~ - \sum_{c = 1}^{c'} \frac{m}{L} \left[2^c \Lambda(m, k+1) + k+2\right] \cdot  k \cdot (k+1) \frac{\max_s\{v_i(s)\}}{2^{c-1}} \\
&~~~~~~~~~ - k \cdot (k+1) \frac{\max_s\{v_i(s)\}}{2^{c'}}  \\
&\geq \mathbb{E}_{\bn}\Bigg[v_i(\bxi(v_i, v_{-i})) - \sum_{s \in \bxi(v_i, v_{-i})} p_s(\bn; (b_i, b_{-i})) \\
&~~~~~~~~~ - c' \cdot \frac{m}{L} \left[ 2 \Lambda(m, k+1) + k+2\right] \cdot  k \cdot (k+1) \cdot \max_s\{v_i(s)\}    \\
&~~~~~~~~~ - k \cdot (k+1) \frac{\max_s\{v_i(s)\}}{2^{c'}} \Bigg]. \\
\end{align*}
Summing over all the bidders and integrating w.r.t. $\bv$ and $\bb$ gives
\begin{align*}
\sum_i \mathbb{E}_{\bv, \bb, \bn}[u_i(v_i, b_{-i})] &\geq  \sw(\opt) - \mathbb{E}_{\bb}[\text{R}(b_i, b_{-i})] \\
&~~~~~~ - N \cdot c' \cdot \frac{m}{L} \left[ 2 \Lambda(l, k+1) + k+2\right] \cdot k \cdot (k + 1) \cdot \zeta \cdot m \\
&~~~~~~ - N \cdot k \cdot (k + 1) \frac{1}{2^{c'}} \cdot \zeta \cdot m.
\end{align*}

Using the smooth technique for Bayesian settings~\cite{Syrgkanis:2013:CEM:2488608.2488635} yields
\begin{eqnarray}
\text{SW(NE)} &\geq&  \Bigg(1 - \frac{\zeta \cdot m \cdot k \cdot (k + 1) \frac{1}{2^{c'}}}{\rho} \nonumber \\
&&~~~~~~ - \frac{\zeta \cdot m \cdot  c' \cdot \frac{m}{L} \left[ 2 \Lambda(m, k+1) + k+2\right] \cdot k \cdot (k + 1)}{\rho}\Bigg)  \sw(\opt). \nonumber
\end{eqnarray}

Let $Y = \frac{m}{L} \left[ 2 \Lambda(m, k+1) \right]$. Set $c' = \lceil \log_2 \frac{1}{Y} - \log_2 \log_2 \frac{1}{Y} \rceil$; then $\frac{1}{2^{c'}} \leq Y \log_2  \frac{1}{Y}$. So, 
\begin{eqnarray}
\text{SW(NE)} \geq \Bigg(1 - \frac{3 \cdot k \cdot (k+1) \cdot \zeta \cdot m}{\rho} \cdot  Y \cdot  \lceil \log_2 \frac{1}{Y}\rceil  \Bigg) \sw(\opt). \nonumber
\end{eqnarray}
\end{proof}

\section{Regret Minimization}
\subsection{Walrasian Market}

We note the following corollary to Theorem~\ref{wal-e1q-PoA}.
\begin{corollary}
\label{col:regret-wal}
In a large Walrasian auction which satisfies Assumptions~\ref{ass:bdd-val}, if $v_i$ and $b_i$ are monotone and  satisfy the gross substitutes property for all $i$, then
\[
\sum_i \mathbb{E}_{\bn, \bv, \bb} [u_i(v_i, b_{-i})] \geq \Bigg(1 - \frac{3 \cdot k \cdot (k+1) \cdot \zeta \cdot m}{\rho} \cdot  Y \cdot  \lceil \log_2 \frac{1}{Y}\rceil \Bigg) \sw(\opt) - \mathbb{E}_{\bb}[\text{R}(b_i, b_{-i})]   
\]
where $Y = \frac{m}{L} \left[ 2 m {{k+1+m}\choose{m}} \right]$ and $\rho = \frac{\sw(\opt)}{N}$.
\end{corollary}

\pfof{Theorem~\ref{thm:regret-wal}}
Since player $i$ uses a regret minimizing algorithm and she is a $(\gamma, \delta)$-player,
\[
\mathbb{E}_{\bn} \Bigg[ \sum_{t = 1}^{T} v_i (b_{i}^t, b_{-i}^t) \Bigg] \geq  \mathbb{E}_{\bn} \Bigg[ \sum_{t = 1}^{T} u_i (v_{i}, b_{-i}^t)  - \Phi(|K_i|,T) \cdot (\max_{\bxi} v_i(\bxi) \cdot \gamma + \delta) \Bigg].
\]

Summing over all bidders and integrating w.r.t. $\bv$ and $\bb$ gives
\begin{align*}
\mathbb{E}_{\bn, \bv, \bb} \Bigg[\sum_{i} \sum_{t = 1}^{T} u_i (b_{i}^t, b_{-i}^t)\Bigg] &\geq  \mathbb{E}_{\bn, \bv, \bb} \Bigg[\sum_{i} \sum_{t = 1}^{T} u_i (v_{i}, b_{-i}^t)  - \Phi(|K_i|,T) \cdot (\max_{\bxi} v_i(\bxi) \cdot  \gamma + \delta)\Bigg] \\
&\geq  \mathbb{E}_{\bn, \bv, \bb} \Bigg[\sum_{i} \sum_{t = 1}^{T} u_i (v_{i}, b_{-i}^t)  - \Phi(|K_i|,T) \cdot (k m \zeta  \gamma + \delta)\Bigg].
\end{align*}

By Corollary~\ref{col:regret-wal}, 
\begin{eqnarray}
\label{reg-wal}
\sum_i \mathbb{E}_{\bn, \bv, \bb} [u_i(v_i, b_{-i})] &\geq& \Bigg(1 - \frac{3 \cdot k \cdot (k+1) \cdot \zeta \cdot m}{\rho} \cdot  Y \cdot  \lceil \log_2 \frac{1}{Y}\rceil \Bigg) \sw(\opt) \nonumber\\
&&~~~~~~~~~- \mathbb{E}_{\bb}[\text{R}(b_i, b_{-i})].
\end{eqnarray}

Therefore, since valuation equals utility plus payment,
\begin{align*}
&\mathbb{E}_{\bn, \bv, \bb} \Bigg[\frac{1}{T}\sum_{i} \sum_{t = 1}^{T} v_i (\bxi(b_{i}^t, b_{-i}^t))\Bigg]\\
 &~~~~~~=  \mathbb{E}_{\bn, \bv, \bb} \Bigg[\frac{1}{T}\sum_{i} \sum_{t = 1}^{T} (u_i (b_{i}^t, b_{-i}^t) + \text{R}(b_i^t, b_{-i}^t))\Bigg] \\
&~~~~~~\geq  \frac{1}{T}\mathbb{E}_{\bn, \bv, \bb} \Bigg[\sum_{i} \Bigg(\sum_{t = 1}^{T} (u_i (v_{i}, b_{-i}^t)  +  \text{R}(b_i^t, b_{-i}^t)) - \Phi(|K_i|,T) \cdot (k m \zeta  \gamma + \delta)\Bigg) \Bigg] \\
&~~~~~~\geq \Bigg(1 - \frac{3 \cdot k \cdot (k+1) \cdot \zeta \cdot m}{\rho} \cdot  Y \cdot  \lceil \log_2 \frac{1}{Y}\rceil \Bigg) \sw(\opt) \\
&~~~~~~~~~~~~~~~~~~- \frac{1}{T}\sum_{i} \Phi(|K_i|,T) \cdot (k m \zeta  \gamma + \delta) ~~~~~~~~\mbox{by \eqref{reg-wal}}\\
&~~~~~~\geq \Bigg(1 - \frac{3 \cdot k \cdot (k+1) \cdot \zeta \cdot m}{\rho} \cdot  Y \cdot  \lceil \log_2 \frac{1}{Y}\rceil \Bigg) \sw(\opt) \\
&~~~~~~~~~~~~~~~~~~- \max_{i} \Phi(|K_i|,T) \cdot (k m \zeta  \gamma + \delta) \frac{1}{\rho \cdot T} \sw(\opt) \\
&~~~~~~= \Bigg(1 - \frac{3 \cdot k \cdot (k+1) \cdot \zeta \cdot m}{\rho} \cdot  Y \cdot  \lceil \log_2 \frac{1}{Y}\rceil \\
&~~~~~~~~~~~~~~~~~~- \frac{\max_{i} \Phi(|K_i|,T) \cdot (k m \zeta  \gamma + \delta)}{\rho \cdot T} \Bigg) \sw(\opt).
\end{align*}
\end{proof}

\subsection{Fisher Market with Reserve Prices}

Theorem~\ref{thm:regret-fisher} will follow from the following lemma;  its proof  is given in Appendix~\ref{sec:reserve-prices}.

\begin{theorem}
\label{thm:FisherPoAReserve}
For any bidding profile $\mathbf{b}$ and any value profile $\mathbf{v}$ which are homogeneous of degree 1, concave, continuous, monotone and gross substitutes, if the reserve prices $r_j \leq \frac{1}{4} p^*_j$ for any $j$, then
\[
\sum_i \ui(\vi, \bmi)) \geq  e^{-\frac{2m}{5L}} \sum_i u_i(\bxi(\bp^*)).
\]
\end{theorem}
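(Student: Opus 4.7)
The plan is to replicate the structure of the proof of Lemma~\ref{fisher-market-multi-poa} and Theorem~\ref{thm:FisherPoA}, adapted to cope with reserve prices. The workflow will be: (i) establish a price-stability bound analogous to Lemma~\ref{lem::price}; (ii) invoke the Eisenberg--Gale convex program together with the homogeneity of utilities to derive a log-utility inequality; (iii) exponentiate and apply the weighted GM-AM inequality to convert the per-buyer log bound into the desired sum-of-utilities bound. The main novelty compared to the unreserved proof is that reserve prices may introduce slack in the budget-balance identity $\mathsf{1}\cdot\bp = \sum_i e_i$ that was central there.

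The first step is an analog of Lemma~\ref{lem::price} in the reserve-price model. Writing $\bp$ for the equilibrium prices with profile $b_{-i}$ and $\hat{\bp}$ for those with $(b_i,b_{-i})$, the target claim is $\bp \preceq \hat{\bp} \preceq \bp + e_i\cdot \mathsf{1}$. The lower bound follows from the gross-substitutes argument of Lemma~\ref{lem::prices::les}, since the reserve floor does not disturb the WGS comparisons among bidders other than $i$. For the upper bound, I would bound the total price increase $\mathsf{1}\cdot(\hat{\bp}-\bp)$ by the extra money $e_i$ brought by bidder $i$; without reserves this identity was exact, but with reserves some of bidder $i$'s budget may be absorbed by goods whose price is pinned at the reserve, so a careful amortization across goods is required.

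The second step is the Eisenberg--Gale argument. Setting $\bp := \bp(b_i,b_{-i}) + \max_i e_i\cdot \mathsf{1}$ and letting $\bp^*$ denote the truthful equilibrium prices, the dual of the Eisenberg--Gale program, combined with the homogeneity-of-degree-one consequence that the budget constraint binds (as derived in~\eqref{eqn:diff-cond}--\eqref{multi-poa-ineq}), should yield
\[
\sum_i e_i \log u_i(v_i, b_{-i}) \;\geq\; \sum_i e_i \log u_i(\bxi(\bp^*)) \;-\; \Bigl[\, \sum_j p_j(b_i,b_{-i}) + m\max_i e_i - \sum_j p^*_j \,\Bigr].
\]
The hypothesis $r_j \leq p^*_j/4$ enters here to control the bracketed discrepancy: it forces each strategic price to stay a bounded fraction away from $p^*_j$, so the total slack is a constant multiple of $\max_i e_i$ per good. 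The precise coefficient $\tfrac 25$ in the exponent $\tfrac{2m}{5L}$ should emerge from a per-good case analysis according to which of the relevant prices are pinned at the reserve. Dividing by $\sum_i e_i = L\cdot \max_i e_i$, exponentiating, and applying the weighted GM-AM inequality exactly as in the proof of Theorem~\ref{thm:FisherPoA} then converts this into the claimed sum-of-utilities bound.

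The main obstacle will be the second step. In the unreserved setting the proof of Lemma~\ref{fisher-market-multi-poa} leans on the clean identity $\sum_j p_j = \sum_i e_i$, which can fail when reserves bind and some goods go unsold or only partially sold. The workaround is to argue per good, rather than in aggregate, using the reserve-ratio bound $r_j \le p^*_j/4$; this per-good analysis, which is the key new ingredient compared to the unreserved version, is what produces the slightly non-unit constant $\tfrac 25$ in the exponent.
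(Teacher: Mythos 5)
Your proposal correctly places the argument inside the Eisenberg--Gale/weighted-GM-AM framework, and it correctly identifies the obstruction: with reserve prices, the clean budget-balance identity $\sum_j p_j(\bb)=\sum_i e_i=\sum_j p^*_j$ fails, since goods pinned at the reserve can go partially unsold. However, the remedy you sketch does not work. The intermediate inequality you write,
\begin{align*}
\sum_i e_i \log u_i(v_i, b_{-i}) \;\geq\; \sum_i e_i \log u_i(\bxi(\bp^*)) \;-\; \Bigl[\, \textstyle\sum_j p_j(b_i,b_{-i}) + m\max_i e_i - \sum_j p^*_j \,\Bigr],
\end{align*}
is the straight analog of the unreserved case, but the bracketed slack $\sum_j p_j(\bb)-\sum_j p^*_j$ is only controlled by $\sum_j r_j \mathsf{1}_{p_j = r_j}$, which under $r_j\le p^*_j/4$ is bounded by $\tfrac14 \sum_j p^*_j = \tfrac14 \sum_i e_i = \tfrac{L}{4}\max_i e_i$. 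That is $\Theta(\sum_i e_i)$, not $O(m\max_i e_i)$; dividing by $\sum_i e_i$ leaves a constant $\tfrac14$ in the exponent that does not vanish as $L\to\infty$, so you cannot reach a bound of the form $e^{-O(m/L)}$ this way. A ``per-good case analysis'' applied to this inequality cannot repair this, because the slack is in the aggregate budget term, not in per-good price gaps.

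What the paper's proof (Lemma~\ref{lem:final-bound}) actually does is the piece you are missing. It first rescales $\bp(\bb)+\max_{i'}e_{i'}\cdot\mathsf 1$ to a vector $\bq$ with $\sum_j q_j = \sum_i e_i$, tracking the log-scaling correction via Lemma~\ref{lem:log-relationship} and bounding it by $\delta = \sum_j(p_j+\max_{i'}e_{i'}) - \sum_i e_i$ (Lemma~\ref{lem:bound-extra-term}). It then introduces an $l$-\emph{compressed} price vector $\bp'(l,\bq)$ (Definition~\ref{eqn:compressed-prices}) and uses gross substitutes plus homogeneity (Lemma~\ref{lem::reserve::1}) to show that at compressed prices the demand for under-priced goods is at least $1/l$ and for over-priced goods at most $1/t$; this is what converts the budget slack into a log-utility \emph{gain} from reserve-pinned goods (Lemma~\ref{lem:sw-of-comp-prices}). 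Only after this compression can inequality~\eqref{multi-poa-ineq} be invoked (Lemma~\ref{lem:pstar-comp}), since that inequality requires the price totals to match $\sum_i e_i$. The final calculation with $l=\tfrac12$ and $r_j\le p^*_j/4$ then offsets $\delta$ entirely, yielding the $-2m\max_{i'}e_{i'}$ bound, hence the exponent $2m/L$. (Incidentally, note the proof actually produces $e^{-2m/L}$, and that is the form used in the proof of Theorem~\ref{thm:regret-fisher}; the $5$ appearing in the statement of Theorem~\ref{thm:FisherPoAReserve} is a typo, not something that emerges from a case analysis, so your prediction that $\tfrac25$ would appear that way is also off.) In short, your outline is missing the scaled/compressed-price construction, which is the essential new device in the reserve-price argument.
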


\pfof{Theorem~\ref{thm:regret-fisher}}
Since player $i$ uses a regret minimizing algorithm  and the maximal payoff is $\lambda u_i(v_i, v_{-i})$,
\begin{align*}
\sum_{t = 1}^{T} u_i (b_{i}^t, b_{-i}^t) &\geq \sum_{t = 1}^{T} u_i (\vi, b_{-i}^t) - \Phi(|K_i|,T) \cdot \lambda u_i(v_i, v_{-i}).
\end{align*}

Summing over all the bidders gives
\begin{align*}
\sum_i \sum_{t = 1}^{T}  u_i (b_{i}^t, b_{-i}^t) &\geq \sum_i \sum_{t = 1}^{T} u_i (\vi, b_{-i}^t) - \sum_i \Phi(|K_i|,T) \cdot \lambda u_i(v_i, v_{-i}) \\
&\geq \sum_i \sum_{t = 1}^T u_i (\vi, b_{-i}^t) - \sum_i \max_{i'} \Phi(|K_{i'}|, T) \cdot \lambda u_i(v_i, v_{-i}).
\end{align*}

By Theorem~\ref{thm:FisherPoAReserve}, 
\[
\sum_i \ui(\vi, \bmi)) \geq  e^{-\frac{2m}{L}} \sum_i u_i(\bxi(\bp^*)).
\]

Therefore, 
\begin{align*}
\sum_i \sum_{t = 1}^{T}  u_i (b_{i}^t, b_{-i}^t) &\geq \sum_i \sum_{t = 1}^T u_i (\vi, b_{-i}^t) - \sum_i \max_{i'} \Phi(|K_{i'}|, T) \cdot \lambda u_i(v_i, v_{-i}) \\
& \geq T \cdot  e^{-\frac{2m}{L}} \sum_i u_i(\bxi(\bp^*)) - \max_{i'} \Phi(|K_{i'}|, T)  \lambda \sum_i u_i(v_i, v_{-i}).
\end{align*}

The theorem follows on dividing both sides by $T$.
\end{proof}

\section{Reserve prices}
\label{sec:reserve-prices}
\begin{definition}
Eisenberg Gale markets with reserve prices are exactly the solutions to the following convex program:
\begin{align*}
\max_{\bx}  ~~\sum_{i=1}^{n} & e_i \cdot \log(u_i(x_{i1}, x_{i2}, \cdots x_{im})) + \sum_{j = 1}^{m} y_j r_j\nonumber\\
&\text{s.t.} ~~~~ \forall j:~~~~ \sum_i x_{ij}  + y_j\leq 1 \nonumber\\
& ~~~~~~~~~\forall i, j:~~ x_{ij} \geq 0, \nonumber
\end{align*}
where $r_j$ is the reserve price of item $j$.
\end{definition}

The proof of Theorem~\ref{thm:FisherPoAReserve} uses the following lemma.
\begin{lemma}
\label{lem:final-bound}
For any bidding profile $\mathbf{b}$ and any value profile $\mathbf{v}$ which are homogeneous of degree 1, concave, continuous, monotone and satisfy the gross substitutes property, if the reserve prices $r_j \leq \frac{1}{4} p^*_j$ for any $j$, then
\[
\sum_i \ei \log \ui(\vi, \bmi)) - \sum_i \ei \log u_i(\bxi(\bp^*))  \ge -2 m \cdot \max_{i'} e_{i'}.
\]
\end{lemma}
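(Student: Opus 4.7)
The plan is to mirror the proof of Lemma~\ref{fisher-market-multi-poa}, dualizing the reserve-price Eisenberg--Gale program and then perturbing the strategic equilibrium prices to make them dual-feasible.

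First, I would derive the dual of the reserve-price EG program. Introducing Lagrange multipliers $p_j$ for the supply constraints $\sum_i x_{ij} + y_j \leq 1$ and eliminating $y_j \geq 0$ forces $p_j \geq r_j$ (otherwise $y_j \to \infty$ makes the Lagrangian unbounded). This gives
\[
\min_{\bp \geq \mathbf{r}} \max_{\bx \geq 0} \ \sum_i e_i \log u_i(\bxi) - \sum_{i,j} p_j x_{ij} + \sum_j p_j.
\]
As in Lemma~\ref{fisher-market-multi-poa}, homogeneity of degree one of each $u_i$ yields the KKT identity $\sum_j p_j x_{ij}^*(\bp) = e_i$ for any inner optimizer, so the dual objective at any $\bp \geq \mathbf{r}$ equals $\sum_i e_i \log \hat u_i(\bp) - \sum_i e_i + \sum_j p_j$, where $\hat u_i(\bp)$ denotes buyer $i$'s optimal utility at prices $\bp$ with budget $e_i$.

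Next, I would apply this dual bound at the perturbed price vector $\bp(b_i,b_{-i}) + \max_{i'} e_{i'}\cdot\mathsf{1}$. By Lemma~\ref{lem::price} this upper-bounds $\bp(v_i,b_{-i})$; moreover since equilibrium prices always respect reserves, the perturbation lies above $\mathbf{r}$ and is dual-feasible. Combining monotonicity of $\hat u_i$ in prices with dual optimality of $\bp^*$ gives
\begin{align*}
\sum_i e_i \log u_i(v_i,b_{-i})
&\geq \sum_i e_i \log \hat u_i\!\left(\bp(b_i,b_{-i}) + \max_{i'} e_{i'}\cdot\mathsf{1}\right) \\
&\geq \sum_i e_i \log u_i(\bxi(\bp^*)) + \Bigl(\sum_j p_j^* - \sum_j p_j(b_i,b_{-i})\Bigr) - m\max_{i'} e_{i'}.
\end{align*}

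The remaining task is to show $\sum_j p_j^* - \sum_j p_j(b_i,b_{-i}) \geq -m\max_{i'} e_{i'}$. Budget balance gives $\sum_j p_j = \sum_i e_i + \sum_j r_j y_j$ at any reserve-price equilibrium (since $y_j>0$ forces $p_j = r_j$), so the difference reduces to $\sum_j r_j\bigl(y_j^* - y_j(b_i,b_{-i})\bigr)$. Here the assumption $r_j \leq \tfrac14 p_j^*$ enters: it keeps the reserves strictly away from binding at $\bp^*$, and combined with an extension of Lemma~\ref{lem::prices::les} to the reserve-price setting (monotonicity of equilibrium prices under additional demand), it forces $y_j(b_i,b_{-i})$ to stay close to $y_j^*$, the gap being controllable in terms of the single-bidder budget $e_i$.

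The main obstacle will be precisely this last step. A naive bound yields only $\sum_j r_j(y_j^* - y_j(b_i,b_{-i})) \geq -\sum_j r_j$, which via $r_j \leq \tfrac14 p_j^*$ is proportional to total revenue rather than to $m\max_i e_i$. Closing this gap requires a per-good comparative analysis of the two equilibria, using the gross substitutes property to track how withdrawing or modifying bidder $i$'s bid can change each $y_j$; the factor of $2$ in the stated bound $-2m\max_{i'} e_{i'}$ is the budget this second, reserve-induced source of slippage contributes on top of the $-m\max_{i'} e_{i'}$ arising already from the price perturbation.
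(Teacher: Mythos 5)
Your dualization and the ensuing inequality
\begin{align*}
\sum_i e_i \log \hat u_i\!\bigl(\bp(\bb) + \max_{i'} e_{i'}\cdot\mathsf{1}\bigr)
\geq \sum_i e_i \log u_i(\bxi(\bp^*)) + \sum_j p_j^* - \sum_j p_j(\bb) - m\max_{i'} e_{i'}
\end{align*}
are fine; they reproduce the mechanism of Lemma~\ref{fisher-market-multi-poa} correctly in the reserve setting. But you are right to flag the final step as the obstacle, and the gap there is not closable by the route you sketch. Since $r_j \le \frac14 p_j^*$ ensures reserves do not bind at truth, $\sum_j p_j^* = \sum_i e_i$; whereas at the strategic equilibrium $\sum_j p_j(\bb) = \sum_i e_i + \sum_j r_j\, y_j(\bb)$. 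So the revenue gap you need to control is exactly $\sum_j r_j\, y_j(\bb)$. This is a property of the \emph{entire} strategic bid profile $\bb$, not of a single-bidder deviation, so a ``per-good comparative analysis of how withdrawing or modifying bidder $i$'s bid changes $y_j$'' cannot bound it: even with bidder $i$ fixed, the other bidders' strategies can push all prices to their reserves on some goods, and $\sum_j r_j\, y_j(\bb)$ then scales with the total revenue shortfall, which has nothing to do with $m\max_{i'}e_{i'}$.

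The paper closes this by a genuinely different device, not by bounding the revenue gap at all. It rescales $\bp(\bb)+\max_{i'}e_{i'}\cdot\mathsf{1}$ to $\bq$ with $\sum_j q_j = \sum_i e_i$ (Lemma~\ref{lem:log-relationship}), paying a logarithmic correction bounded by $\delta = \sum_j(p_j(\bb)+\max_{i'}e_{i'}) - \sum_i e_i$ (Lemma~\ref{lem:bound-extra-term}). It then ``$l$-compresses'' $\bq$ toward $\bp^*$ (Definition~\ref{eqn:compressed-prices}) and, using gross substitutes and homogeneity to show the compressed prices create excess demand on under-priced goods (Lemma~\ref{lem::reserve::1}), derives in Lemma~\ref{lem:sw-of-comp-prices} a \emph{positive} gain term $\sum_j(\tfrac1l-1)(l p_j^* - q_j)\mathsf{1}_{q_j\le l p_j^*}$ when moving from $\bq$ to the compressed prices. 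Setting $l = \tfrac12$ and using $r_j\le\tfrac14 p_j^*$, that gain is at least $\sum_j r_j\,\mathsf{1}_{p_j=r_j} - m\max_{i'}e_{i'} \ge \delta - 2m\max_{i'}e_{i'}$, which cancels the $\delta$ penalty up to the stated $-2m\max_{i'}e_{i'}$. In short: the reserve-induced revenue surplus is not bounded, it is \emph{offset} by a gain from price compression, and the whole compressed-price machinery exists precisely to produce that offsetting term. Your proposal has the right first half but lacks this compensating mechanism, and without it the bound cannot be reached.
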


\pfof{Theorem~\ref{thm:FisherPoAReserve}}
On exponentiating the expressions on both sides in the statement of Lemma~\ref{lem:final-bound}
we obtain
\begin{eqnarray}
\prod_i u_i(v_i, b_{-i})^{e_i} \geq \frac{1}{e^{2 m \cdot \max_i e_i}} \prod_i u_i(v_i, v_{-i})^{e_i}.  \nonumber
\end{eqnarray}

Therefore,
\begin{eqnarray}
\prod_i \Big( \frac{u_i(v_i, b_{-i})}{u_i(v_i, v_{-i})} \Big)^{e_i} \geq  \frac{1}{e^{2 m \cdot \max_i e_i}}. \nonumber
\end{eqnarray}

Using the weighted GM-AM inequality, we obtain
\begin{eqnarray}
\frac{\sum_i e_i  \frac{u_i(v_i, b_{-i})}{u_i(v_i, v_{-i})} }{ \sum_i e_i  } \geq \Bigg(\prod_i \Big( \frac{u_i(v_i, b_{-i})}{u_i(v_i, v_{-i})} \Big)^{e_i}\Bigg)^{\frac{1}{\sum_i e_i}} \geq \Bigg(\frac{1}{e^{2 m \max_i e_i }}\Bigg)^{\frac{1}{\sum_i e_i}} = e^{-\frac{2m \max_i e_i}{\sum_i e_i}}. \nonumber
\end{eqnarray}

Since $u_i(v_i, v_{-i}) = t e_i$, for all $i$, 
\begin{align*}
\sum_i u_i(v_i, b_{-i}) \geq e^{-\frac{2m \max_i e_i}{\sum_i e_i}} \sum_i u_i(v_i, v_{-i}). \nonumber
\end{align*}
\end{proof}

Our goal is to bound $\sum_i \ei \log \ui(\vi, \vmi) - \sum_i \ei \log \ui(\vi, \bmi)$.
We will be working with the following function, the demand at prices $\bp$:
\begin{equation}
\label{eqn:x-of-p-defn}
 x(\mathbf{p}) = (x_1(\mathbf{p}), x_2(\mathbf{p}), \cdots) 
= \arg \max_{\bx} \sum_i e_i \log u_i(\bxi) + \mathsf{1} \cdot \mathbf{p} - \sum_i \bxi \cdot \mathbf{p}.
\end{equation}
Recall that, by Lemma~\ref{lem::price}, $\pj(\vi,\bmi) \le \pj(\bi,\bmi) +\mathsf{1} \cdot e_i$.
Consequently, $\ui(\exi(\bp(\vi, \bmi))) \ge \ui(\exi(\bp(\bb)+\mathsf{1} \cdot \max_{i'} e_{i'}))$, and so it will suffice to
bound $\sum_i \ei \ui(\vi, \vmi)) - \sum_i \ei \ui(\exi(\bp(\bb)+\mathsf{1} \cdot \max_{i'} e_{i'}))$.

We want to apply the bound in~\eqref{multi-poa-ineq}, but then we need prices $\bq$
such that $\sum_j q_j = \sum_i \ei$.
Accordingly, we will be considering the \emph{scaled} prices $\bq(\bb) = (\bp(\bb)+\mathsf{1} \cdot \max_{i'} e_{i'})\cdot \frac{ \sum_i \exi}{\sum_j \pj(\bb) + \max_{i'} e_{i'}}$
and the \emph{compressed} prices, defined below.

For convenience, in the following definition, we set $\frac{0}{x} = 0$ for $x > 0$, $\frac{0}{0} = 1$ and $\frac{x}{0} = +\infty$ for $x > 0$.
\begin{definition}
\label{eqn:compressed-prices}
Let $\bq$ be a price vector such that $\mathsf{1} \cdot \bq = \sum_i e_i$. The $l$-compressed version  ($l \leq 1$) of $\bq$ is
defined as
$p'_j (l, \bq)$ where 
\begin{align*}
&\frac{p'_j(l, \bq)}{p^*_j} = l ~~~~~~~~~\mbox{if}~~~ \frac{q_j}{p^*_j} \leq l,  \\
&\frac{p'_j(l, \bq)}{p^*_j} = t ~~~~~~~~~\mbox{if} ~~~\frac{q_j}{p^*_j} \geq t,   \\
\mbox{and}~~~~&\frac{p'_j(l, \bq)}{p^*_j} = \frac{q_j}{p^*_j}~~~~~\mbox{ if}~~~ l < \frac{q_j}{p^*_j} < t, 
\end{align*}
where $t$ is a number bigger than $1$ 
such that $\sum_j p'_j (l, \bq) = \sum_i e_i$, and $\mathbf{p}^*$ is the optimal solution $(\mathsf{1} \cdot \bp^* = \sum_i e_i)$. 
\end{definition}

Henceforth, unless noted otherwise, we let $\bp$ denote $\bp(\bb)$ and $\bq$ denote $\bq(\bb)$.

\begin{lemma}
\label{lem:log-relationship}
\begin{align*}
\sum_i \ei \log \ui(\exi(\bp + \mathsf{1} \cdot \max_{i'} e_{i'}))
& = \sum_i \ei \log \ui \left(\exi(\bp+ \mathsf{1} \cdot \max_{i'} e_{i'}) \cdot \frac{\sum_i \ei} {\sum_j (\pj + \max_{i'} e_{i'})} \right) \\
&~~~~~~~~~~~~~~~   - \sum_i \ei \log \frac {\sum_j (\pj + \max_{i'} e_{i'})} {\sum_i \ei} \\
& = \sum_i \ei \log \ui (\bq) - \sum_i \ei \log \frac {\sum_j (\pj + \max_{i'} e_{i'})} {\sum_i \ei}.
\end{align*}
\end{lemma}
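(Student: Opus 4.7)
\textbf{Proof plan for Lemma~\ref{lem:log-relationship}.}

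The plan is to reduce both equalities to a single scaling property of the unconstrained demand $\bx(\bp)$ defined in~\eqref{eqn:x-of-p-defn}, combined with the homogeneity of degree $1$ of each $u_i$.

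First I would establish the scaling law $\bx(\alpha\bp) = \bx(\bp)/\alpha$ for every $\alpha > 0$. The term $\mathsf{1}\cdot\bp$ in~\eqref{eqn:x-of-p-defn} does not affect the argmax, so it suffices to analyse $\arg\max_{\bx \ge 0} \sum_i e_i \log u_i(\bxi) - \sum_i \bxi\cdot\bp$. Substituting $\bxi = \by_i/\alpha$ and using $\log u_i(\by_i/\alpha) = \log u_i(\by_i) - \log\alpha$ (homogeneity) changes the objective only by an additive constant independent of $\by$, and the constraints $\bxi\ge 0$ become $\by_i\ge 0$. Hence the optimiser at prices $\alpha\bp$ is precisely $\bx(\bp)/\alpha$.

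Next, set $\bp' := \bp(\bb) + \mathsf{1}\cdot\max_{i'} e_{i'}$ and $\gamma := \frac{\sum_j p'_j}{\sum_i e_i}$, so that $\bp' = \gamma\,\bq$ by the definition of $\bq$ at the start of the section. The scaling law then gives $\bxi(\bq) = \gamma\,\bxi(\bp')$, and homogeneity of $u_i$ yields $u_i(\bxi(\bq)) = \gamma\,u_i(\bxi(\bp'))$, equivalently
\[
\log u_i(\bxi(\bp')) \;=\; \log u_i(\bxi(\bq)) - \log\gamma.
\]
Weighting by $e_i$, summing over $i$, and recognising that $\log\gamma = \log\frac{\sum_j (p_j+\max_{i'} e_{i'})}{\sum_i e_i}$ produces the equality between the first and third expressions. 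The middle expression amounts to writing out the same substitution $\bxi(\bq) = \gamma\,\bxi(\bp') = \frac{\sum_j (p_j+\max_{i'} e_{i'})}{\sum_i e_i}\cdot\bxi(\bp')$ directly inside the argument of $u_i$, so the middle and third lines coincide term by term.

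The only substantive ingredient is the scaling identity $\bx(\alpha\bp) = \bx(\bp)/\alpha$; everything else is a one-line manipulation with the homogeneity factor. I do not foresee a real obstacle, since $\bq$ is obtained from $\bp'$ by a single uniform rescaling and every quantity in sight transforms covariantly under that rescaling.
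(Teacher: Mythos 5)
Your proof is correct and follows essentially the same route as the paper's: write $\bp' = \gamma\bq$ with $\gamma = \sum_j p'_j / \sum_i e_i$, combine the scaling law $\bxi(\alpha\bp) = \bxi(\bp)/\alpha$ with homogeneity of $u_i$ to get $u_i(\bxi(\bp')) = u_i(\bxi(\bq))/\gamma$, then take weighted logs. The only difference is that you state and justify the scaling law explicitly, whereas the paper uses it implicitly by inserting a factor of $\gamma\cdot\tfrac{1}{\gamma}$ into the argument of $u_i$ and silently pushing one copy into the price argument and pulling the other out by degree-one homogeneity.
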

\begin{proof}
\begin{align*}
\sum_i e_i \log u_i(x_i(\mathbf{p} + \mathsf{1}\cdot \max_{i'} e_{i'}))  &= \sum_i e_i \log \Big[ u_i(\bxi(\mathbf{p} + \mathsf{1} \cdot \max_{i'} e_{i'}) \\
&~~~~~~~~~~~~~~~\cdot \frac{\sum_j (p_j + \max_{i'} e_{i'})}{\sum_i e_i } \cdot \frac{\sum_i e_i }{\sum_j (p_j + \max_{i'} e_{i'})})\Big] \\
&= \sum_i e_i \log \Big[u_i(\bxi((\mathbf{p} + \mathsf{1}\cdot \max_{i'} e_{i'})\\
&~~~~~~~~~~~~~~~\cdot \frac{\sum_i e_i }{\sum_j p_j + \max_{i'} e_{i'}}) ) \cdot \frac{\sum_j e_i }{\sum_j p_j + \max_{i'} e_{i'}}\Big].  
\end{align*}

Now
\begin{align*}
&\sum_i e_i \log \Big[ u_i(\bxi((\mathbf{p} + \mathsf{1}\cdot \max_{i'} e_{i'})\cdot \frac{\sum_i e_i }{\sum_j p_j + \max_{i'} e_{i'}}) ) \cdot \frac{\sum_j e_i }{\sum_j p_j + \max_{i'} e_{i'}}\Big] \\
&~~~~~~= \sum_i e_i \log u_i(\bxi((\mathbf{p} + \mathsf{1}\cdot \max_{i'} e_{i'})\cdot \frac{\sum_i e_i }{\sum_j p_j + \max_{i'} e_{i'}}) )  - \sum_i e_i \log  \frac{\sum_j (p_j + \max_{i'} e_{i'})}{\sum_i e_i }. 
\end{align*}
\end{proof}

\begin{lemma}
\label{lem:p-pprime-comp}
Suppose that $\sum_j \qj = \sum_j \pj' = \sum_i \ei$.
Then
\[
 \sum_i e_i \log u_i(\bxi(\mathbf{q})) - \sum_i e_i \log u_i(\bxi(\mathbf{p}'))
\ge \sum_{ij} (p'_j - q_j) x_{ij}(\mathbf{p'}).
\]
\end{lemma}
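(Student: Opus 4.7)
The plan is to use the definition of $\bxi(\cdot)$ as an unconstrained maximizer (equation~\eqref{eqn:x-of-p-defn}) together with the budget-exhaustion identity derived in the proof of Lemma~\ref{fisher-market-multi-poa}, which says that for any price vector $\bp$,
\[
\sum_j p_j\, x_{ij}(\bp) \;=\; e_i \qquad \text{for every buyer } i.
\]
This identity followed from the homogeneity of $u_i$ (see the derivation of~\eqref{eqn:diff-cond}) and is the only structural fact I will need beyond the defining optimality of $\bxi(\bp)$.

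My first step is to write down the optimality of $\bxi(\bq)$ against the specific competitor $\bxi(\bp')$ at prices $\bq$. From~\eqref{eqn:x-of-p-defn}, dropping the constant $\mathbf{1}\cdot\bq$,
\[
\sum_i e_i \log u_i(\bxi(\bq)) - \sum_{ij} q_j\, x_{ij}(\bq) \;\ge\; \sum_i e_i \log u_i(\bxi(\bp')) - \sum_{ij} q_j\, x_{ij}(\bp').
\]
Rearranging gives
\[
\sum_i e_i \log u_i(\bxi(\bq)) - \sum_i e_i \log u_i(\bxi(\bp')) \;\ge\; \sum_{ij} q_j\, x_{ij}(\bq) - \sum_{ij} q_j\, x_{ij}(\bp').
\]

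Next, I plug in the budget-exhaustion identity at both $\bq$ and $\bp'$. Summing $\sum_j q_j x_{ij}(\bq)=e_i$ over $i$ yields $\sum_{ij} q_j x_{ij}(\bq)=\sum_i e_i$, and similarly $\sum_{ij} p'_j x_{ij}(\bp')=\sum_i e_i$. Therefore $\sum_{ij} q_j x_{ij}(\bq) = \sum_{ij} p'_j x_{ij}(\bp')$, and the right-hand side of the previous inequality becomes
\[
\sum_{ij} p'_j\, x_{ij}(\bp') - \sum_{ij} q_j\, x_{ij}(\bp') \;=\; \sum_{ij}(p'_j - q_j)\, x_{ij}(\bp'),
\]
which is exactly the claimed lower bound. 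The hypothesis $\sum_j q_j=\sum_j p'_j=\sum_i e_i$ is used only implicitly, in that it makes the two expenditure sums line up consistently with the budget identity; no additional properties of $\bq$ or $\bp'$ are needed.

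There is no real obstacle here: the statement is essentially a restatement of the one-sided optimality inequality (a Bregman-divergence-style bound) for the strictly concave objective defining $\bxi(\cdot)$, once the equal-expenditure normalization is applied. The only subtlety worth flagging is that the argmax in~\eqref{eqn:x-of-p-defn} is over all nonnegative $\bx$ rather than only budget-feasible $\bx$, so one must invoke the homogeneity-based identity $\sum_j p_j x_{ij}(\bp)=e_i$ to conclude that the penalty term $\sum_{ij} q_j x_{ij}(\bq)$ collapses to $\sum_i e_i$; this is precisely the step already carried out in the proof of Lemma~\ref{fisher-market-multi-poa}, so I would simply cite it rather than rederive it.
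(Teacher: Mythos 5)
Your proof is correct and is essentially the same as the paper's: you invoke the optimality of $\bxi(\bq)$ against the competitor $\bxi(\bp')$ in the objective defining $\bxi(\cdot)$, then use the homogeneity-derived budget identity $\sum_j q_j x_{ij}(\bq)=e_i=\sum_j p'_j x_{ij}(\bp')$ to rewrite $\sum_{ij} q_j x_{ij}(\bq)$ as $\sum_{ij} p'_j x_{ij}(\bp')$, which is exactly the paper's argument. The only cosmetic difference is that the paper writes the last step as an add-and-subtract of $\sum_i \bp'\cdot\bxi(\bp')$ rather than citing the equality of the two expenditure sums directly.
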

\begin{proof}
As $\mathbf{x}(\mathbf{q}) =\arg \max_{\bx} \sum_i e_i \log u_i(\bxi) - \sum_i \bxi \cdot \mathbf{q} + \mathsf{1} \cdot \mathbf{q}$,
\begin{align*}
& \sum_i e_i \log u_i(\bxi(\mathbf{q})) - \sum_i e_i \log u_i(\bxi(\mathbf{p}')) \\
& ~~~~~~~\geq \sum_i \mathbf{q} \cdot \bxi(\mathbf{q}) - \sum_i \mathbf{q} \cdot \bxi(\mathbf{p}').
\end{align*}

As in the ``PoA'' analysis, $\bxi(\mathbf{q}) = \arg \max_{\bxi \cdot \mathbf{q} = e_i} u_i(\bxi)$. 
So,$\bxi(\mathbf{q}) \cdot \mathbf{q} = e_i$ and $\bxi(\mathbf{p}') \cdot \mathbf{p}' = e_i$. Therefore,
\begin{align*}
&\sum_i \mathbf{q} \cdot \bxi(\mathbf{q}) - \sum_i \mathbf{q} \cdot \bxi(\mathbf{p}') \\
& = \sum_i \mathbf{q} \cdot \bxi(\mathbf{q}) - \sum_i \mathbf{q} \cdot \bxi(\mathbf{p}') 
+ \sum_i \mathbf{p}' \cdot \bxi(\mathbf{p}') - \sum_i \mathbf{p}' \cdot \bxi(\mathbf{p}') \\
& = \sum_{ij} (p'_j - q_j) x_{ij}(\mathbf{p'}).
\end{align*}
\end{proof}

\begin{lemma} \label{lem::reserve::1}
There exists an $\bx(\mathbf{p}')$, where $\mathbf{p'} = p'_j(l, \mathbf{q})$, such that, for any $l < 1$, if $\frac{p'_j(l, \mathbf{q})}{p^*_j} = l  $, $\sum_i x_{ij}(\mathbf{p}') \geq \frac{1}{l}$ 
and if $\frac{p'_j(l, \mathbf{q})}{p^*_j} = t  $, $\sum_i x_{ij}(\mathbf{p}') \leq \frac{1}{t}$.
\end{lemma}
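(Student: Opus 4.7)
The plan is to compare $\mathbf{p}'$ against two scaled versions of the equilibrium prices, namely $l\bp^*$ and $t\bp^*$, and use the gross substitutes property to relate the corresponding demands. First I would observe that, because each $u_i$ is homogeneous of degree $1$, the demand induced by \eqref{eqn:x-of-p-defn} is inversely homogeneous in the prices: $\bxi(\alpha\bp) = \bxi(\bp)/\alpha$ for every $\alpha>0$. This follows because $\nabla u_i$ is homogeneous of degree zero and $u_i$ of degree one, so the first-order condition $\tfrac{e_i}{u_i(\bxi)}\nabla u_i(\bxi)=\alpha\bp$ and the budget identity $\bxi\cdot(\alpha\bp)=e_i$ (which follows from Euler's theorem applied to the FOC) are both satisfied by $\bxi(\bp)/\alpha$. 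Since $\bp^*$ clears the market, $\sum_i x_{ij}(\bp^*)=1$ for each $j$, and hence $\sum_i x_{ij}(l\bp^*)=1/l$ and $\sum_i x_{ij}(t\bp^*)=1/t$.

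Next I would derive the first inequality by moving from $l\bp^*$ to $\mathbf{p}'$. Inspecting Definition~\ref{eqn:compressed-prices}: (i) for any $j$ with $p'_j/p^*_j=l$, $p'_j=lp^*_j$, so the price is unchanged; (ii) for middle-group goods, $p'_j=q_j>lp^*_j$ since $q_j/p^*_j>l$; (iii) for high-group goods, $p'_j=tp^*_j>lp^*_j$ because $l<1<t$. Thus the transition $l\bp^*\to\mathbf{p}'$ only raises prices outside the low group. Applying the gross substitutes property iteratively, one coordinate at a time, produces a demand at $\mathbf{p}'$ in which the total demand for every low-group good weakly exceeds its value at $l\bp^*$, giving $\sum_i x_{ij}(\mathbf{p}')\ge 1/l$ whenever $p'_j/p^*_j=l$.

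A symmetric argument starting from $t\bp^*$ proves the second inequality: the high-group prices are unchanged while all other prices weakly decrease, so by gross substitutes there is a demand at $\mathbf{p}'$ in which each high-group total demand is at most its value $1/t$ at $t\bp^*$. The main subtlety is that demand need not be unique, so one must exhibit a single witness $\bx(\mathbf{p}')$ that makes both inequalities hold simultaneously, and one must also maintain compatible intermediate demand choices while iterating GS across the multiple coordinate changes. Under the lemma's standing hypothesis of either unique demand at strictly positive prices or linear utilities, this selection is straightforward, and the standard iterative GS construction handles the intermediate steps; edge cases where $p^*_j=0$ are absorbed into the extreme groups by the conventions $0/x=0$, $x/0=+\infty$ from Definition~\ref{eqn:compressed-prices}.
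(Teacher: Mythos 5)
Your core plan matches the paper's: use homogeneity of degree one to see that the demand at $l\bp^*$ and $t\bp^*$ is $1/l$ and $1/t$ per good respectively, then transport these bounds to $\bp'$ by changing prices one coordinate at a time and invoking gross substitutes, noting that the low-group coordinates are held fixed along the $l\bp^* \to \bp'$ path and the high-group coordinates along the $t\bp^* \to \bp'$ path. You also correctly flag that a single witness $\bx(\bp')$ must serve both inequalities and that the unique-demand/linear-utility hypothesis resolves this; that is the same device the paper uses (the subsection heading ``Single-Demand WGS'' encodes it).

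The gap is your treatment of zero prices, which the paper spends most of its proof handling. If $p^*_j = 0$ then $p'_j = t\, p^*_j = 0$, and every vector along your GS path from $l\bp^*$ or $t\bp^*$ to $\bp'$ has a zero coordinate there. For a monotone homogeneous utility, demand at a price vector with a zero entry is generally ill-defined or unbounded, so neither the existence of your witness $\bx(\bp')$ nor the validity of the intermediate GS steps is automatic. Your remark that goods with $p^*_j = 0$ are ``absorbed into the extreme groups by the conventions'' only shows that the lemma's displayed inequalities make no claim about those coordinates (indeed, with $0/0=1$ and $l<1<t$ they fall into neither the $=l$ nor the $=t$ case); it does nothing to make the optimization and the GS argument well-posed. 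The paper's proof introduces the regularization $\widehat{\bp}$, replacing every zero price by a small $\epsilon>0$, proves (via a separate GS-and-homogeneity argument, recorded in its footnote) that the regularized demand on the $p^*_j=0$ goods is exactly zero, establishes the two group inequalities at $\widehat{\bp'}$, and then passes to the limit $\epsilon\to 0$ by a continuity-and-homogeneity argument to exhibit a bona fide optimal $\bx(\bp')$. Without that device, or an explicit hypothesis that $\bp^*>0$, your proof does not establish the existence claim in the lemma.
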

\begin{proof}
It is straightforward to check this for linear utility functions.

Now we consider the single-demand WGS utility functions.
Let $\widehat{\bp}$ denote the prices such that $\widehat{p}_j = p_j$ when $p_j > 0$ and $\widehat{p}_j = \epsilon$ when $p_j = 0$. Note that here $\epsilon$ is an arbitrarily small positive value.

By the homogeneity of the utility function,  there exists  an $\bx(l \mathbf{p}^*)$, such that $\sum_{i}x_{ij}(l \mathbf{p}^*) = \frac{1}{l}$ for all $j$ such that $p^*_j > 0$. Now, we consider $\sum_{i} x_{ij}(\widehat{l \mathbf{p}^*})$. For those $j$ such that $p^*_j > 0$, by the Gross Substitutes property, $\sum_{i} x_{ij}(\widehat{l \mathbf{p}^*})\geq \frac{1}{l}$.

Then, we let the price increase from $\widehat{l \mathbf{p}^*}$ to $\widehat{p'_j(l, \mathbf{q})}$. Also, by Gross Substitutes property, $\sum_{i} x_{ij}(\widehat{p'_j(l, \mathbf{q})})\geq \frac{1}{l}$ for those $j$ such that $p^*_j > 0$ and $\frac{p'_j(l, \mathbf{q})}{ p^*_j} = l $. By the same reasoning, $\sum_{i} x_{ij}(\widehat{p'_j(l, \mathbf{q})}) \leq \frac{1}{t}$ for those $j$ such that $p^*_j > 0$ and $\frac{p'_j(l, \mathbf{q})}{p^*_j} = t$.

Furthermore, by the Gross Substitutes property and homogeneity of the utility function, $\sum_{i} x_{ij}(\widehat{p'_j(l, \mathbf{q})}) = 0$ for those $j$ such that $p^*_j = 0$. \footnote{We consider a procedure that changes prices from $p^*$ to $\widehat{p'_j(l, \mathbf{q})}$. First, we define prices $\bp^{*'}$ such that $\frac{\widehat{p'_j(l, \mathbf{q})}}{p^{*'}_j} = k$ and $p^{*'}_j > 0$ if $p^*_j = 0$, and $\frac{\widehat{p'_j(l, \mathbf{q})}}{p^{*'}_j} \leq k$ and $p^{*'}_j = p^*_j$ if $p^*_j > 0$. Here, $k$ is a positive constant and $k$ is not infinity.  We increase the prices from $\bp^*$ to $\bp^{*'}$, by the Gross Substitutes property, $\sum_i x_{ij} (\bp^{*'}) = 0$ for those $j$ such that $p^*_j = 0$. Then, by homogeneity of the utility function, also for those $j$, $\sum_i x_{ij}(k \bp^{*'}) = 0$ . Now, we reduce the prices from $k \bp^{*'}$ to $\widehat{p'_j(l, \mathbf{q})}$ ($k \bp^{*'}$ is no less than $\widehat{p'_j(l, \mathbf{q})}$ by the definition of $\bp^{*'}$). Since $\widehat{p'_j(l, \mathbf{q})} = k p^{*'}_j$ for those $j$ such that $p^*_j = 0$, by Gross Substitutes property,  for those $j$, $\sum_i x_{ij} (\widehat{\bp'(l, \mathbf{q})}) \leq \sum_i x_{ij}(k \bp^{*'})$. By previous argument that $\sum_i x_{ij}(k \bp^{*'}) = 0$ for those $j$, $\sum_i x_{ij} (\widehat{\bp'(l, \mathbf{q})}) = 0$ also holds for those $j$.}

So, there exists an $\bx(\widehat{\mathbf{p}'})$, where $\mathbf{p'} = p'_j(l, \mathbf{q})$, such that for $p^*_j > 0$, if $\frac{p'_j(l, \mathbf{q})}{p^*_j} = l  $, $\sum_i x_{ij}(\widehat{\mathbf{p}'}) \geq \frac{1}{l}$ 
and if $\frac{p'_j(l, \mathbf{q})}{p^*_j} = t $, $\sum_i x_{ij}(\widehat{\mathbf{p}'}) \leq \frac{1}{t}$, and for $p^*_j = 0$, $\sum_i x_{ij}(\widehat{\mathbf{p}'}) = 0$.

Since $l < 1$, $\frac{p'_j(l, \mathbf{q})}{p^*_j} \neq l$ when $p^*_j = 0$. Therefore, we have an $\bx(\widehat{\mathbf{p}'})$, where $\mathbf{p'} = p'_j(l, \mathbf{q})$, such that if $\frac{p'_j(l, \mathbf{q})}{p^*_j} = l  $, $\sum_i x_{ij}(\widehat{\mathbf{p}'}) \geq \frac{1}{l}$ 
and if $\frac{p'_j(l, \mathbf{q})}{p^*_j} = t $, $\sum_i x_{ij}(\widehat{\mathbf{p}'}) \leq \frac{1}{t}$.

By the Gross Substitutes property, the demand $\bxi(\widehat{\bp'})$ for a given $\epsilon$ is also an optimal demand for any $0 < \epsilon' < \epsilon$. This is because for any small positive $\epsilon$, the demand for those goods with price $\epsilon$ is $0$. For reducing prices on those price $\epsilon$ goods only reduces the demand for other goods, but as there can be no  reduction in spending on the latter goods, in fact the demands are unchanged.

Therefore, by the continuity and homogeneity of the utility function, there exists an optimal allocation $\bx(\mathbf{p'})$,  which equals $\bx(\widehat{\bp'})$, where $\mathbf{p'} = p'_j(l, \mathbf{q})$. For if not, suppose there were a higher utility allocation when $\epsilon = 0$, whose value is $u_0$, where $u_0 >  u_i(\bxi(\widehat{\bp'}))$. Then at any $\epsilon > 0$, we could achieve utility $(1 - \lambda(\epsilon)) u_0$, where $\lim_{\epsilon \rightarrow 0} \lambda(\epsilon) = 0$, and as $x_{ij}(\widehat{\bp'})$ is an optimal allocation, $u_i(\bxi(\widehat{\bp'})) \geq (1 - \lambda(\epsilon)) u_0$. But this holds for every $\epsilon > 0$, so letting  $\epsilon \rightarrow 0$ yields $u_i(\bxi(\widehat{\bp'})) \geq u_0$ and hence $\bx(\widehat{\bp'})$ is an optimal allocation when $\epsilon = 0$.
\end{proof}
\begin{lemma}
\label{lem:sw-of-comp-prices}
Suppose that $\sum_j \qj = \sum_i \ei$. Then
\[
\sum_i \ei \log \ui(\exi(\bq)) - \sum_i \ei \log \ui(\exi(\bp'(l,\bq)))
\ge \sum_i \left( \frac 1l - 1 \right) (l \pjs - \qj) \cdot \mathsf{1}_{\qj \le l \pjs}.
\]
\end{lemma}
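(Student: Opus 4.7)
The plan is to start from Lemma~\ref{lem:p-pprime-comp}, which already gives
\[
\sum_i e_i \log u_i(\bxi(\mathbf{q})) - \sum_i e_i \log u_i(\bxi(\mathbf{p}')) \ge \sum_{ij} (p'_j - q_j) x_{ij}(\mathbf{p'}),
\]
so the entire task reduces to bounding the right-hand side from below by $(1/l - 1)\sum_j (l p^*_j - q_j)\mathbf{1}_{q_j \le l p^*_j}$. I would split the index $j$ into three groups according to the three cases in Definition~\ref{eqn:compressed-prices}: (a) $q_j \le l p^*_j$, where $p'_j = l p^*_j$ and $p'_j - q_j \ge 0$; (b) $l p^*_j < q_j < t p^*_j$, where $p'_j = q_j$ and the contribution is $0$; (c) $q_j \ge t p^*_j$, where $p'_j = t p^*_j$ and $p'_j - q_j \le 0$.

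Next I would apply Lemma~\ref{lem::reserve::1} (valid because $l < 1$): in case (a) we have $\sum_i x_{ij}(\bp') \ge 1/l$, multiplied by a nonnegative quantity, so the contribution is at least $(l p^*_j - q_j)/l$; in case (c) we have $\sum_i x_{ij}(\bp') \le 1/t$, multiplied by a nonpositive quantity, so the contribution is at least $(t p^*_j - q_j)/t$. Writing $A = \sum_{j:\, q_j \le l p^*_j}(l p^*_j - q_j) \ge 0$ and $B = \sum_{j:\, q_j \ge t p^*_j}(q_j - t p^*_j) \ge 0$, the lower bound becomes $A/l - B/t$.

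The key identity is then the balance condition: since $\sum_j q_j = \sum_j p'_j = \sum_i e_i$ (by the hypothesis and by the defining property of $t$ in Definition~\ref{eqn:compressed-prices}), we get $\sum_j (p'_j - q_j) = 0$, which translates exactly into $A = B$. Substituting yields $A/l - B/t = A\,(1/l - 1/t)$. Finally, since $t \ge 1$ (also from Definition~\ref{eqn:compressed-prices}), we have $1/t \le 1$, hence $1/l - 1/t \ge 1/l - 1$, so
\[
\sum_{ij} (p'_j - q_j) x_{ij}(\mathbf{p'}) \;\ge\; \left(\tfrac{1}{l}-1\right) A \;=\; \left(\tfrac{1}{l}-1\right)\sum_j (l p^*_j - q_j)\,\mathbf{1}_{q_j \le l p^*_j},
\]
which is the claimed bound (the outer $\sum_i$ in the statement should read $\sum_j$).

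There is no serious obstacle here; the only subtlety is making sure the inequalities go the right way in case (c), where both $(t p^*_j - q_j)$ and $1/t - \sum_i x_{ij}(\bp')$ are nonnegative/nonpositive in a coordinated fashion, so that replacing $\sum_i x_{ij}(\bp')$ by its upper bound $1/t$ enlarges rather than decreases the (negative) contribution. The use of the market-clearing-style identity $A=B$ to cancel the unknown parameter $t$ is what makes the bound clean, and dropping $1/t$ down to $1$ at the end is what produces the simple expression on the right-hand side.
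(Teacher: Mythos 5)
Your proof is correct and follows essentially the same route as the paper: invoke Lemma~\ref{lem:p-pprime-comp}, split the goods by the three regimes of the compression map, apply Lemma~\ref{lem::reserve::1} to bound $\sum_i x_{ij}(\bp')$ from below by $1/l$ (resp.\ above by $1/t$) on the two active regimes, use $\sum_j p'_j = \sum_j q_j$ to cancel the $t$-dependent mass against the $l$-dependent mass, and finally drop $1/t$ to $1$ using $t\ge 1$. Your $A=B$ bookkeeping is a slightly more explicit presentation of the paper's cancellation identity, and you also correctly flag the $\sum_i$/$\sum_j$ typo in the statement; no substantive differences.
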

\begin{proof}
By Lemma~\ref{lem::reserve::1}, There exists an $\bx(\mathbf{p}')$, where $\mathbf{p'} = p'_j(l, \mathbf{q})$, such that if $\frac{p'_j(l, \mathbf{q})}{p^*_j} = l  $, $\sum_i x_{ij}(\mathbf{p}') \geq \frac{1}{l}$ 
and if $\frac{p'_j(l, \mathbf{q})}{p^*_j} = t  $, $\sum_i x_{ij}(\mathbf{p}') \leq \frac{1}{t}$. Therefore, by lemma~\ref{lem:p-pprime-comp},
\begin{align*}
& \sum_i e_i \log u_i(\bxi(\mathbf{q})) - \sum_i e_i \log u_i(\bxi(\mathbf{p}'(l, \mathbf{q}))) \\
& \geq \sum_{ij} (p'_j(l, \mathbf{q}) - q_j) x_{ij}(\mathbf{p}'(l, \mathbf{q})) \\
& \geq \sum_j  (p'_j(l, \mathbf{q}) - q_j) \cdot \mathsf{1}_{\frac{q_j}{p^*_j} \leq l } \cdot \frac{1}{l} + \sum_j  ( p'_j(l, \mathbf{q}) - q_j) \cdot \mathsf{1}_{\frac{q_j}{p^*_j} \geq t } \cdot \frac{1}{t}.
\end{align*}

Since $\sum_j p'_j (l, \mathbf{q}) = \sum_i e_i = \sum_j \qj$, and as $\qj = p'_j(l, \bq)$ when $ l < \frac{q_j}{p^*_j} < t$,
\[\sum_j  (p'_j(l, \mathbf{q}) - q_j) \cdot \mathsf{1}_{\frac{q_j}{p^*_j} \leq l } = - \sum_j  ( p'_j(l, \mathbf{q}) - q_j) \cdot \mathsf{1}_{\frac{q_j}{p^*_j} \geq t }.\] 

Thus
\begin{align*}
\sum_i e_i \log u_i(\bxi(\mathbf{q})) - \sum_i e_i \log u_i(\bxi(\mathbf{p}'(l, \mathbf{q})))
& \ge \sum_j (\frac{1}{l} - \frac{1}{t}) (p'_j(l, \mathbf{q}) - q_j) \cdot \mathsf{1}_{\frac{q_j}{p^*_j} \leq l } \\
& \ge \sum_j (\frac{1}{l} - 1) (l p^*_j - q_j) \cdot \mathsf{1}_{q_j \leq l p^*_j}.  
\end{align*}
\end{proof}

\begin{lemma}
\label{lem:pstar-comp}
$
 \sum_i \ei \log \ui(\exi(\bp'(l,\bq)))  \ge \sum_i \ei \log \ui(\exi(\bp^*).
$
\end{lemma}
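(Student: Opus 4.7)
The plan is to derive Lemma~\ref{lem:pstar-comp} as an immediate corollary of the inequality~\eqref{multi-poa-ineq}, which was the key step in the proof of Lemma~\ref{fisher-market-multi-poa}. That inequality was obtained from the dual-minimality of the equilibrium prices $\bp^*$ in the Eisenberg--Gale program, and therefore holds for any price vector substituted in place of $\bp$. The plan is simply to plug in $\bp := \bp'(l, \bq)$.

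The only bookkeeping required is to check that the additive constants $\sum_j p_j$ on the two sides of~\eqref{multi-poa-ineq} agree so that they cancel. By Definition~\ref{eqn:compressed-prices}, the auxiliary parameter $t$ is chosen precisely so that $\sum_j p'_j(l, \bq) = \sum_i e_i$, and that same definition also records $\mathsf{1}\cdot\bp^* = \sum_i e_i$. Hence $\sum_j p'_j(l, \bq) = \sum_j p^*_j$.

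After this cancellation, instantiating~\eqref{multi-poa-ineq} at $\bp = \bp'(l,\bq)$ yields
\[
\max_{\bx :\, \forall i,\, \sum_j p'_j(l,\bq)\, x_{ij} = e_i}\ \sum_i e_i \log u_i(\bxi)
\;\ge\; \max_{\bx :\, \forall i,\, \sum_j p^*_j\, x_{ij} = e_i}\ \sum_i e_i \log u_i(\bxi).
\]
By the homogeneity-of-degree-1 first-order argument already carried out in the proof of Lemma~\ref{fisher-market-multi-poa} (which gives $\sum_j p_j x_{ij}(\bp) = e_i$ at any prices), the two budget-constrained maximizers are exactly the demands $\bxi(\bp'(l, \bq))$ and $\bxi(\bp^*)$ as defined in~\eqref{eqn:x-of-p-defn}. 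Consequently the displayed inequality is precisely the statement of Lemma~\ref{lem:pstar-comp}.

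There is essentially no obstacle to overcome, since the substantive work was already done in deriving~\eqref{multi-poa-ineq}; the entire argument is a one-line substitution plus a cancellation of equal constants. The only subtlety to watch for is purely notational: verifying that the constrained maxima appearing in~\eqref{multi-poa-ineq} coincide with the demand $\bxi(\cdot)$ introduced in~\eqref{eqn:x-of-p-defn}, which is exactly the first-order identification that was already established earlier in the paper.
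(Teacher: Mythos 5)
Your proof is correct and matches the paper's own argument: the paper likewise proves Lemma~\ref{lem:pstar-comp} by instantiating~\eqref{multi-poa-ineq} at $\bp = \bp'(l,\bq)$ and noting $\sum_j \pjs = \sum_i \ei = \sum_j \pj'$ so the additive price terms cancel.
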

\begin{proof}
The result follows from~\eqref{multi-poa-ineq}, as $\sum_j \pjs = \sum_i \ei = \sum_j \pj'$.
\end{proof}

The proof of the next Corollary uses Lemma~\ref{lem::price}, which depends on Lemma~\ref{lem::prices::les}, whose proof, for single-demand utility functions, changes slightly in the presence of reserve prices, as we comment on next.
\newline

\textsc{Revised Proof of Lemma~\ref{lem::prices::les}.}
There are two changes.
First, when $l \in S$, we can conclude that $p_l = p'_l > \max\{0, r_l\}$ (in the line preceding \eqref{eqn:p-pp-relation}).
Second, when $p_l > \max\{0, r_l\}$, $\sum_{h \neq i} x'_{hl} = \sum_{h \neq i} x_{hl} = 1$ (in the line preceding \eqref{eqn::fish::price}).
These are the only changes.

\begin{corollary}
\label{cor:second-bound}
\begin{align*}
& \sum_i \ei \log \ui(\vi, \bmi)
- \sum_i \ei \log u_i(\bxi(\bp^*))  \\
&~~~~ \ge 
 \sum_j (1 - l) \pjs - \left( \frac 1l - 1 \right) (\pj + \max_{i'} e_{i'}) \frac{ \sum_i \ei} {\sum_j (\pj + \max_{i'} e_{i'})} 
                                         \cdot \mathsf{1}_{(\pj + \max_{i'} e_{i'})  \frac{ \sum_i \ei} {\sum_j (\pj + \max_{i'} e_{i'})} \le l \pjs} \\
&~~~~~~~~  - \sum_i \ei \frac {\sum_j (\pj + \max_{i'} e_{i'})} {\sum_i \ei}.
\end{align*}
\end{corollary}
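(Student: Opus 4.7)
The plan is a direct chain of the four lemmas just proved---Lemmas~\ref{lem::price}, \ref{lem:log-relationship}, \ref{lem:sw-of-comp-prices}, and \ref{lem:pstar-comp}---followed by a short algebraic rearrangement. No new estimate is needed; the work is entirely in stitching the pieces together and expanding one product.

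The first step converts the utility $\ui(\vi,\bmi)$, evaluated at the equilibrium prices when bidder $i$ bids truthfully, into a utility evaluated at the potentially higher prices $\bp(\bb)+\mathsf{1}\cdot\max_{i'}e_{i'}$. By Lemma~\ref{lem::price}, $\bp(\vi,\bmi) \preceq \bp(\bb) + \mathsf{1}\cdot\max_{i'}e_{i'}$, and since a buyer's utility maximum over her budget set is weakly decreasing in prices (coordinate-wise price increases shrink the feasible set), one obtains
\[
\sum_i \ei \log \ui(\vi,\bmi) \;\ge\; \sum_i \ei \log \ui\bigl(\exi\bigl(\bp(\bb) + \mathsf{1}\cdot\max_{i'}e_{i'}\bigr)\bigr).
\]

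Because $\sum_j(\pj+\max_{i'}e_{i'})$ generally exceeds $\sum_i \ei$, one cannot apply Lemma~\ref{lem:sw-of-comp-prices} directly (it requires $\mathsf{1}\cdot\bq = \sum_i \ei$); I would therefore invoke Lemma~\ref{lem:log-relationship} to rescale to $\bq := \bigl(\bp(\bb)+\mathsf{1}\cdot\max_{i'}e_{i'}\bigr)\cdot\frac{\sum_i \ei}{\sum_j(\pj+\max_{i'}e_{i'})}$, picking up the additive correction $-\sum_i \ei \log\frac{\sum_j(\pj+\max_{i'}e_{i'})}{\sum_i\ei}$. Since now $\mathsf{1}\cdot\bq = \sum_i\ei$, Lemma~\ref{lem:sw-of-comp-prices} gives
\[
\sum_i \ei \log \ui(\exi(\bq)) \;\ge\; \sum_i \ei \log \ui(\exi(\bp'(l,\bq))) + \sum_j \bigl(\tfrac{1}{l}-1\bigr)(l\pjs - \qj)\,\mathsf{1}_{\qj\le l\pjs},
\]
and Lemma~\ref{lem:pstar-comp} bounds the first right-hand term below by $\sum_i \ei \log \ui(\exi(\bp^*))$. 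Chaining the three inequalities, moving $\sum_i \ei \log \ui(\exi(\bp^*))$ to the left, and using the identity $\bigl(\tfrac{1}{l}-1\bigr)(l\pjs-\qj) = (1-l)\pjs - \bigl(\tfrac{1}{l}-1\bigr)\qj$ together with the definition of $\qj$ yields precisely the stated bound. The only step that requires any care is the opening monotonicity argument; the remainder is mechanical lemma-chaining, so no real obstacle is anticipated.
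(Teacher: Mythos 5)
Your proposal is correct and follows exactly the paper's own argument: chain Lemma~\ref{lem::price} (with price monotonicity of indirect utility) into Lemma~\ref{lem:log-relationship}, then apply Lemma~\ref{lem:sw-of-comp-prices} and Lemma~\ref{lem:pstar-comp}, and finally expand $\bigl(\tfrac{1}{l}-1\bigr)(l\pjs-\qj)$ and substitute the definition of $\qj$. One small caveat: the corollary as printed drops a $\log$ in its final term (it should read $-\sum_i e_i \log\frac{\sum_j(p_j+\max_{i'}e_{i'})}{\sum_i e_i}$), and what your chain (and the paper's own proof) actually produces is the $\log$ version; this is a typo in the stated bound rather than a defect in your derivation.
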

\begin{proof}
\begin{align*}
& \sum_i \ei \log \ui(\vi, \bmi) - \sum_i \ei \log u_i(\bxi(\bp^*))  \\
&~~~~\ge \sum_i \ei \log \ui(\exi(\bp + \mathsf{1} \cdot \max_{i'} e_{i'})) - \sum_i \ei \log u_i(\bxi(\bp^*))  \\
&~~~~\ge  \sum_i \ei \log \ui (\exi(\bq))
- \sum_i \ei \log \frac {\sum_j (\pj + \max_{i'} e_{i'})} {\sum_i \ei} \\
&~~~~~~~~~~~~~~- \sum_i \ei \log u_i(\bxi(\bp^*)) ~~~~\text{(by Lemma~\ref{lem:log-relationship})} \\
&~~~~\ge 
\sum_i \ei \log \ui(\exi(\bp'(l,\bq)))
+ \sum_i \left( \frac 1l - 1 \right) (l \pjs - \qj) \cdot \mathsf{1}_{\qj \le l \pjs}\\
&~~~~~~~~ - \sum_i \ei \log \frac {\sum_j (\pj + \max_{i'} e_{i'})} {\sum_i \ei} - \sum_i \ei \log u_i(\bxi(\bp^*)) ~~~~\text{(by Lemma~\ref{lem:sw-of-comp-prices})}\\
& ~~~~\ge  \sum_i \left( \frac 1l - 1 \right) (l \pjs - \qj) \cdot \mathsf{1}_{\qj \le l \pjs}  - \sum_i \ei \log \frac {\sum_j (\pj + \max_{i'} e_{i'})} {\sum_i \ei} ~~~~\text{(by Lemma~\ref{lem:pstar-comp})}.
\end{align*}
\end{proof}

Let $\del = \sum_j (\pj +\max_{i'} e_{i'}) - \sum_i e_i$.
Suppose that the good $j$ reserve price $\rj \le \frac 14 \pjs$ for all $j$.
Note that $\sum_i \ei + \sum_j \rj \cdot \mathsf{1}_{\pj = \rj} \ge \sum_j \pj \ge \sum_i \ei$.
Clearly, $\sum_j \rj \cdot \mathsf{1}_{\pj = \rj}  \ge \del - m \cdot \max_{i'} e_{i'}$.

\begin{lemma}
\label{lem:bound-extra-term}
\[
\sum_i \ei \log \frac {\sum_j (\pj + \max_{i'} e_{i'})} { \sum_i \ei} \le \del.
\]
\end{lemma}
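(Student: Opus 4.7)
The plan is to recognize the inequality as an immediate consequence of the elementary bound $\log(1+x) \le x$, valid for every $x > -1$. Introducing the shorthand $S := \sum_i e_i$, the definition $\del = \sum_j(\pj + \max_{i'} e_{i'}) - S$ rearranges to $\sum_j(\pj + \max_{i'} e_{i'}) = S + \del$, so the claim is equivalent to showing
\[
S \log\!\Bigl(1 + \tfrac{\del}{S}\Bigr) \le \del.
\]

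The strategy is then a single substitution: apply $\log(1+x) \le x$ with $x = \del/S$ to obtain $S \log(1+\del/S) \le S \cdot (\del/S) = \del$, which is exactly what is required.

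The only hypothesis to verify for the elementary inequality to apply is that $\del/S > -1$, i.e.\ that $S + \del > 0$. But $S + \del = \sum_j(\pj + \max_{i'} e_{i'}) \ge m \cdot \max_{i'} e_{i'} > 0$ as soon as there is at least one buyer with positive budget and one good, so there is no genuine obstacle here. In fact the lemma is a purely numerical fact and carries no real combinatorial content; the interesting work was done in Corollary~\ref{cor:second-bound}, where the bound on $\sum_i e_i \log\frac{\sum_j(\pj + \max_{i'} e_{i'})}{\sum_i e_i}$ was isolated as the remaining term to be controlled, and the present lemma simply converts that log into a linear upper bound $\del$ so that it can be combined with the $\sum_j r_j \cdot \mathsf{1}_{\pj = \rj} \ge \del - m \cdot \max_{i'} e_{i'}$ estimate to finish the proof of Lemma~\ref{lem:final-bound}.
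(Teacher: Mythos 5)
Your argument is correct and is exactly the paper's own proof: both rewrite $\sum_j(\pj + \max_{i'} e_{i'})/\sum_i e_i$ as $1 + \del/\sum_i e_i$ and then apply $\log(1+x) \le x$. Your extra check that $\del/\sum_i e_i > -1$ is a harmless and sensible addition, but the approach is identical.
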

\begin{proof}
\[
\sum_i e_i \log  \frac{\sum_j (p_j + \max_{i'} e_{i'})}{\sum_i e_i } = \sum_i e_i \log (1 + \frac{\delta}{\sum_i e_i}) \leq \sum_i e_i \frac{\delta}{\sum_i e_i}  = \delta.
\]
\end{proof}

\pfof{Lemma~\ref{lem:final-bound}}
We set $l = \frac 12$. Then by Corollary~\ref{cor:second-bound}
and Lemma~\ref{lem:bound-extra-term},
\begin{align*}
&\sum_i \ei \log \ui(\vi, \bmi)) - \sum_i \ei \log u_i(\bxi(\bp^*))  \\
& ~~~~ \geq \sum_j (2 - 1) (\frac{1}{2} p^*_j - (p_j + \max_{i'} e_{i'}) \cdot \frac{\sum_i e_i }{\sum_j (p_j + \max_{i'} e_{i'})} ) \\
&~~~~~~~~~~~~~~~~~~~\cdot\mathsf{1}_{(p_j + \max_{i'} e_{i'}) \cdot \frac{\sum_i e_i }{\sum_j (p_j + \max_{i'} e_{i'})} \geq \frac{1}{2} p^*_j}  - \delta \\
& ~~~~ \geq \sum_j (2 - 1) (\frac{1}{2} p^*_j - (r_j + \max_{i'} e_{i'}) \cdot \frac{\sum_i e_i }{\sum_j (p_j + \max_{i'} e_{i'})} ) \\
&~~~~~~~~~~~~~~~~~~~\cdot \mathsf{1}_{p_j = r_j \wedge (r_j + \max_{i'} e_{i'}) \cdot \frac{\sum_i e_i }{\sum_j (p_j + \max_{i'} e_{i'})} \leq \frac{1}{2} p^*_j} - \delta \\
&~~~~ \geq  \sum_j  (\frac{1}{2} p^*_j - (r_j + \max_{i'} e_{i'}) \cdot \frac{\sum_i e_i }{\sum_j (p_j + \max_{i'} e_{i'})} ) \cdot \mathsf{1}_{p_j = r_j \wedge (r_j + \max_{i'} e_{i'}) \leq \frac{1}{2} p^*_j} - \delta\\
&~~~~~~~~~~~~~~~~~~~~~~~~~~\mbox{(as $\frac{\sum_i e_i }{\sum_j (p_j + \max_{i'} e_{i'})} \leq 1$)} \\
&~~~~ \geq  \sum_j (\frac{1}{2} p^*_j - (r_j + \max_{i'} e_{i'})) \cdot \mathsf{1}_{p_j = r_j \wedge (r_j + \max_{i'} e_{i'}) \leq \frac{1}{2} p^*_j} - \delta~~~~~~\mbox{(as $\frac{\sum_i e_i }{\sum_j (p_j + \max_{i'} e_{i'})} \leq 1$)} \\
&~~~~ \geq  \sum_j (\frac{1}{2} p^*_j - (r_j + \max_{i'} e_{i'})) \cdot \mathsf{1}_{p_j = r_j \wedge r_j  \leq \frac{1}{2} p^*_j} - \delta \\
& ~~~~ \geq \sum_j r_j \cdot \mathsf{1}_{p_j = r_j}  - m  \cdot \max_{i'} e_{i'}  - \delta ~~~~~~~(\text{as}~\frac{1}{2} p^*_j \geq 2 r_j) \\
&~~~~ \ge \del -m\cdot \max_{i'} e_{i'} -  m \cdot \max_{i'} e_{i'}- \del = -2 m \cdot \max_{i'} e_{i'}~~~~(\text{as}~\sum_j r_j \cdot \mathsf{1}_{p_j = r_j} \ge \del - m  \cdot \max_{i'} e_{i'}).
\end{align*}
\end{proof}

\end{document}